\PassOptionsToPackage{
  colorlinks,
  linkcolor=blue,
  filecolor=blue,
  citecolor=blue,
  urlcolor=blue,
  pdfstartview=FitH,
  pagebackref
}{hyperref}
\PassOptionsToPackage{nameinlink}{cleveref}
\documentclass[a4paper,UKenglish,cleveref, autoref, thm-restate, nolineno]{lipics-v2021}

\pdfoutput=1 
\hideLIPIcs  

\usepackage{amssymb}
\usepackage{graphicx}
\usepackage[ruled,linesnumbered,vlined]{algorithm2e}
\usepackage{microtype}
\usepackage{amsmath}
\usepackage{amsthm}
\usepackage{enumerate}
\usepackage{multirow}
\usepackage{nicefrac}
\usepackage[colorinlistoftodos,prependcaption,textsize=small]{todonotes}
\usepackage{multicol}
\usepackage{microtype}
\usepackage{url}
\usepackage{multirow}
\usepackage{algorithm2e}
\usepackage{algorithm2e}
\usepackage{makecell}

\definecolor{forestgreen}{rgb}{0.13, 0.55, 0.13}

\SetCommentSty{mycommfont}

\newtheorem{constraint}{Constraint}

\crefname{constraint}{constraint}{constraints}
\Crefname{constraint}{Constraint}{Constraints}

\graphicspath{{./figures/}}

\def\A{\mathcal{A}}

\def\D{\mathcal{D}}
\def\F{\mathcal{F}}

\def\S{\mathcal{S}}

\def\O{\mathcal{O}}

\def\R{\mathcal{R}}
\def\W{\mathcal{W}}
\def\reals{{\mathbb R}}

\def\eps{{\varepsilon}}

\def\OPT{\text{OPT}}

\def\dfn#1{\emph{\textcolor{forestgreen}{\textbf{#1}}}}

\date{}

\newcommand{\old}[1]{{{}}}

\newcommand{\temp}[1]{{{}}}
\newcommand{\fullversion}[1]{{{}}}

\newcommand{\oldparagraph}[1]{\vspace{5pt}\noindent\textbf{#1}}

\title{Unlabeled Multi-Robot Motion Planning with Improved Separation Trade-offs}

\titlerunning{Unlabeled MRMP with Improved Separation Trade-offs}

\author{Tsuri Farhana} {The Stein Faculty of Computer and Information Science, Ben-Gurion University of the Negev, Beer Sheva, Israel} {tsurif@post.bgu.ac.il} {https://orcid.org/0009-0006-7246-6805}{}

\author{Omrit Filtser} {Department of Mathematics and Computer Science, The Open University of Israel, Ra'anana, Israel} {omrit.filtser@gmail.com} {https://orcid.org/0000-0002-3978-1428}{This research was supported by the ISRAEL SCIENCE FOUNDATION (grant No. 2135/24)}

\author{Shalev Goldshtein} {Department of Mathematics and Computer Science, The Open University of Israel, Ra'anana, Israel} {shalev061@gmail.com} {https://orcid.org/0009-0001-7975-7721}{This research was supported by the ISRAEL SCIENCE FOUNDATION (grant No. 2135/24)}

\authorrunning{T. Farhana, O. Filtser and S. Goldshtein} 

\Copyright{Tsuri Farhana, Omrit Filtser and Shalev Goldshtein}

\ccsdesc[500]{Theory of computation~Computational geometry}

\keywords{multi-robot motion planning}

\category{} 

\relatedversion{A shorter version of this paper appeared in the Proceedings of the 42nd International
Symposium on Computational Geometry (SoCG 2026).} 

\acknowledgements{We thank Dan Halperin for introducing us to the problem and for helpful discussions. We are also grateful to the anonymous reviewers for their insightful comments, which helped improve the structure and presentation of the paper.}

\nolinenumbers

\begin{document}

\maketitle

\begin{abstract}
We study unlabeled multi-robot motion planning for unit-disk robots in a polygonal environment. Although the problem is hard in general, polynomial-time solutions exist under appropriate separation assumptions on start and target positions. Banyassady et al. (SoCG'22) guarantee feasibility in simple polygons under start--start and target--target distances of at least $4$, and start--target distances of at least $3$, but without optimality guarantees. 
Solovey et al. (RSS'15) provide a near-optimal solution in general polygonal domains, under stricter conditions: start/target positions must have pairwise distance at least $4$, and at least $\sqrt{5}\approx2.236$ from obstacles. 
This raises the question of whether polynomial-time algorithms can be obtained in even more densely packed environments.

In this paper we present a generalized algorithm that achieve different trade-offs on the robots-separation ($\rho$) and obstacles-separation ($\omega$), 
all significantly improving upon the state of the art. Specifically, we obtain polynomial-time constant-approximation algorithms to minimize the total path length when (i) $\rho=2\tfrac{2}{3}$ and $\omega=1\tfrac{2}{3}$, or (ii) $\rho\approx3.291$ and $\omega\approx1.354$. 
These solutions are weakly-monotone; we also provide a monotone solution requiring $\omega=\approx1.614$ and $\rho=4$. We prove that monotone plans may not exist when $\omega<1.614$, and weakly-monotone plans may not exist when $\omega<1.354$.
We then present trade-offs between the separation bounds and the approximation factor, specifically achieving an (almost) optimal bound of $\rho=2$ at the cost of a linear approximation factor and requiring $\omega=2$. This applies also for the labeled variant of MRMP, in which case we show a tight bound on $\omega$.
 
Finally, we show that without any robots-separation assumption, obstacles-separation of at least $1.5$ may be necessary for a solution to exist.

\end{abstract}

\newpage
\tableofcontents

\newpage

\section{Introduction}
In the classic multi-robot motion-planning (MRMP) problem, a set of robots operate in a common workspace, and need to reach a set of target positions. More precisely, given a set of $m$ starting positions and a set of $m$ target positions in a workspace $\W$, the goal is to compute a motion strategy for the robots to reach the targets while avoiding collisions with the boundary of the workspace and with other robots. In this paper we consider unit-disk robots that are moving in a polygonal workspace in the plane. If each robot is assigned to a specific target, the problem is referred to as \emph{labeled} MRMP, and otherwise it is \emph{unlabeled} MRMP. Unless otherwise specified, the variant of MRMP that we consider in this paper is unlabeled.

The MRMP problem and its variants have been widely investigated, both from theoretical and practical perspectives; for some of the more recent results, see, for example,~\cite{ABHSS25,AGHT23,AHSS24,AHS25,EHRSS23,FKKNRS24}.
Very recently, both the labeled and unlabeled variants of MRMP for unit disks in polygons with holes were shown to be PSPACE-hard~\cite{ABBKLS25}. Many other variants of the problem are known to be hard as well, see e.g.~\cite{BHKLS21,HSS84,SH15,SY84}.

Interestingly, the separation between the robots plays a key role in the difficulty of the problem. Adler, de Berg, Halperin, and Solovey~\cite{ABHS15} show that in a simple polygonal workspace with $n$ vertices and $m$ unit-disk robots, when the \dfn{robots-separation}, denoted $\rho$, is $4$ (i.e., all the start and target positions are at least distance $4$ apart), a solution for unlabeled MRMP always exists, and can be found in $\Tilde{O}(mn+m^2)$ time.
In~\cite{BBBBFHKOS22} this result is further improved by distinguishing between two types of robots-separation bounds: (i) \dfn{monochromatic robots-separation} --- the distance between two start positions or between two target positions, and (ii) \dfn{bichromatic robots-separation} --- the distance between a start and a target positions. They present an algorithm with the same running time of $O(n\log n+mn+m^2)$, but under weaker assumptions: monochromatic separation of $4$, and bichromatic separation of $3$. Moreover, they prove that these bounds are tight, i.e., a solution for unlabeled MRMP in a simple polygon may not always exist when the monochromatic separation is $4-\eps$, or when the bichromatic separation is $3-\eps$. In addition, they show that when the free space consists of a single connected component, the bichromatic separation is not required for a solution to always exist.
The algorithmic approach in both papers \cite{BBBBFHKOS22, ABHS15} is similar: restricting the robots to move one at a time (a \dfn{monotone} strategy) on a motion graph that has the start and target positions as vertices. A drawback of this approach is that it does not provide any reasonable approximation on the length of the solution, i.e., the total length of the paths in the motion plan.

Solovey, Yu, Zamir, and Halperin~\cite{SYZH15} present a different approach for unlabeled MRMP, with optimality guarantee even in polygons with holes, but at the cost of adding an additional assumption on the distance between the obstacles to any start and target position. Specifically, they consider the case when the robots-separation is $\rho=4$, and the \dfn{obstacles-separation}, denoted $\omega$, is $\sqrt{5}\approx2.236$ (i.e., the distance between any start/target position and any obstacle is at least $\sqrt{5}$).
Under these assumptions, they present an algorithm that runs in $\Tilde{O}(m^4+n^2m^2)$ time and returns a solution of total length at most $\OPT+4m$. Here and throughout the paper, $\OPT$ denotes the minimum possible sum of lengths of the paths in a collision-free motion plan.

Notice that when assuming a large robots-separation ($\rho\ge4$), the overall ``density'' of the robots inside the workspace becomes small, which may require the workspace to be much larger than the number of robots operating in it.
Also, assuming a large obstacles-separation requires adding an additional buffering ``layer'' around the obstacles. 
This motivates the question of whether we can find weaker assumptions on either the robots-separation or the obstacles-separation, that are sufficient for having a polynomial time algorithm for unlabeled MRMP.
Surprisingly, in this paper we show that both the robots-separation and obstacles-separation assumptions can be significantly relaxed, while still enabling a polynomial-time algorithm that achieves a constant-factor approximation of the solution length.

In previous works, namely \cite{ABHS15,BBBBFHKOS22,SYZH15}, the algorithmic approach uses a \dfn{monotone} strategy, where robots move to their target position one by one according to some order, while other robots must remain at their start/target positions. 
The main result in our paper uses a weakly-monotone strategy.
In a \dfn{weakly-monotone} motion plan, each start/target position $p$ is associated with a revolving area - a disk of radius $r>1$ that contains the unit disk centered at $p$ and is empty of obstacles and other unit disks centered at other start/target positions. The robots move to their target position one by one according to some order, while other robots must remain inside the revolving area associated with their current position. 
Such a weakly-monotone strategy was used, e.g., in~\cite{AGHT23,SH18} for a variant of labeled MRMP with unit-disk robots.

\oldparagraph{Our contribution.}
In this paper we present an algorithm for unlabeled MRMP, that provides tighter trade-offs between robot-separation and obstacles-separation assumptions.

In particular, for a parameter $0\leq\eps\leq1$, a polygonal environment of complexity $n$, and $m$ unit-disc robots, our algorithm computes in $\Tilde{O}(m^4+n^2m^2)$ time a collision-free motion plan of total length $O(\OPT)$, assuming that $\rho=\max\{4-2\eps,2+\eps\}$ and $\omega=\max\{1+\eps,\sqrt{(3-\sqrt{3}-\eps)^2+1}\}$ (see \Cref{thm:opt_sep_bounds}). In \Cref{tab:separation} we highlight three interesting values of $\eps$ that minimizes either $\rho,\omega$ or the guaranteed length of the solution. 
This algorithm can be viewed as a generalization of the algorithm in~\cite{SYZH15}, and indeed, as a consequence of our analysis we positively resolve a conjecture made in~\cite{SYZH15}, by showing that $\omega=1.614$ is sufficient when $\rho=4$.

Moreover, we show that a monotone motion plan (such as the one produced by the algorithm in~\cite{SYZH15}) may not always exist when $\omega<1.614$, and therefore to achieve smaller separation bounds one must use a different strategy.
Indeed, as mentioned above, our motion plan is weakly-monotone. Interestingly, we also show that a weakly-monotone motion plan may not always exist when $\omega<1.354$.

Note that the bounds on $\rho$ mentioned above do not cover the case of $\rho=2$, which is the optimal (monochromatic) robots-separation.
We thus further generalize one part of our algorithm, and show how to trade robots-separation for optimality (and obstacle separation). Specifically, we show that any value of $\rho$ between $2\frac23$ and $2$ can be achieved, where $\rho=2$ requires having $\omega=2$, and the approximation factor becomes linear (see also \Cref{tab:separation2}).

Next, we show that our algorithm implies that a solution to Labeld MRMP always exists when $\rho=2$, $\omega=2$, and every start-target pair are in the same connected component of the free space. We also provide an $O(m)$ approximation for the length of such motion plan. 

Finally, we present lower bounds on the amount of obstacles-separation that is required in order for a solution to always exist. In the unlabeled case, we show that when no assumptions are made on $\rho$ (i.e., $\rho=2$), obstacles-separation of at least $1.5$ is sometimes needed. In the labeled case, we observe that obstacles-separation of at least $2$ is sometimes necessary, and therefor showing that $\omega=2$ is tight in the labeled case.

\begin{table}[h!]
\centering
\begin{tabular}{|l|l|l|l|l|l|}
\hline
$\boldsymbol{\omega}$ &
$\boldsymbol{\rho}$ &
\textbf{Solution length} &
\textbf{Running time} &
\textbf{Notes} &
\textbf{Reference} \\ \hline\hline

$1$ &
\multirow{3}{*}{$4$} &
no guarantee &
$\widetilde{O}(mn+m^2)$ &
Simple polygons &
\cite{ABHS15,BBBBFHKOS22} \\
\cline{1-1}\cline{3-6}

$\sqrt{5}$ &
&
\multirow{2}{*}{$OPT+4m$} &
\multirow{6}{*}{$\widetilde{O}(m^4+n^2m^2)$} &
&
\cite{SYZH15} \\
\cline{1-1}\cline{5-6}

$\approx 1.614$ &
&
&
&
&
\multirow{3}{*}{\Cref{thm:opt_sep_bounds}} \\
\cline{1-3}\cline{5-5}

$\approx 1.354$ &
$\approx 3.291$ &
\multirow{2}{*}{$O(OPT)$} &
&
&
\\
\cline{1-2}\cline{5-5}

$1\frac{2}{3}$ &
$2\frac{2}{3}$ &
&
&
&
\\
\cline{1-3}\cline{5-6}

$1+\eps$ &
$4-2\eps$ &
$O\left(\left(\frac{1}{1-\eps}\right)^2 \cdot OPT\right)$ &
&
$\frac{2}{3}<\eps<1$ &
\Cref{thm:opt_sep_bounds_improved} \\
\cline{1-3}\cline{5-6}

$2$ &
$2$ &
$O(m\cdot OPT)$ &
&
&
\Cref{cor:opt_sep_bounds_eps1} \\ \hline
\end{tabular}

\caption{Summary of algorithms and their separation assumptions.\\
In \cite{BBBBFHKOS22}, the start-target (bichromatic) separation is $3$.}
\label{tab:separation}
\end{table}

\subsection*{Technical ideas and overview of the paper}
Our approach in the first algorithm generalizes the strategy of Solovey et al.~\cite{SYZH15} for unlabeled MRMP, which is briefly described as follows. Given a set $S$ of start positions, a set $T$ of target positions, and a polygonal workspace $\W$, first compute an optimal set $\Gamma$ of paths assigning each starting point to a target point (a perfect matching). 
This set of paths minimizes the sum of geodesic distances, when ignoring the other robots in the workspace.
Then, one can show that when assuming $\rho=4$ and $\omega=\sqrt5$, there is always at least one target which is a so-called standalone goal, that is, a target $t$ that does not block any other path in $\Gamma$. Such a target is useful because we can place a robot there and then treat it as an obstacle.
However, there might be some other robot that blocks the path that leads to $t$, in which case we can make a switch: we push the last robot that blocks this path to $t$.
This switching process is possible because of the assumption $\omega=\sqrt5$. Then, run the algorithm recursively, but after removing the start and target positions that where satisfied in the previous recursive step, and updating the workspace to treat the robot at $t$ as an obstacle.
 
Our weakly-monotone motion plan is obtained by inserting a very important modification to the algorithm of~\cite{SYZH15}: instead of choosing a standalone target in each iteration, we relax the requirement and require only an ``almost standalone'' target --- a target that may interrupt other paths, but only by a small amount. 
In our strategy, we distinguish between ``blocking'' and ``interrupting'' positions. Instead of always switching paths when another robot is standing in the way, we allow robots that interrupt a path by only a small amount to move away from the path. A robot that interrupts a given path may move slightly in its close neighborhood (a disk of radius equal to the amount of interruption plus one) in order to clear the way. In contrast, a robot that blocks a path may not have enough wiggle room, and we will switch the paths.

As suggested above, this combined strategy can be viewed as a generalization of~\cite{SYZH15}:
we introduce another parameter, called the ``overlap parameter'', which is used to determine whether a position is blocking a path or only interrupting it. This parameter allows us to define a set of constraints on $\rho$ and $\omega$, that allows (i) the existence of an interrupting target, (ii) the option to switch paths in the case when a robot is blocking a path, and (iii) the existence of a sufficiently large free neighborhood for an interrupting robot to clear the way.
It turns out that there is an interesting trade-off between these constraints, leading to the main result of our paper (\Cref{thm:opt_sep_bounds}).

Note that the concept of interrupting positions raises some issues that were not present in the algorithm of~\cite{SYZH15}. For example, in each iteration of the recursive algorithm we need to take into account that the robot placed on a target may still interrupt other paths. In~\cite{SYZH15} this is solved simply by removing a unit disk around the interrupting target from the workspace. However, in our case, such a robot may still need to move in order to clear the way for other robots. Moreover, if we simply treat it as an obstacle, the assignment of paths in the next iteration may be too large. We solve this issue by observing that it is actually enough to remove a smaller disk around a settled target when computing the paths, while still allowing it to move within its small neighborhood when planning the motion in the following iterations. Another issue is that there may be two or more robots that interrupt a path and may need to move simultaneously. Here we describe a simple motion with respect to the ray connecting the robot moving on the path with the interrupting robots.

In \Cref{sec:constraints}, we systematically define and analyze the constraints that are required for each part of this strategy, each constraint depends on the overlap parameter. In \Cref{sec:algorithm} we present the complete generalized algorithm and prove its correctness. Then, we show how to choose an overlap parameter and adjust constraints to get the minimum possible obstacles-separation and the minimum possible robots-separation. 

In the first version of our algorithm, to clear a path, we only move robots that intersect that path. Such a strategy cannot work when the robots-separation is exactly $2$, because a robot that intersect a path may be block by other robots that do not intersect that path, and thus will not be able to move and clear the path. Therefore, to achieve a separation bound of $2$ between the robots, we present in \Cref{sec:improved} a modification of the path clearance strategy, in which we also move robots that do not intersect the path. For any $\frac23<\eps\le1$, this strategy allows us to assume a robots-separation of $\rho=4-2\eps$, with obstacles-separation of $\omega=1+\eps$. Since more robots are moving, the approximation factor increases, and for $\eps=1$ we obtain a solution of size $O(m\cdot\OPT)$. We also show how to adapt this strategy for labeled MRMP, and obtain separation bounds of $\rho=2$ and $\omega=2$ in this case (see \Cref{sec:labeled}). 

Note that the monochromatic separation must be at least $2$, so the robots-separation is almost optimal (the bichromatic separation can be $0$). In \Cref{sec:weakly_monotone_limitations} we discuss the limitation of monotone and weakly monotone motion plans, and in \Cref{sec:lower_bounds} and we preset lower bounds on the obstacles-separation. We leave open the question of what is the upper bound on $\omega$ when robots-separation is optimal.

\begin{remark}[Previous versions of this paper]
    In previous versions of this paper (published in SoCG'26 and on ArXiv), we included the Exodus algorithm which achieves bounds of $\rho=2$ and $\omega=3$ in simple polygons, with approximation ratio of $\OPT+4m^2$ and running time of $O\left(m^3+mn^2\right)$. We decided to omit this algorithm from this version of the paper because it is outperformed by the new modification of our first algorithm, which is introduced in this version for the first time.
\end{remark}

\subsection*{More related work}
For labeled MRMP, Agarwal, Geft, Halperin, and Taylor~\cite{AGHT23} (following Solomon and Halperin~\cite{SH18}) consider unit disk robots with revolving areas: each start and each target position is contained in a disk of radius $2$ lying in the workspace, not necessarily concentric with the start or target position, which is free from other start or final positions. Interestingly, although under this assumption we may have instances with $\mu=2$, $\beta=0$, and $\omega=1$, the overall density of the robots inside the workspace is still small, which allow for a solution to always exists.
As mentioned above, Agarwal et al.~\cite{AGHT23} present a motion strategy which is weakly-monotone: robots move to their target according to some ordering, one at a time, while other robots may only move in their revolving areas. The total length of the solution returned by their algorithm is a constant approximation of the optimal strategy.

Other related works include MRMP for a constant number of robots, where polynomial time exact or approximated solutions are known (see, e.g., \cite{ABHSS25,AHS25,AHSS24,EHRSS23}), and a line of work on motion planning in discrete domains, where usually the goal is to optimize the makespan (see, e.g., \cite{DFKSM18,FKKNRS24,Yu18,YL16}).

\section{Preliminaries}
We consider a set of $m$ unit-disk robots, moving in a polygonal \dfn{workspace} $\W\subseteq\reals^2$ (which may be simple or cluttered with polygonal obstacles), whose overall number of edges is $n$. Below, we mostly follow the notations in \cite{BBBBFHKOS22} and \cite{SYZH15}.

The \dfn{obstacle space} $\O$ is the complement of the workspace $\W$, that is, $\O = \reals^2\setminus \W$.
We refer to the points in $\W$ as positions, and say that a robot is at position $x\in \W$ when its center is positioned at the point $x$. For a point $x\in \reals^2$ and a radius $r \in\reals_+$, we define $D_r(x)$ to be the open disk of radius $r$ centered at $x$. The unit-disk robots are defined to be open sets, and therefore two robots collide if and only if the distance between their positions is strictly less than $2$. 
In addition, a robot collides with the obstacle space $\O$ if and only if its center is at distance strictly less than $1$ from $\O$. The set of all positions in $\W$ where a unit-disk robot does not collide with the obstacle space is called the \dfn{free space}, denoted $\F = \{x\in \reals^2\mid D_1(x)\cap \O = \emptyset\}$. 
Note that the free space is a closed set, and denote by $\partial \F$ the boundary of $\F$.

In the multi-robot motion-planning problem, we are given a workspace $\W$ with $n$ edges and two sets $S=\{s_1,\dots,s_m\}$ and $T=\{t_1,\dots,t_m\}$ of points in the free space $\F$. The goal is to plan a collision-free motion for $m$ robots, each positioned on a different starting point in $S$, such that by the end of the motion every target position in $T$ is occupied by some robot.
When the robots are distinguishable (i.e. labeled), the robot that starts at position $s_i$ is required to end up at the target $t_i$. Otherwise, when the robots are indistinguishable (i.e., unlabeled), it does not matter which robot ends up at which target position, as long as all targets are occupied at the end of the motion.

Formally, we wish to find a set $\Gamma$ of continuous paths  $\gamma_1,\dots,\gamma_m$, where $\gamma_i:[0, 1]\rightarrow \F$ for every $1 \le i \le m$, and such that $\bigcup_{i=1}^m\gamma_i(0) = S$ and $\bigcup_{i=1}^m\gamma_i(1) = T$. In addition, the paths need to be collision-free, i.e., at any point in time $t\in [0, 1]$, there are no $i,j$ such that $\gamma_i(t)$ and $\gamma_j(t)$ are at distance strictly less than $2$.
Denote the size of a solution $\Gamma$ by $|\Gamma|=\sum_{\gamma_i\in\Gamma}|\gamma_i|$, where $|\gamma|$ is the length of a path $\gamma$ in the $L_2$ norm.
In the optimization version of the problem, we are interested in finding a solution $\Gamma$ which minimizes $|\Gamma|$, i.e., minimizing the sum of lengths of the paths.

\oldparagraph{Separation assumptions.} 
We denote by $\omega$ the \dfn{obstacles-separation} assumption on an instance of MRMP, i.e., we assume that for every $p\in S\cup T$ and every point $x\in\O$, it holds that $\|x-p\|\ge \omega$ (equivalently, $D_\omega(p)\subset\W$). We denote by $\rho$ the \dfn{robots-separation} assumption, i.e., we assume that for every $p_1,p_2\in S\cup T$ it holds that $\|p_1-p_2\|\ge \rho$.

\subsection*{Overlapping, blocking and interrupting positions}
Let $p,q\in \F$ be two points in the free space. We define the \dfn{overlap} of $p$ and $q$ to be $\max\{0,2-\|p-q\|\}$ (see \Cref{fig:overlap}). If $\D_1(p)\cap\D_1(q)=\emptyset$, then their overlap is $0$. Otherwise, we say that $p,q$ are \dfn{$\eps$-overlapping} for $\eps=2-\|p-q\|$, and note that $\|p-q\|=2-\eps$.
\begin{figure}[h!]
    \centering
    \includegraphics[page=1,scale=0.7]{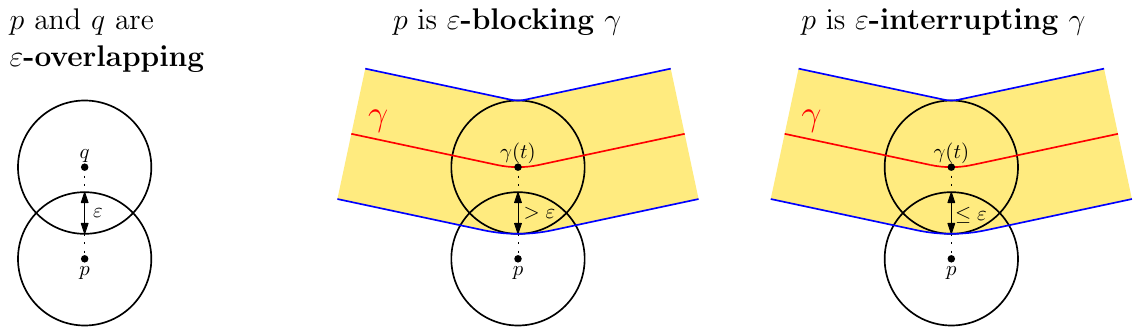}
    \caption{Overlapping, blocking, and interrupting positions.}\label{fig:overlap}
\end{figure}

We say that a position $p\in\F$ is \dfn{$\eps$-blocking} a path $\gamma$ if there exists $0\le t\le 1$ such that the overlap of $p$ and $\gamma(t)$ is \textbf{strictly larger than $\eps$}. Blocking robots (i.e. robots that are placed on blocking positions) may be impossible to clear from the path, and we will have to make a ``switch''.
We say that a point $p\in\F$ is \dfn{$\eps$-interrupting} a path $\gamma$ if for every $0\le t\le 1$, the overlap of $p$ and $\gamma(t)$ is \textbf{at most $\eps$} (that is, it is not $\eps$-blocking).
Interrupting robots (i.e. robots that are placed on interrupting positions) will have enough wiggle room to move away from $\gamma$ and ``clear'' the path.
Notice that if for every $0\le t\le 1$ the overlap of $p$ and $\gamma(t)$ is $0$, then for every $0\le t\le 1$ it holds that $\|\gamma(t)-p\|> 2$, and therefore in this case we say that $p$ is $0$-interrupting $\gamma$.
We say that a path $\gamma$ is \dfn{$\eps$-blocked} if there exists $s\in S$ that $\eps$-blocks it, and that $\gamma$ is \dfn{$\eps$-interrupted} if no $s\in S$ is $\eps$-blocking it.

\oldparagraph{The overlap parameter.} For the analysis in \Cref{sec:constraints}, we fix a parameter $0\le\eps\le 1$ which we call the \dfn{overlap parameter}. In these sections, for simplicity of the presentation, we sometimes omit $\eps$ from the above notions and just say ``blocking'' or ``interrupting''.
In our analysis, we introduce a set of constrains described as functions of $\eps$, which are required for the different parts of our algorithm to work.
In \Cref{sec:optimzing_bounds} we show how to choose $\eps$ in order to optimize different separation bounds.

\section{Constraints for Tighter Separation Bounds} \label{sec:constraints}

To give an intuition for the required constraints, we first provide a high-level overview of our strategy. The algorithm is recursive, and its input is the overlap parameter $\eps$, a set $S$ of starting positions, a set $T$ of target positions, and a workspace $\W$. In each recursive iteration, one robot is moving to a target position, then $S$,$T$ and $\W$ are updated accordingly. 

The first step is to compute an \dfn{optimal-assignment path set}, as it is defined in \cite{SYZH15}:
\begin{definition}
Let $\Gamma$ be a set of $m$ geodesic paths $\gamma_1,\dots,\gamma_m$, where $\gamma_i:[0, 1]\rightarrow \F$ for every $1 \le i \le m$, and such that $\bigcup_{i=1}^m\gamma_i(0) = S$, $\bigcup_{i=1}^m\gamma_i(1) = T$. We say that $\Gamma$ is an optimal-assignment path set for $S,T,\W$ if it minimizes $\sum_{i=1}^m|\gamma_i|$ over all such path sets.

\end{definition}
Note that an optimal assignment path set $\Gamma$ is not necessarily a feasible solution, because it ignores possible collisions between robots. This also means that the sum of paths in $\Gamma$ is at most the length of an optimal solution to our MRMP problem.
Let $\Gamma=\{\gamma_1,\dots,\gamma_m\}$ be an optimal-assignment path set for $S,T,\W$.

In \Cref{sec:geodesics} we present a useful lemma showing that a blocking robot can enter the geodesic path that it blocks. This lemma will be useful throughout the analysis, and it requires one constraint on the obstacles-separation, and another on the robots-separation.

The next step is to find a target $t\in T$ which does not block any path in $\Gamma$ (other than the path $\gamma$ from $s\in S$ to $t$). Then we can move a robot from $s$ to $t$ along $\gamma$, and it will not block other paths in the following iterations. Note that $t$ may still interrupt other paths. In \Cref{sec:almost_standalone} we show that such a target always exists, when assuming an additional constraint on the robots-separation. 

If there is a position $s'\in S$, which blocks the path $\gamma$, we need to make a switch, i.e., we construct a new path in $\F$ from $s'$ to $t$, which is not blocked by any other robot. In \Cref{sec:switch_path} we show how to construct such a switch path $\gamma'$, under all the constraints defined earlier. If there is no such blocking position $s'$, we simply set $\gamma'\gets\gamma$.

We now have a path $\gamma'$ from some $s''\in S$ (which may be either $s$ or $s'$) to the target $t$, which in not blocked by any robots, but may be interrupted. 
In \Cref{sec:clearance_path} we show how to move a robot from $s''$ to $t$ along $\gamma'$, while all interrupting robots (either on start or target positions) move slightly in order to clear the path. 
For this we introduce two additional constraints, one on the obstacles-separation, and one on the robots-separation.

Finally, we prepare the input for the next recursive iteration of the algorithm, by removing $s''$ from $S$, $t$ from $T$, and $D_{1-\eps}(t)$ from $\W$. Removing $D_{1-\eps}(t)$ from $\W$ ensures that the robots placed on $t$ does not block any path that will be computed in any of the next iterations.

In \Cref{sec:algorithm} we present the complete algorithm based on the building blocks presented in this section, prove its correctness and analyze its running time and approximation factor.

\subsection{Geodesics and blocking positions in the free space}\label{sec:geodesics}
We begin by introducing the constraints under which a blocking robot positioned on some $p\in S\cup T$ can enter a geodesic path $\gamma: [0, 1]\rightarrow \F$ between some start and target positions, via a straight line segment connecting it to the path.

\begin{figure}[h!]
    \centering
    \includegraphics[page=3,scale=0.9]{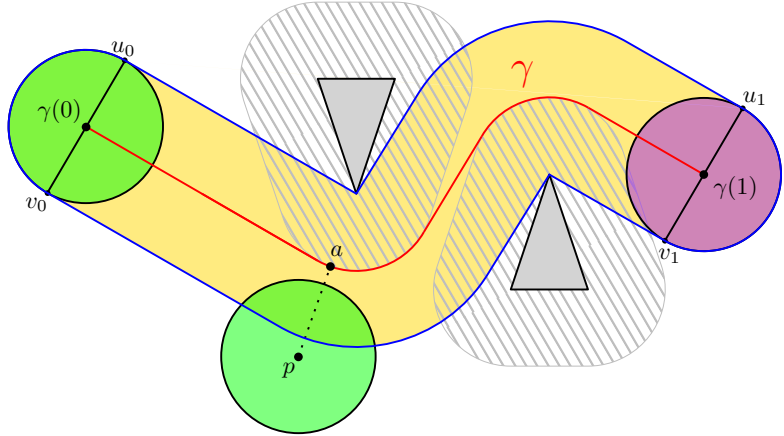}
    \caption{A geodesic path $\gamma$ in $\F$ from a start position $\gamma(0)\in S$ to a target position $\gamma(1)\in T$ is illustrated in red. The obstacles are in gray, and the trace of $\gamma$ is highlighted in yellow. A position $p$ is blocking/interrupting $\gamma$. The point $a$ is the closest point to $p$ on $\gamma$.}
    \label{fig:geodesic_path}
\end{figure}

A geodesic path between two positions $x,y\in \F$ is a shortest length path $\gamma : [0,1] \to \F$ where $\gamma(0)=x$ and $\gamma(1)=y$.
Denote by $\D_1(\gamma)=\bigcup_{0\le t\le 1}\D_1(\gamma(t))$ the \dfn{trace} of $\gamma$ (see \Cref{fig:geodesic_path}). For $0\le t_1\le t_2\le 1$, denote by $\gamma[t_1,t_2]$ the subpath of $\gamma$ between $\gamma(t_1)$ and $\gamma(t_2)$.

In \Cref{apx:geodesics} we provide some very useful geometric properties of geodesic paths in the free space, which result in the following constraints. Note that we have one constraint on the obstacles-separation, and another on the robots-separation. 
\begin{restatable}{constraint}{constraintBoundary}\label{con:boundary_seperation_to_clear_R}  
    $\omega\ge \sqrt{(3-\sqrt{3}-\eps)^2+1}$
\end{restatable}
\begin{restatable}{constraint}{constraintRobots}\label{con:robot_seperation_to_ensure_R}
    $\rho\ge \sqrt{\frac{1}{2}^{2} + \left(3-\frac{\sqrt{3}}{2}-\eps\right)^2}$
\end{restatable}

Let $\gamma: [0, 1]\rightarrow \F$ be a geodesic path. We say that a point $a$ on $\gamma$ is \dfn{locally closest} to $p$ if there exists $\delta>0$ such that $D_{\|p-a\|}(p)\cap \gamma[a-\delta,a+\delta]=a$. 

\begin{restatable}{lemma}{boundarySeparationRho}\label{lem:boundary_seperation_rho}
    Let $\gamma: [0, 1]\rightarrow \F$ be a geodesic path such that $\gamma(0),\gamma(1)\in S\cup T$, and let $p\in S\cup T$ be a position which is $\eps$-blocking $\gamma$. Let $a=\gamma(t)$ be 
    a point on $\gamma$ which is locally closest to $p$ and such that $\|p-a\|<2-\eps$.
    If \Cref{con:boundary_seperation_to_clear_R,con:robot_seperation_to_ensure_R} hold, then 
    $\overline{pa}\subset \F$.
\end{restatable}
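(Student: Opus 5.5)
Suppose, for contradiction, that $\overline{pa}\not\subset\F$. Since $p\in\F$ with clearance $\omega>1$ and $a\in\F$, some point $x\in\overline{pa}$ has distance less than $1$ from the obstacle space $\O$. Let $o\in\O$ be a closest obstacle point to the segment. Because $D_\omega(p)\subset\W$, we have $\|o-p\|\ge\omega$.

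The main geometric input is the local structure of a geodesic in $\F$ near $a$. A geodesic in the free space consists of straight segments and arcs of radius exactly $1$ around reflex obstacle vertices; its interior bends only by touching $\partial\F$. Since $a$ is locally closest to $p$, the segment $\overline{pa}$ meets $\gamma$ orthogonally at $a$, or at a one-sided angle if $a$ is a vertex of the path. If $a$ lies on a straight portion, the point $a$ is interior to a segment of $\gamma$, and the trace near $a$ is a rectangle. An obstacle point intruding within distance $1$ of $\overline{pa}$, while staying at distance at least $1$ from $\gamma$ and at least $\omega$ from $p$, would have to lie in a region that I will bound explicitly.

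The key case is when $a$ lies on an arc of radius $1$ around an obstacle vertex $o'$. Then $o'$ lies on the side of $\gamma$ opposite or same as $p$.
\begin{itemize}
\item If $o'$ is on the same side as $p$, then $p$, $a$, $o'$ are nearly collinear, since local closeness forces the normal at $a$ to point toward $o'$. This gives $\|p-o'\|=\|p-a\|-1<1-\eps$, which contradicts $\omega\geq 1$.
\item So the obstacles bending $\gamma$ lie on the far side, and the intruding obstacle $o$ must reach into the thin strip around $\overline{pa}$ from the side.
\end{itemize}

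I will set coordinates with $a$ at the origin and $p$ on the positive $y$-axis at height $h=\|p-a\|<2-\eps$. The point $o$ must avoid three regions: $D_1(\gamma)$, which contains the unit disk around $a$ and the band along the tangent direction; $D_\omega(p)$; and, by \Cref{con:robot_seperation_to_ensure_R}, the endpoints $\gamma(0),\gamma(1)\in S\cup T$, each at distance at least $\rho$ from $p$. The last condition prevents $\gamma$ from terminating right next to $a$, which would otherwise leave an uncovered gap beside $a$.

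The extremal configuration should come from the endpoint case. The path near $a$ either turns around a vertex or ends at an endpoint at distance at least $\rho$ from $p$, leaving a gap where an obstacle point can sit at distance $1$ from the trace. In the worst case, $o$ is at horizontal offset $1$ from the segment and at height about $3-\sqrt3-\eps$ relative to $p$. The $\sqrt3$ should arise from a unit disk touching the trace boundary and the neighboring unit disk at distance $2$, giving a height of $\sqrt{3}$. From this I expect $\|o-p\|<\sqrt{(3-\sqrt3-\eps)^2+1}$, which contradicts \Cref{con:boundary_seperation_to_clear_R}. Similarly, the endpoint case should give a distance of at most $\sqrt{1/4+(3-\sqrt3/2-\eps)^2}$, contradicting \Cref{con:robot_seperation_to_ensure_R}.

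The main obstacle is this case analysis: showing that the extremal positions of $o$ are exactly these two configurations, and bookkeeping how $\gamma$ can curve away around obstacle vertices near $a$. I would first handle the case where $\gamma$ is straight near $a$, and then argue that curvature only shrinks the admissible region for $o$.
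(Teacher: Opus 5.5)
\textbf{There is a genuine gap.} It sits exactly where you locate the ``main obstacle'', and your proposed way around it fails.

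\textbf{What is fine.} Your contradiction framework is equivalent to the paper's goal. An obstacle point within distance $<1$ of $\overline{pa}$ lies in $\R(a,p)\cup D_1(a)\cup D_1(p)$, so one must show this region is covered by $D_1(\gamma)\cup D_{\omega}(p)$.

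\textbf{What is missing.} Everything hinges on which part of $\R(a,p)$ is guaranteed to lie in the trace $D_1(\gamma)$, and you never establish this. You assume $D_1(\gamma)$ contains ``the band along the tangent direction'', which is false in general.

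\textbf{Why the straight-case reduction fails.} Your plan to do the straight case first and then argue that ``curvature only shrinks the admissible region for $o$'' runs the wrong way.
\begin{itemize}
\item Straight case: if $\gamma$ runs straight through $a$ for length at least $1$ on each side, the band covers $\R(a,p)$ up to height $1$ above $a$. The uncovered corners are within $\sqrt{1+(1-\eps)^2}<\sigma_\eps$ of $p$, so this case needs much less than \Cref{con:boundary_seperation_to_clear_R}.
\item Extremal case: $\gamma$ wraps tightly around an obstacle point at $v$, the point of $\partial D_1(a)$ opposite to $p$. Then $a$ is still locally closest to $p$, but the trace on $p$'s side is only $D_2(v)$. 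Its boundary meets the long sides of $\R(a,p)$ at height $\sqrt3-1$ above $a$ (an equilateral triangle of side $2$). Those points are at distance $\sqrt{(3-\sqrt3-\eps')^2+1}$ from $p$, where $\eps'=2-\|p-a\|>\eps$.
\end{itemize}
So curvature enlarges the region available to $o$. The configuration you reverse-engineered from the constant is this wrap-around configuration, not an ``endpoint case''.

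\textbf{The missing idea: the covering lemma} (\Cref{clm:path_in_funnels,clm:the_eye_is_in_the_trace}). Let $\overline{uv}$ be the diameter of $D_1(a)$ along $\overline{pa}$. The lens $L(a,\overline{uv})=D_2(u)\cap D_2(v)\cap\S$ lies in $D_1(\gamma)$, provided $\gamma(0),\gamma(1)$ lie outside the hourglass $F_1(a,\overline{uv})\cup F_2(a,\overline{uv})$ pinched at $a$ between $D_1(u)$ and $D_1(v)$. Proving it needs two ingredients:
\begin{itemize}
\item[(a)] The geodesic leaving $a$ along its tangent stays inside the hourglass until it crosses the end segments $\overline{z_1z_4}$ and $\overline{z_2z_3}$. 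This holds because each bend is caused by a boundary point at distance exactly $1$ from the path, and consecutive bends alternate sides. It must be handled even when $a$ lies on a straight piece with bends shortly after; your bullets only treat $a$ lying on an arc.
\item[(b)] Any path crossing the hourglass has trace containing $L$.
\end{itemize}

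\textbf{How the pieces combine.} \Cref{clm:free_rectangle} splits $\R(a,p)$ into two parts. The part in $D_2(v)$ lies inside $L$, hence in the trace. The part in $D_{\sigma_\eps}(p)$ is obstacle-free by \Cref{con:boundary_seperation_to_clear_R}. \Cref{con:robot_seperation_to_ensure_R} is used only to keep the endpoints out of the funnels: $\sqrt{1/4+(3-\sqrt3/2-\eps)^2}$ is the distance from $p$ to the far funnel corners, not a bound on an obstacle location.

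\textbf{A smaller slip in your first bullet.} If $o'$ is on $p$'s side, then either $p$ lies between $a$ and $o'$, giving $\|p-o'\|=1-\|p-a\|<1$, or $a$ is a local maximum of the distance to $p$. The formula $\|p-o'\|=\|p-a\|-1$ describes neither case, although the contradiction survives.
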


Intuitively, \Cref{lem:boundary_seperation_rho} states that a robot blocking $\gamma$ and positioned on $p$ can move to $a$ along the path $\overline{pa}$, as long as no other robots are standing in its way (see \Cref{fig:geodesic_path}).

For clarity of presentation and to avoid interrupting the flow of the exposition, the full proof of \Cref{lem:boundary_seperation_rho} is deferred to \Cref{apx:geodesics}.
Note that in the following subsections we introduce two more constraints on $\rho$ that together make \Cref{con:robot_seperation_to_ensure_R} redundant, however, \Cref{lem:boundary_seperation_rho} holds already for this weaker constraint and may be of independent interest. 

\subsection{Existence of an $\eps$-interrupting target}\label{sec:almost_standalone}
We now define formally a generalization of the standalone goal defined in \cite{SYZH15} (Definition 3). Our definition depends on the overlap parameter $\eps$, and it is equivalent to a standalone goal for $\eps=0$.
Let $\Gamma=\{\gamma_1,\dots,\gamma_m\}$ be an optimal-assignment path set for $S,T,\W$.
\begin{definition} 
A target position $t\in T$ is an \dfn{$\eps$-interrupting target}, if for every path $\gamma\in \Gamma$ such that $\gamma(1)\neq t$, it holds that $t$ is not $\eps$-blocking $\gamma$.
\end{definition}

To show that an $\eps$-interrupting target always exists, we need an additional constraint.
\begin{constraint}\label{con:robot_separation_4-2eps}
        $\rho\ge4-2\eps$
\end{constraint}
The following theorem is a generalization of Theorem 4 in \cite{SYZH15}.
The proof follows a similar logic, however, we need to take into account some issues that arise from the generalized definition and \Cref{con:boundary_seperation_to_clear_R,con:robot_seperation_to_ensure_R,con:robot_separation_4-2eps}.

\begin{restatable}{theorem}{almostStandalone}\label{theorm:almost-standalone}
    Let $\Gamma$ be an optimal-assignment path set for $S,T,\W$. If \Cref{con:boundary_seperation_to_clear_R,con:robot_seperation_to_ensure_R,con:robot_separation_4-2eps} hold, then there exists an $\eps$-interrupting target.
\end{restatable}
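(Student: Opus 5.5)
We argue by contradiction, following the structure of Theorem~4 in \cite{SYZH15}. Suppose no target is $\eps$-interrupting. Then every $t\in T$ $\eps$-blocks some path $\gamma\in\Gamma$ with $\gamma(1)\neq t$. Define a directed graph $G$ on $T$: put an edge $t\to t'$ if $t$ $\eps$-blocks the path $\gamma'\in\Gamma$ ending at $t'$. Every vertex has out-degree at least one, so $G$ contains a directed cycle $t_1\to t_2\to\dots\to t_k\to t_1$. We will exhibit a modified path set, with the same start set and target set, whose total length is strictly smaller than $|\Gamma|$. This contradicts the optimality of $\Gamma$.

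\textbf{Local exchange step.} Suppose $t_i$ $\eps$-blocks $\gamma_{i+1}$, the path from $s_{i+1}$ to $t_{i+1}$. Since $\gamma_{i+1}$ is geodesic, some point $a_{i+1}=\gamma_{i+1}(\tau)$ is locally closest to $t_i$ with $\|t_i-a_{i+1}\|<2-\eps$. (Take the global minimizer of the distance to $t_i$. Its distance is below $2-\eps$ by the blocking assumption, and it is not an endpoint, since endpoints lie in $S\cup T$ and are at distance at least $\rho\ge 4-2\eps>2-\eps$ from $t_i$.) By \Cref{lem:boundary_seperation_rho}, which uses \Cref{con:boundary_seperation_to_clear_R,con:robot_seperation_to_ensure_R}, the segment $\overline{t_ia_{i+1}}\subset\F$.

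Now reroute: the robot from $s_{i+1}$ follows $\gamma_{i+1}$ up to $a_{i+1}$ and then takes the segment to $t_i$, instead of continuing to $t_{i+1}$. Doing this for every $i$ around the cycle yields a valid assignment: each $s_{i+1}$ now ends at $t_i$, and the targets are permuted along the cycle. The new total length is
\[
\sum_i \Bigl(|\gamma_{i+1}[0,\tau_{i+1}]|+\|t_i-a_{i+1}\|\Bigr) < \sum_i \Bigl(|\gamma_{i+1}[0,\tau_{i+1}]| + (2-\eps)\Bigr).
\]
This is too weak on its own, since the saved amount $|\gamma_{i+1}[\tau_{i+1},1]|$ must be compared against $2-\eps$.

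\textbf{Using the separation constraint.} This is where \Cref{con:robot_separation_4-2eps} enters. The remaining tail $\gamma_{i+1}[\tau_{i+1},1]$ ends at $t_{i+1}$, so its length is at least $\|a_{i+1}-t_{i+1}\|$. By the triangle inequality,
\[
\|a_{i+1}-t_{i+1}\|\ge \|t_i-t_{i+1}\|-\|t_i-a_{i+1}\| > (4-2\eps)-(2-\eps)=2-\eps > \|t_i-a_{i+1}\|.
\]
Hence each replacement strictly shortens the path, $\|t_i-a_{i+1}\| < |\gamma_{i+1}[\tau_{i+1},1]|$, with strictness coming from the strict blocking inequality. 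Summing over the cycle gives a path set for $S,T,\W$ of total length strictly less than $|\Gamma|$, the desired contradiction. The cycle need not be simple in any further sense; a cycle of length $k\ge 2$ suffices. A self-loop cannot occur, because edges were defined only for paths $\gamma$ with $\gamma(1)\neq t$.

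\textbf{Main obstacle.} The delicate step is ensuring the rerouted paths lie in $\F$ and that the closest point $a_{i+1}$ has the required properties. That is exactly the content of \Cref{lem:boundary_seperation_rho}, so the key check is that its hypotheses are met. We also need the chosen point to be interior and locally closest, so that the lemma applies. Finally, the new paths need not be geodesic, but this is harmless: the optimal-assignment set minimizes over all paths, and geodesic shortening only decreases the total length further.
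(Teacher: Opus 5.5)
Your proposal is correct and follows essentially the same argument as the paper: a blocking digraph on $T$ that must contain a cycle, rerouting each $\gamma_{i+1}$ at its point closest to $t_i$ (justified by \Cref{lem:boundary_seperation_rho}), and using \Cref{con:robot_separation_4-2eps} with the triangle inequality to show each rerouted path is strictly shorter. Your remark about out-degree is correct, where the paper says in-degree. Your check that the closest point is an interior point of the path is a useful addition. Take the cycle to be simple, so that each path is rerouted exactly once.
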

\begin{proof}
    Assume by contradiction that there is no $\eps$-interrupting target, that is, for every $t_i\in T$ there exists $\gamma_k\in\Gamma$, $k\ne i$, such that $t_i$ is $\eps$-blocking $\gamma_k$.
    We show that this results in a circular blocking, which leads to a path set of shorter length, in contradiction to the optimality of $\Gamma$. 
    
    Consider a directed graph that has a vertex $v_i$ for each $t_i\in T$, and for every $1\le i,j\le m$, $i\neq j$, if $t_i$ is $\eps$-blocking $\gamma_j$ we add the edge $(v_i,v_j)$ to the graph. Because we assume that there is no $\eps$-interrupting target, every vertex has in-degree at least $1$, and thus the graph must contain a cycle. We denote this cycle by $(v_1,v_2,\dots,v_\ell)$, and assume w.l.o.g. that for every $1\le i\le \ell$, the path $\gamma_i$ is from $s_i$ to $t_i$, i.e., $\gamma_i(0)=s_i$ and $\gamma_i(1)=t_i$.

\begin{figure}[h!]
    \centering
    \includegraphics[page=1,scale=0.8]{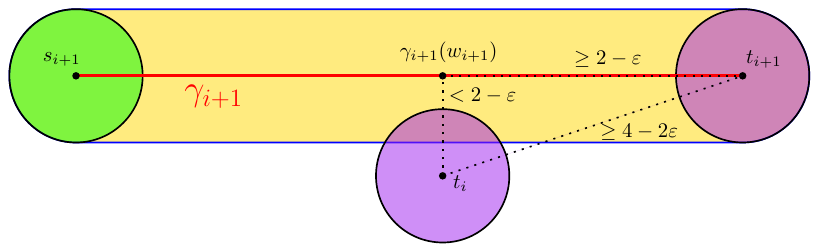}
    \caption{The path $\gamma_{i+1}'$ from $s_{i+1}$ to $t_i$ is shorter then $\gamma_{i+1}$.}
    \label{fig:tringle_inequlity}
\end{figure}

    Consider an edge $(v_i,v_{i+1})$ of the cycle, which corresponds to $\gamma_{i+1}$ being $\eps$-blocked by $t_i$. Let $\gamma_{i+1}(w_{i+1})$ for $w_{i+1}\in [0,1]$ be the point on $\gamma_{i+1}$ closest to $t_i$, then the overlap of $\gamma_{i+1}(w_{i+1})$ and $t_i$ is strictly larger than $\eps$, and thus $\|t_i-\gamma_{i+1}(w_{i+1})\|<2-\eps$. Let $\gamma'_{i+1}$ be the path from $s_{i+1}$ to $t_i$ that follows $\gamma_{i+1}$ from $s_{i+1}$ until reaching $\gamma_{i+1}(w_{i+1})$, and then continue on the straight line segment from $\gamma_{i+1}(w_{i+1})$ to $t_i$ (see \Cref{fig:tringle_inequlity}). 
    Because \Cref{con:boundary_seperation_to_clear_R,con:robot_seperation_to_ensure_R} hold, we can apply \Cref{lem:boundary_seperation_rho} on $\gamma_{i+1}$, $t_i$ and $\gamma_{i+1}(w_{i+1})$, and get that $\overline{\gamma_{i+1}\left(w_{i+1}\right)t_i}\subseteq\F$, and therefore $\gamma_{i+1}'\subseteq\F$.
    By \Cref{con:robot_separation_4-2eps}, $\|t_i-t_{i+1}\|\ge 4-2\eps$, and thus by the triangle inequality we have 
    $$\|\gamma_{i+1}(w_{i+1})-t_{i+1}\|\ge \|t_i-t_{i+1}\|-\|t_i-\gamma_{i+1}(w_{i+1})\|> 4-2\eps-(2-\eps)=2-\eps.$$
    We thus have $\|\gamma_{i+1}(w_{i+1})-t_{i+1}\|>2-\eps>\|\gamma_{i+1}(w_{i+1})-t_i\|$, and since the prefixes of $\gamma_{i+1}$ and $\gamma'_{i+1}$ are identical, we get $|\gamma'_{i+1}|<|\gamma_{i+1}|$.
    
    We conclude that for every path $\gamma_i$, $1\le i\le\ell$, that correspond to an edge of the cycle, we can construct a path $\gamma_i'\subset \F$ from $s_{i+1}$ to $t_i$ (or from $s_1$ to $t_\ell$ when $i=\ell$) such that $|\gamma'_i|<|\gamma_i|$. This results in a valid set path of shorter total length, in contradiction to the optimality of $\Gamma$.
\end{proof}
    
\subsection{Dealing with a blocked geodesic path}\label{sec:switch_path}
Let $\Gamma=\{\gamma_1,\dots,\gamma_m\}$ be an optimal-assignment path set for $S,T,\W$, and assume w.l.o.g. that $\gamma_i(0)=s_i$ for every $1\le i\le m$.
Consider a geodesic path $\gamma_i$ from $s_i$ to $t_j$, and assume that is it $\eps$-blocked. Our goal in this section is to describe the switching process: we find another start point $s_k\in S$ and a path $\gamma'_k\subset\F$ from $s_k$ to $t_j$ such that $\gamma'_k$ is not $\eps$-blocked. Moreover, we describe a path in $\F$ from $s_i$ to $t_\ell$, where $t_\ell=\gamma_k(1)$ (see \Cref{fig:blocking_robot_switch}). The sum of lengths of these new paths will not be much longer than the original.

\begin{figure}[h!]
    \centering
    \includegraphics[page=2,scale=0.9]{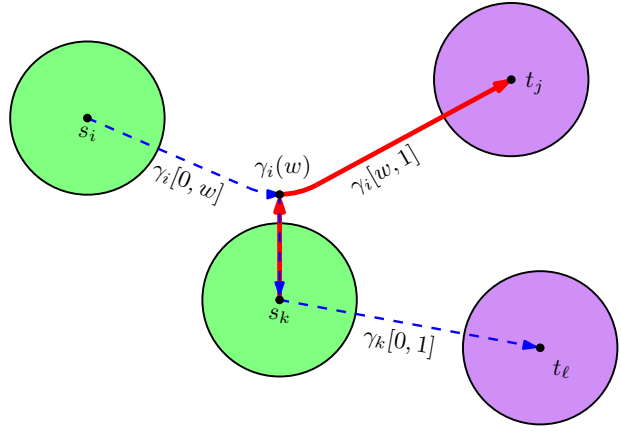}
    \caption{The start position $s_k$ is the last to $\eps$-block $\gamma_i$. The switch paths are $\gamma_i'$ (dashed blue) and $\gamma_k'$ (red).}
    \label{fig:blocking_robot_switch}
\end{figure}

For a position $s_k\in S$ that $\eps$-blocks $\gamma_i$, denote by $0\le w_{i,k}\le 1$ the largest value such that $\gamma_i(w_{i,k})$ is the point on $\gamma_i$ locally closest to $s_k$. We say that $s_k\in S$ is the \dfn{last to $\eps$-block $\gamma_i$} if for any $s_x\in S\setminus\{s_k\}$ that $\eps$-blocks $\gamma_i$, we have $w_{i,k}\ge w_{i,x}$ (i.e. the last locally closest point to $s_k$ on $\gamma_i$ is the closest to $\gamma_i(1)$ along $\gamma_i$).

Let $s_k$ be the last to $\eps$-block $\gamma_i$, and set $w=w_{i,k}$.
Consider the paths $\gamma'_k=\overline{s_k\gamma_i(w)}\circ \gamma_i[w,1]$ and $\gamma'_i=\gamma_i[0,w]\circ\overline{\gamma_i(w)s_k}\circ \gamma_k$ (where $\gamma_k$ is the geodesic path from $s_k$ to $t_\ell$, see \Cref{fig:blocking_robot_switch}). We call $\gamma'_i$ and $\gamma'_k$ the \dfn{switch paths} of $i$ and $k$, respectively.
Notice that $\|s_k- \gamma_i(w)\|<2-\eps$, and therefore $|\gamma'_k|+|\gamma'_i|<|\gamma_k|+|\gamma_i|+4-2\eps$.
The following theorem will help us in showing that the switch paths are not blocked.
\begin{restatable}{theorem}{switchPaths}\label{lem:switch_path_are_legal}
    Assume that \Cref{con:boundary_seperation_to_clear_R,con:robot_seperation_to_ensure_R,con:robot_separation_4-2eps} hold. Then 
    (i) both $\gamma'_i$ and $\gamma'_k$ are in $\F$, 
    (ii) $\overline{s_k\gamma_i(w)}$ is not $\eps$-blocked by any $p\in T\cup S\setminus\{s_k\}$,
    (iii) $\gamma'_k$ is not $\eps$-blocked by any $s_j\in S\setminus\{s_k\}$, and
    (iv) if $t_j$ is not $\eps$-blocking $\gamma_k$ then it does not $\eps$-block $\gamma'_i$.
\end{restatable}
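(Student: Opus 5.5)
We will prove the four parts in order, relying mainly on \Cref{lem:boundary_seperation_rho} and on the triangle inequality together with \Cref{con:robot_separation_4-2eps}. Throughout, set $a=\gamma_i(w)$ and recall that $\|s_k-a\|<2-\eps$, since $s_k$ $\eps$-blocks $\gamma_i$ and $a$ is a locally closest point realizing the blocking. (If the locally closest point at which the overlap exceeds $\eps$ were not the last one, we would first argue that we may take $w$ to be such a point; we assume the definition gives this.)

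\emph{Part (i).} Apply \Cref{lem:boundary_seperation_rho} to $\gamma_i$, $p=s_k$ and the locally closest point $a$. This gives $\overline{s_ka}\subset\F$. The remaining pieces $\gamma_i[0,w]$, $\gamma_i[w,1]$ and $\gamma_k$ are subpaths of paths in $\F$, so both concatenations lie in $\F$.

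\emph{Part (ii).} Let $q\in (S\cup T)\setminus\{s_k\}$ and let $x\in\overline{s_ka}$. Since $x$ lies on the segment, $\|x-s_k\|\le\|a-s_k\|<2-\eps$. By \Cref{con:robot_separation_4-2eps}, $\|q-s_k\|\ge 4-2\eps$. The triangle inequality then gives
\[
\|q-x\|\ \ge\ \|q-s_k\|-\|x-s_k\|\ >\ (4-2\eps)-(2-\eps)\ =\ 2-\eps,
\]
so the overlap of $q$ and $x$ is less than $\eps$. Hence no such $q$ $\eps$-blocks the segment.

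\emph{Part (iii).} Part (ii) already handles the segment, so it remains to treat $\gamma_i[w,1]$. Suppose $s_j\ne s_k$ $\eps$-blocks $\gamma_i[w,1]$, i.e. some point $\gamma_i(u)$ with $u\ge w$ satisfies $\|s_j-\gamma_i(u)\|<2-\eps$. Then $s_j$ $\eps$-blocks $\gamma_i$. Consider the point of $\gamma_i$ minimizing the distance to $s_j$ over the closed set of parameters in $[w,1]$ where that distance is below $2-\eps$; we choose the minimizer over the connected component containing $u$.

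\begin{itemize}
\item If the minimizer is interior, or at the endpoint $1$, it is locally closest to $s_j$ with parameter $\ge w$. If its parameter is $>w$, this contradicts $s_k$ being the last to block. If it equals $w$, then $\|s_j-s_k\|\le\|s_j-a\|+\|a-s_k\|<4-2\eps$, contradicting \Cref{con:robot_separation_4-2eps}.
\item If the minimizer sits at the left boundary $w$, then $\|s_j-a\|<2-\eps$, which gives the same contradiction.
\end{itemize}

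The subtle point is the endpoint $\gamma_i(1)=t$. Being a minimum at $u=1$ still counts as locally closest under the one-sided definition, so the argument goes through.

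\emph{Part (iv).} Write $\gamma'_i=\gamma_i[0,w]\circ\overline{as_k}\circ\gamma_k$. The target $t_j=\gamma_i(1)$ is not counted against $\gamma_i$ itself, but it could still block the prefix $\gamma_i[0,w]$, so we handle each piece.

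\begin{itemize}
\item The segment $\overline{as_k}$ is covered by part (ii).
\item The suffix $\gamma_k$ is not blocked by $t_j$ by hypothesis.
\item For the prefix, we use that $\gamma_i$ is a geodesic ending at $t_j$. If $\|t_j-\gamma_i(u)\|<2-\eps$ for some $u\le w$, then the geodesic distance from $\gamma_i(u)$ to $t_j$ would be short, and $|\gamma_i[u,1]|\ge |\gamma_i[w,1]|\ge\|a-t_j\|$. We also have $\|a-t_j\|\ge\|s_k-t_j\|-\|s_k-a\|>2-\eps$.
\end{itemize}

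The prefix case is the main obstacle. We would argue that the straight segment $\overline{\gamma_i(u)t_j}$ lies in $\F$, via \Cref{lem:boundary_seperation_rho} applied with $p=t_j$ at a locally closest point. Its length $<2-\eps$ would then contradict the geodesic property, since the subpath from $\gamma_i(u)$ to $t_j$ has length $\ge\|a-t_j\|>2-\eps$. Hence the prefix is not $\eps$-blocked by $t_j$.
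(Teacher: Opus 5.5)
Your proposal is correct and follows essentially the same route as the paper: (i) via \Cref{lem:boundary_seperation_rho}; (ii) and (iii) via the triangle inequality against \Cref{con:robot_separation_4-2eps}, combined with $s_k$ being last to $\eps$-block; and (iv) by splitting $\gamma'_i$ into its three pieces and shortcutting the geodesic $\gamma_i$. The one step you leave implicit, which the paper makes explicit, is in the prefix case of (iv): apply the lemma not at the arbitrary $\gamma_i(u)$ but at a minimizer of $\|t_j-\gamma_i(\cdot)\|$ over $[0,w]$, which is a locally closest point at distance $<2-\eps$ lying in $[0,w)$ because $\|a-t_j\|>2-\eps$ (this is the paper's $w''$), and then run your length comparison from that point.
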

\begin{proof}
    To show that (i) holds we only need \Cref{con:boundary_seperation_to_clear_R,con:robot_seperation_to_ensure_R} in order to apply \Cref{lem:boundary_seperation_rho}, then we have $\overline{s_k\gamma_i(w)}\subset \F$, and therefore both $\gamma'_i$ and $\gamma'_k$ are also contained in $\F$.

    For (ii) we also need \Cref{con:robot_separation_4-2eps}. Assume by contradiction that there exists $p\in T\cup S\setminus\{s_k\}$ that $\eps$-blocks $\overline{s_k\gamma_i(w)}$. Then there exists a point $x$ on $\overline{s_k\gamma_i(w)}$ such that $\|x-p\|< 2-\eps$. Because $\|s_k-x\|\le\|s_k-\gamma_i(w)\|<2-\eps$, we get by the triangle inequality that $\|s_k-p\|\le \|s_k-x\|+\|x-p\|< 4-2\eps$, in contradiction to \Cref{con:robot_separation_4-2eps}.

    To show that (iii) holds, we only need to prove that $\gamma_i[w,1]$ is not $\eps$-blocked by any position $s_j\in S\setminus\{s_k\}$. Here we also use \Cref{con:robot_separation_4-2eps}.
    Assume by contradiction that there exists $s_x\in S\setminus\{s_k\}$ that $\eps$-blocks $\gamma_i[w,1]$. Then there exists a value $w\le w'\le 1$ such that $s_x$ and $\gamma(w')$ are $\eps'$-overlapping, for some $\eps'>\eps$. Because $s_k$ is the last to $\eps$-block $\gamma_i$, we have $w_{i,x}\le w$. Since $w_{i,x}$ is the largest value such that $\gamma_i(w_{i,x})$ is the point on $\gamma_i$ locally closest to $s_x$, it must be that $s_x$ and $\gamma_i(w)$ are $\eps''$-overlapping for some $\eps''>\eps$. Therefore by the triangle inequality, $\|s_k-s_x\|\le \|s_k- \gamma_i(w)\|+\|\gamma_i(w)-s_x\|< 2-\eps+2-\eps=4-2\eps$, in contradiction to \Cref{con:robot_separation_4-2eps}.

    For (iv), we only need to show that $t_j$ does not $\eps$-block $\gamma_i[0,w]$. We again use \Cref{con:robot_separation_4-2eps}.
    Assume by contradiction that $t_j$ blocks $\gamma_i[0,w]$. Then there exists a value $0\le w'\le w$ such that $t_j$ and $\gamma(w')$ are $\eps'$-overlapping, for some $\eps'>\eps$. 
    If $\gamma(w')$ is not locally closest to $t_j$, then since by (ii) $t_j$ and $\gamma(w)$ do not $\eps$-overlap, there must be a point $0\le w''< w$ locally closest to $t_j$ such that $\|t_j-\gamma(w'')\|<2-\eps$. We can therefore apply \Cref{lem:boundary_seperation_rho} and get that $\overline{t_j\gamma(w'')}\subset\F$, in contradiction to $\gamma$ being a shortest path between $s_i$ and $t_j$.
\end{proof}

\subsection{Dealing with robots interrupting a path}\label{sec:clearance_path}
In this subsection, we discuss how to move a robot along an $\eps$-interrupted path.
We describe simple motion paths for the interrupting robots to clear the way, and we analyze the
extra distance traveled by the interrupting robots.
We note that shorter paths can be shown for the interrupting robots, however, as this does not effect the correctness of the algorithm, we chose to keep this part simple and describe a simple (yet rather short) motion path. 

As mentioned at the beginning of this section, our motion plan consists of a weakly-monotone collision free path set, that is, robots moves to targets according some ordering one by one; when one robot travels from its starting position to some target, the other robots must stay in a small disk centered at their current position (which is either a start or a target position).
To ensure that an interrupting robot is able to clear the way while remaining in a disk of radius $1+\eps$ around a start/target position, we introduce two additional constraints.
\begin{constraint}\label{con:boundary_seperation_to_clear_interruptions}
    $\omega\ge 1+\eps$
\end{constraint}
\begin{constraint}\label{con:robot_seperation_to_clear_interruptions}
    $\rho\ge2+\eps$
\end{constraint}

Consider a robot $A$ traveling in constant speed along a path $\gamma_A$ from start position $s\in S$ to some target position $t\in T$.
Let $b\in S\cup T$ be either a start or a target position, occupied by another robot $B$, which is $\eps$-interrupting $\gamma_A$.
We describe a motion plan for $B$ by constructing a \dfn{clearance path} $\Tilde\gamma_B:[0,1]
\to\reals^2$, along which the robot $B$ will travel with constant speed, as follows.
\begin{figure}[h!]
    \centering
    \includegraphics[scale=0.85]{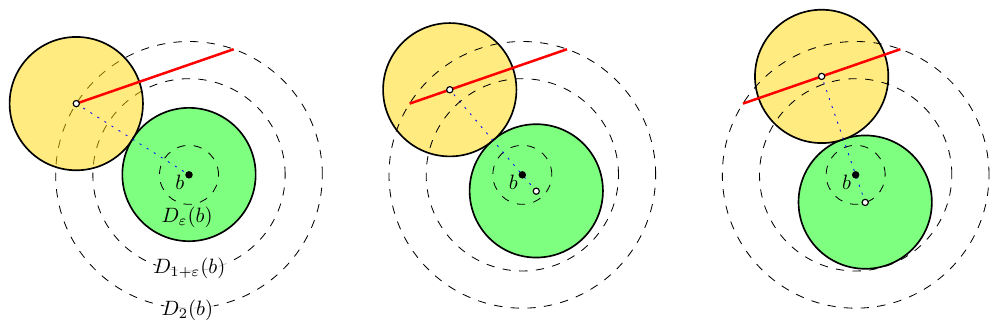}
    \caption{Left: the robot $B$ is initially positioned at $b$. In red is a maximal subpath of $\gamma_A$ where $B$ needs to clear the way. Middle, right: the robot $B$ move inside $\D_\eps(b)$ while maintaining distance exactly $2$ from the robot $A$.}
    \label{fig:polar}
\end{figure}

Let $\gamma_A[w_1,w_2]$ be a maximal subpath where $\|\gamma_A(w)-b\|\le 2$ for every $w\in[w_1,w_2]$, that is, when $A$ travels along $\gamma[w_1,w_2]$, the robot $B$ must clear the way. Note that there could be more than one such maximal subpath for $B$.
We define the corresponding subpath $\Tilde\gamma_B:[w_1,w_2]\to\reals^2$ for $B$ using a polar coordinate system, where $b$ is the pole. 
Denote by $r_x$ the radius and by $\theta_x$ the angle of a point $x$, and let $\Tilde\gamma_B(w):=\left(2-r_{\gamma_A(w)},\theta_{\gamma_A(w)}+\pi\right)$. 

The next lemma shows that the clearance path of $B$ is feasible and relatively short.
\begin{restatable}{lemma}{clearancePath}\label{clm:clearance_path_in_disk}
     $\Tilde\gamma_B[w_1,w_2]$ is a continuous path that starts and ends at $b$, such that (i) $\Tilde\gamma_B[w_1,w_2]\subset D_{\eps}(b)\subset \F$, and (ii) $|\Tilde\gamma_B[w_1,w_2]|\le|\gamma_A[w_1,w_2]|$ for $0\le \eps\le1$.
\end{restatable}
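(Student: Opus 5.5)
Write $\gamma_A(w)=b+r(w)u(w)$, where $r(w)=\|\gamma_A(w)-b\|\in[2-\eps,2]$ on $[w_1,w_2]$ and $u(w)$ is a unit vector. The lower bound $r\ge 2-\eps$ holds because $b$ is $\eps$-interrupting $\gamma_A$; the upper bound holds by the definition of the maximal subpath. By definition, $\Tilde\gamma_B(w)=b-(2-r(w))u(w)$. The plan is to handle continuity and the endpoints first, then containment in $D_\eps(b)$, then length.

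\textbf{Continuity and endpoints.} We have $r(w)\ge 2-\eps\ge 1>0$ on $[w_1,w_2]$, so $u(w)=(\gamma_A(w)-b)/r(w)$ is continuous. Hence $\Tilde\gamma_B=b-(2-r)u$ is continuous. By maximality of the subpath and continuity of $\gamma_A$, we have $r(w_1)=r(w_2)=2$, unless $w_1=0$ or $w_2=1$. In those cases $\gamma_A(0)=s$ and $\gamma_A(1)=t$ are start/target positions at distance at least $\rho\ge 2+\eps>2$ from $b$, by \Cref{con:robot_seperation_to_clear_interruptions}, so they cannot lie in the subpath at all. Therefore $\Tilde\gamma_B(w_1)=\Tilde\gamma_B(w_2)=b$. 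A side remark: $\|\gamma_A(w)-\Tilde\gamma_B(w)\|=r+(2-r)=2$, since $\gamma_A(w)$ and $\Tilde\gamma_B(w)$ lie on opposite rays from $b$. So $A$ and $B$ never collide. This is not required by the lemma, but it is worth noting.

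\textbf{Containment (i).} $\|\Tilde\gamma_B(w)-b\|=2-r(w)\le\eps$. This gives the closed disk; the boundary case $r=2-\eps$ is allowed, and I would read $D_\eps(b)$ as its closure or note the measure-zero issue. To show $D_\eps(b)\subset\F$: by \Cref{con:boundary_seperation_to_clear_interruptions}, $D_{1+\eps}(b)\subset\W$ since $\omega\ge1+\eps$. So every $x$ with $\|x-b\|\le\eps$ has $D_1(x)\subseteq D_{1+\eps}(b)$, which avoids $\O$.

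\textbf{Length (ii).} This is the main step. Assume $\gamma_A$ is rectifiable and parametrize it by arc length. Then we can use the polar length element: $|d\gamma_A|^2=dr^2+r^2\,d\theta^2$ and $|d\Tilde\gamma_B|^2=dr^2+(2-r)^2\,d\theta^2$. The radial component $d(2-r)=-dr$ has the same magnitude for both paths. For the angular component, $2-r\le\eps\le 1\le 2-\eps\le r$, so $(2-r)^2\le r^2$. This is exactly where $\eps\le1$ is used. Pointwise, $|d\Tilde\gamma_B|\le|d\gamma_A|$, and integrating gives the claim.

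To avoid differentiability issues, I would instead compare chords for a partition $w_1=\tau_0<\dots<\tau_k=w_2$. For two points with radii $r,r'$ and angle difference $\phi$, the law of cosines gives a squared distance of $(r-r')^2+2rr'(1-\cos\phi)$ for $\gamma_A$, and $(r-r')^2+2(2-r)(2-r')(1-\cos\phi)$ for $\Tilde\gamma_B$. Since $(2-r)(2-r')\le rr'$, each chord of $\Tilde\gamma_B$ is at most the corresponding chord of $\gamma_A$. Taking the supremum over partitions yields $|\Tilde\gamma_B[w_1,w_2]|\le|\gamma_A[w_1,w_2]|$.
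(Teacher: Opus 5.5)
Your proposal is correct and follows the same argument as the paper. You work in polar coordinates about $b$, the radial parts of $\gamma_A$ and $\Tilde\gamma_B$ change by the same amount, and the angular parts differ only by the radius factor $2-r\le r$, which is where $\eps\le 1$ enters. Your law-of-cosines chord comparison, the explicit check that $r(w_1)=r(w_2)=2$, and the remark that the path may touch $\partial D_\eps(b)$ (so the closed disk is the right statement) are welcome tightenings of the paper's more informal proof.

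One small fix: at $\eps=0$, \Cref{con:robot_seperation_to_clear_interruptions} only gives $\rho\ge 2$, so the strict inequality $\rho>2$ should come from \Cref{con:robot_separation_4-2eps}. Alternatively, $r(0)\ge\rho\ge 2$ together with $r(0)\le 2$ already forces $\Tilde\gamma_B(0)=b$.
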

\begin{proof}
Observe that indeed $\Tilde\gamma_B[w_1,w_2]$ is a continues path because the function $f(w)=(r_{\gamma_A(w)},\theta_{\gamma_A(w)})$ for $w\in[w_1,w_2]$ is continuous. Notice that for all $w\in[w_1,w_2]$, we have $\|\gamma_A(w)-\Tilde\gamma_B(w)\|=2$, and therefore it starts at $b$ when the path $\gamma_A$ enters the disk $D_2(b)$, and ends at $b$ when the path $\gamma_A$ leaves the disk $D_2(b)$.

Because $b$ is $\eps$-interrupting $\gamma_A$, the largest distance that it has to move away from $b$ to keep $\|\gamma_A(w)-\Tilde\gamma_B(w)\|=2$ is $\eps$, and therefore we have $\Tilde\gamma_B[w_1,w_2]\subset D_{\eps}(b)$. By \Cref{con:boundary_seperation_to_clear_interruptions}, $D_{1+\eps}(b)\subset \W$ and therefore $D_{\eps}(b)\subset \F$.

Notice that throughout the motion we keep the angle between $\gamma_A(w)$ and $\Tilde\gamma_B(w)$ (relative to the pole $b$) exactly $\pi$. Thus, the angle of the paths changes at the same pace as $w$ changes. On the other hand, the sum of radii of $\gamma_A(w)$ and $\Tilde\gamma_B(w)$ throughout the motion is exactly $2$, but since $\eps\le 1$ we get that the radius of ${\Tilde\gamma_B(w)}$ is at most the radius of $\gamma_A(w)$ for all $w\in[w_1,w_2]$. Therefore $|\Tilde\gamma_B[w_1,w_2]|\le|\gamma_A[w_1,w_2]|$.
\end{proof}

Recall that $B$ may interrupt $\gamma_A$ in more than one maximal subpath, so overall we define the clearance path $\Tilde\gamma_B:[0,1]\to\reals^2$ as \[
\Tilde\gamma_B(w) = 
\begin{cases}
b, & \text{if } \|b-\gamma_A(w)\| \ge 2,\\
\left(2-r_{\gamma_A(w)},\theta_{\gamma_A(w)}+\pi\right),  & \text{if } \|b-\gamma_A(w)\| < 2.
\end{cases}
\]
In addition, we construct such a clearance path for any other robot that interrupts $\gamma_A$. 
The following theorem shows that clearance paths are collision-free, assuming \Cref{con:boundary_seperation_to_clear_R,con:robot_seperation_to_ensure_R,con:robot_separation_4-2eps,con:boundary_seperation_to_clear_interruptions,con:robot_seperation_to_clear_interruptions}.

\begin{restatable}{theorem}{clearancePathNoCollision}\label{thm:interrupted_path}
    Let $A$ be a robot traveling in constant speed along a path $\gamma_A$ from start position $s\in S$ to some target position $t\in T$, and let $\{\Tilde\gamma_B\mid B\ne A\}$ be the set of clearance paths defined for the other robots. If \Cref{con:boundary_seperation_to_clear_R,con:robot_seperation_to_ensure_R,con:robot_separation_4-2eps,con:boundary_seperation_to_clear_interruptions,con:robot_seperation_to_clear_interruptions} hold for $0\le \eps\le 1$, then the motion plan that corresponds to the paths $\gamma_A\cup \{\Tilde\gamma_B\mid B\ne A\}$ is collision-free.
\end{restatable}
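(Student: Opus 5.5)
The plan is to check the three kinds of pairs that could collide: $A$ against the obstacles and every $B$ against the obstacles, $A$ against each other robot, and two robots $B,B'\neq A$ against each other. The key observation, which I would establish first, is a geometric reformulation of the clearance path. Let $u=(\gamma_A(w)-b)/r$, where $r=\|\gamma_A(w)-b\|<2$. By the definition of $\Tilde\gamma_B$,
\[
\Tilde\gamma_B(w)=b-(2-r)u=\gamma_A(w)-r u-(2-r)u=\gamma_A(w)-2u .
\]
So whenever $B$ is active, it sits on the ray from $\gamma_A(w)$ through $b$, at distance exactly $2$ from $\gamma_A(w)$. When $r\ge 2$, the robot $B$ stays at $b$. (The degenerate case $r=0$ cannot occur, since $b$ is $\eps$-interrupting, so $r\ge 2-\eps\ge 1$.)

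\textbf{Obstacles and $A$ versus $B$.} The path $\gamma_A$ lies in $\F$ by construction; in the algorithm it is either a geodesic or a switch path, and the latter is in $\F$ by \Cref{lem:switch_path_are_legal}(i). Each $\Tilde\gamma_B$ lies in $D_\eps(b)\subset\F$ by \Cref{clm:clearance_path_in_disk}, which uses \Cref{con:boundary_seperation_to_clear_interruptions}. For the pair $A,B$: when $B$ is active, the reformulation gives $\|\gamma_A(w)-\Tilde\gamma_B(w)\|=2$; otherwise $\|\gamma_A(w)-b\|\ge 2$. In both cases there is no collision.

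\textbf{Two robots $B,B'$.} This is the step I expect to be the main obstacle. Staying inside $D_\eps(b)$ and $D_\eps(b')$ only yields distance at least $\rho-2\eps$, which is not enough. I would split into cases.

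(a) Neither robot is active. Then they sit at $b,b'$, and $\|b-b'\|\ge\rho\ge 2$.

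(b) Only $B$ is active. Then $\|\Tilde\gamma_B(w)-b'\|\ge\|b-b'\|-\eps\ge 2$ by \Cref{con:robot_seperation_to_clear_interruptions}.

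(c) Both are active. Both robots lie on the circle of radius $2$ around $a=\gamma_A(w)$, in the directions of $b$ and $b'$. Let $\theta$ be the angle $\angle bab'$. Then $\|\Tilde\gamma_B(w)-\Tilde\gamma_{B'}(w)\|=4\sin(\theta/2)$, so it suffices to show $\theta\ge\pi/3$. Since $\|a-b\|,\|a-b'\|<2$, the law of cosines gives that $\theta<\pi/3$ would force $\|b-b'\|^2=r^2+r'^2-2rr'\cos\theta<4$. I would verify this by noting that for $\cos\theta>1/2$ the expression is maximized at $r=r'=2$, where it is at most $4$. This contradicts $\|b-b'\|\ge 2+\eps\ge 2$. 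Hence $\theta\ge\pi/3$, and the two robots are at distance at least $2$.

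Finally, I would remark that robots already settled on targets of earlier iterations are handled identically, because their positions also belong to $S\cup T$ and satisfy the same separation bounds. Continuity of all paths is given by \Cref{clm:clearance_path_in_disk}. These cases cover all pairs of robots at every time $w$, which proves that the motion plan is collision-free.
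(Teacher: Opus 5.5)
Your proof is correct and uses the same case split as the paper. The paper also handles every time at which at least one of the two clearing robots is at its home position by combining $\Tilde\gamma_B\subset D_\eps(b)$ with $\|b-c\|\ge 2+\eps$. It handles the pair $A,B$ and the obstacles through \Cref{clm:clearance_path_in_disk}.

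The difference is the case where both robots are active. There the paper only remarks that the two robots move along rays from $\gamma_A(w)$, so their distance ``can only grow''. That is informal, since pushing two points outward along rays from a common center does not by itself increase their distance.

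Your identity $\Tilde\gamma_B(w)=\gamma_A(w)-2u$ places every active robot on the circle of radius $2$ about $\gamma_A(w)$. This reduces the question to showing $\angle b\gamma_A(w)b'\ge\pi/3$, which follows from $\|b-b'\|\ge 2$ together with $\|\gamma_A(w)-b\|,\|\gamma_A(w)-b'\|<2$. This makes the paper's remark precise. It also shows that this case needs only $\rho\ge 2$, and that \Cref{con:robot_seperation_to_clear_interruptions} is used only when exactly one robot is active.

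One small repair is needed. The expression $r^2+r'^2-2rr'\cos\theta$ is not maximized at $r=r'=2$ in general: for $\cos\theta>3/4$ the point $(r,r')=(1,2)$ gives the larger value $5-4\cos\theta$. The bound you need still follows directly. Assuming $r\ge r'$, $\cos\theta>\frac12$ gives $\|b-b'\|^2<r^2+r'^2-rr'=r^2-r'(r-r')\le r^2<4$.
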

\begin{proof}
    Let $B,C\ne A$ be two robots placed on $b,c\in S\cup T$, respectively. We show that the motion plan that corresponds to the paths $\Tilde\gamma_B,\Tilde\gamma_C$ and $\gamma_A$ is collision-free.
    
    By the construction, for every robot $B\ne A$ the motion plan that corresponds to the paths $\gamma_A$ and $\Tilde\gamma_B$ is collision-free.
    We now need to prove that two robots $B,C\ne A$ traveling along their respective paths $\Tilde\gamma_B$ and $\Tilde\gamma_C$ with constant speed are not colliding.

    Recall that by the construction of the paths we have $\Tilde\gamma_B\subset D_\eps\left(b\right)$ and $\Tilde\gamma_C\subset D_\eps\left(c\right)$. 
    By constraint \ref{con:robot_seperation_to_clear_interruptions}, $|b-c|\ge 2+\eps$. Therefore, for every $0\le w\le 1$ such that either $\Tilde\gamma_B(w)=b$ or $\Tilde\gamma_C(w)=c$, the robots do not collide. 
    Consider a value $0\le w\le 1$ for which both $\Tilde\gamma_B(w)\neq b$ and $\Tilde\gamma_C(w)\neq c$. Since $B$ moves along the ray from $\gamma_A$ to $b$, and $C$ moves along the ray from $\gamma_A$ to $c$, the distance between $A$ and $B$ can only grow larger at each step of time, and thus $B$ and $C$ do not collide.
\end{proof}

\section{The Algorithm}\label{sec:algorithm}
In this section we present and analyze the algorithm based on the tools developed in the previous section. The input for the algorithm is an overlap parameter $\eps$, a workspace $\W$, a set $S$ of starting positions and a set $T$ of target positions, such that \Cref{con:boundary_seperation_to_clear_R,con:robot_seperation_to_ensure_R,con:robot_separation_4-2eps,con:boundary_seperation_to_clear_interruptions,con:robot_seperation_to_clear_interruptions} hold. In addition, we assume that  each connected component of $\F$ contains an equal number of start and target positions (as otherwise, there is no solution).

For the simplicity of presentation and analysis, in this section we provide an iterative version of the algorithm. In the $i$'th iteration, for $1\le i\le m$, the algorithm constructs a collision-free motion plan for all the robots over the time interval $[i-1,i]$, in which a single robot is moving from a start position to a target position along an $\eps$-interrupted path (\Cref{sec:switch_path}), and the other robots may move in their small neighborhood to clear the way (\Cref{sec:clearance_path}). The output path for a robot $A$ will then be the concatenation $\gamma^*_A:[0,m]\to\F$ of the paths $\gamma_A:[i-1,i]\to\F$ constructed for it in all iterations. We then update $S$ and $T$ according to the path that was chosen, and remove from $\W$ a disk of radius $1-\eps$ around the target.

At the beginning of this section we gave a high level description of our algorithm, and a more detailed pseudo code is provided in \Cref{alg:eps_overlap_algo}. Below, we prove that this algorithm returns a collision-free motion plan, with total length $O(\OPT)$, where  $\OPT$ is the optimal size of a solution to the problem.

\begin{algorithm}[h]
\caption{Motion Plan for Unlabeled Disks}\label{alg:eps_overlap_algo}
\KwIn{$\W$, $S$, $T$, and $0\le \eps\le 1$ such that \Cref{con:boundary_seperation_to_clear_R,con:robot_seperation_to_ensure_R,con:robot_separation_4-2eps,con:boundary_seperation_to_clear_interruptions,con:robot_seperation_to_clear_interruptions} hold}
\KwOut{A collision-free motion plan}
\begin{itemize}
    \item For each robot $A\in [m]$, initialize an empty path $\gamma_A^*$.
    \item While $S\ne \emptyset$:
    \begin{enumerate}
        \item Compute an optimal-assignment path set $\Gamma$ for $S,T,\W$.  
        \item Find an $\eps$-interrupting target $t_j$, and let $\gamma_i\in \Gamma$ be the path from $s_i$ to $t_j$.
        \item If $\gamma_i$ is not $\eps$-blocked, set $A \gets i$, $s_A\gets s_i$, $\gamma_A \gets \gamma_i$.
        \item Else, let $s_k$ be the last to $\eps$-block $\gamma_i$, and set $A \gets k$, $s_A\gets s_k$, and $\gamma_A\gets \gamma'_k$, \\where $\gamma'_k$ is the switch path (\Cref{sec:switch_path}).
        \item For all $B \neq A$, set $\gamma_B\gets \Tilde{\gamma}_B$, where $\Tilde{\gamma}_B$ is the clearance path (\Cref{sec:clearance_path}). 
        \item For each robot $B\in [m]$ set $\gamma^*_B\gets \gamma^*_B\circ \gamma_B$.
        \item Set $\W\gets\W\setminus\D_{1-\eps}(t_j)$, $S\gets S\setminus\{s_A\}$, $T\gets T\setminus\{t_j\}$.
    \end{enumerate}
    \item Return the set of paths $(\gamma_1^*,\dots,\gamma_m^*)$.
\end{itemize}
\end{algorithm}

\begin{theorem}
    Consider an input $\W$, $S$, $T$, and $0\le \eps\le 1$ such that \Cref{con:boundary_seperation_to_clear_R,con:robot_seperation_to_ensure_R,con:robot_separation_4-2eps,con:boundary_seperation_to_clear_interruptions,con:robot_seperation_to_clear_interruptions} hold. If each connected component of $\F$ contains an equal number of start and target positions, then \Cref{alg:eps_overlap_algo} is guaranteed to return a collision-free motion plan $\Gamma^*$.
\end{theorem}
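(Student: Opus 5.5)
The proof will go by induction on the iterations of \Cref{alg:eps_overlap_algo}, maintaining an invariant at the start of each iteration $i$. The unsatisfied robots sit exactly on the current set $S$. Every already-settled robot sits exactly on a removed target $t$, and $D_{1-\eps}(t)$ has been excised from the current workspace $\W_i$. The current instance $(\W_i,S_i,T_i)$ still satisfies \Cref{con:boundary_seperation_to_clear_R,con:robot_seperation_to_ensure_R,con:robot_separation_4-2eps,con:boundary_seperation_to_clear_interruptions,con:robot_seperation_to_clear_interruptions}. Finally, each connected component of the current free space $\F_i$ contains equally many points of $S_i$ and $T_i$. Given the invariant, each iteration is feasible, and concatenating the per-iteration plans over the intervals $[i-1,i]$ gives a continuous collision-free plan. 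This works because every robot starts and ends each interval at its current start or target position; for clearance paths this is \Cref{clm:clearance_path_in_disk}.

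Within one iteration, the steps are as follows.
\begin{enumerate}
\item Balance in components guarantees that an optimal-assignment path set $\Gamma$ exists in $\F_i$.
\item \Cref{theorm:almost-standalone} provides an $\eps$-interrupting target $t_j$.
\item If $\gamma_i$ is $\eps$-blocked, \Cref{lem:switch_path_are_legal}(i),(iii) show that the switch path $\gamma'_k$ lies in $\F_i$ and is not $\eps$-blocked by any start position. Otherwise $\gamma_A=\gamma_i$, which is unblocked by assumption.
\item The moving path is therefore only $\eps$-interrupted, and \Cref{thm:interrupted_path} gives collision-freeness among $A$ and all robots still on $S$.
\end{enumerate}

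The new point is handling the settled robots on removed targets. A settled robot at $t$ cannot $\eps$-block $\gamma_A$, since $\gamma_A\subset\F_i$ avoids $D_{1-\eps}(t)\oplus D_1$. Hence every point of $\gamma_A$ is at distance at least $2-\eps$ from $t$, so the overlap is at most $\eps$ and $t$ is $\eps$-interrupting. The clearance construction of \Cref{clm:clearance_path_in_disk} therefore applies to it as well, with $D_\eps(t)\subset\F$ in the original free space by \Cref{con:boundary_seperation_to_clear_interruptions}. Pairwise non-collision among the moving clearers follows from \Cref{con:robot_seperation_to_clear_interruptions} exactly as in \Cref{thm:interrupted_path}, since all of them lie in $S\cup T$ of the original instance.

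Maintaining the invariant comes next. Removing $s_A$ and $t_j$ only relaxes the separation constraints on the remaining points. Removing $D_{1-\eps}(t_j)$ from $\W$ keeps $\omega$ valid for the remaining points: each remaining $p$ has $\|p-t_j\|\ge\rho\ge 2+\eps$, so $D_\omega(p)$ misses $D_{1-\eps}(t_j)$ whenever $\omega\le 1+2\eps$. I expect to verify this from the constraints, or else to note that only the distance from $p$ to the removed disk matters. That distance is at least $\rho-(1-\eps)\ge 1+2\eps\ge 1+\eps$, which suffices for \Cref{con:boundary_seperation_to_clear_interruptions}. For \Cref{con:boundary_seperation_to_clear_R} I would check that it is used in \Cref{lem:boundary_seperation_rho} only against genuine obstacles near a blocking position. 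If needed, I would argue the lemma's conclusion directly for the disk obstacle using $\rho$.

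The main obstacle is preserving component balance. Excising $D_{1-\eps}(t_j)$ can disconnect $\F$. I would argue that a component $C$ of $\F_i$ whose balance is destroyed would leave some unmatched start with no target reachable, and then derive a contradiction with the choice of $t_j$. Since $t_j$ does not $\eps$-block any other path of $\Gamma$, every $\gamma\in\Gamma\setminus\{\gamma_i\}$ stays at distance at least $2-\eps$ from $t_j$. Thus $\gamma$ avoids $D_{1-\eps}(t_j)\oplus D_1$ and survives in $\F_{i+1}$. The remaining paths of $\Gamma$, with $\gamma_i$ dropped if no switch occurred, therefore match $S_{i+1}$ to $T_{i+1}$ inside $\F_{i+1}$. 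In the switch case I would use the rerouted path $\gamma'_i$ from $s_i$ to $t_\ell$ together with \Cref{lem:switch_path_are_legal}(iv), so each component of $\F_{i+1}$ remains balanced. I expect this case to need the most care.
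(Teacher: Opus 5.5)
Your overall structure matches the paper's proof. Within an iteration you use \Cref{theorm:almost-standalone} for the target, \Cref{lem:switch_path_are_legal}(i),(iii) for the switch, and the observation that a settled robot is at distance at least $2-\eps$ from any path in the reduced free space, so it only interrupts. You then keep $\Gamma\setminus\{\gamma_i\}$, adding $\gamma'_i$ via \Cref{lem:switch_path_are_legal}(iv) in the switch case, to get a matching for the next iteration. The component-balance step you flag as the main obstacle is in fact fully settled by that surviving matching, exactly as in the paper.

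The step that does not go through as written is re-establishing \Cref{con:boundary_seperation_to_clear_R} after excising $D_{1-\eps}(t_j)$. You bound the distance from a remaining $p$ to the excised disk by $\rho-(1-\eps)\ge 1+2\eps$, using \Cref{con:robot_seperation_to_clear_interruptions}. That suffices for \Cref{con:boundary_seperation_to_clear_interruptions}, but not for $\sigma_\eps=\sqrt{(3-\sqrt3-\eps)^2+1}$ when $\eps$ is small. For example, $1<\sigma_0\approx 1.614$, and the inequality $1+2\eps\ge\sigma_\eps$ holds only for $\eps\gtrsim 0.22$.

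Your fallback, that \Cref{con:boundary_seperation_to_clear_R} is only needed against genuine obstacles, is also invalid. In later iterations \Cref{lem:boundary_seperation_rho} is applied in the current workspace, and so are \Cref{theorm:almost-standalone} and \Cref{lem:switch_path_are_legal}, which depend on it. Its conclusion $\overline{pa}\subset\F_i$ requires $\R(a,p)\cap D_{\sigma_\eps}(p)$ to avoid the excised disks as well. So those disks must count as obstacles for \Cref{con:boundary_seperation_to_clear_R}.

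The fix is to use \Cref{con:robot_separation_4-2eps} instead. From $\|p-t_j\|\ge 4-2\eps$, the distance from $p$ to $D_{1-\eps}(t_j)$ is at least $3-\eps$. For all $\eps\in[0,1]$ we have $3-\eps\ge 1+\eps$. We also have $3-\eps\ge\sigma_\eps$, because $(3-\eps)^2-(3-\sqrt3-\eps)^2-1=2\sqrt3(3-\eps)-4>0$. This is precisely how the paper preserves both obstacle constraints.
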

\begin{proof}
    First, observe that the solution returned by the algorithm is a path set that matches every start position in $S$ to a unique target in $T$, because in each iteration we find a path between some $s\in S$ and $t\in T$ and then remove them from $S$ and $T$ for the next iteration.

    Consider the first iteration of the algorithm. 
    By \Cref{theorm:almost-standalone}, an $\eps$-interrupting target $t_j$ exists, and let $\gamma_i$ be the path from $s_i$ to $t_j$.
    If $\gamma_i$ is $\eps$-blocked, let $s_k\in S$ be the last to block $\gamma_i$, and by \Cref{lem:switch_path_are_legal}(iii), the switch path $\gamma'_k$ is in $\F$, and it is not blocked by any other robot. Otherwise, $\gamma_i$ is not blocked.
    In any case we choose an $\eps$-interrupted path that leads to $t_j$, and by \Cref{thm:interrupted_path} the motion plan computed in this iteration is collision-free.

    In the following iteration, after removing $D_{1-\eps}(t_j)$ from $\W$, \Cref{con:boundary_seperation_to_clear_R,con:robot_seperation_to_ensure_R,con:robot_separation_4-2eps,con:boundary_seperation_to_clear_interruptions,con:robot_seperation_to_clear_interruptions} still hold for the updated sets $S$ and $T$: the robots-separation does not change, so
    \Cref{con:robot_seperation_to_ensure_R,con:robot_separation_4-2eps,con:robot_seperation_to_clear_interruptions} still hold, and specifically by \Cref{con:robot_separation_4-2eps}
    for every $p_1,p_2\in S\cup T$ we have $\|p_1-p_2\|\ge 4-2\eps$. Therefore, for every $p\in S\cup T\setminus\{t_j\}$ and $x\in D_{1-\eps}(t_j)$ we have $\|p-x\|\ge 3-\eps$, which satisfies both \Cref{con:boundary_seperation_to_clear_R} ($\|p-x\|\ge \sqrt{(3-\sqrt{3}-\eps)^2+1}$) and \Cref{con:boundary_seperation_to_clear_interruptions} ($\|p-x\|\ge 1+\eps$), for any $0\le \eps\le 1$. 
    
    However, we also need to take into account the robot that is now positioned on $t_j$. 
    The main observation that we will use is that $t_j$ does not block any path $\gamma$ in the free space of $\W\setminus D_{1-\eps}(t_j)$; the reason is that the distance between any point on $a\in \gamma$ and a point $p\in D_{1-\eps}(t_j)$ is at least $1$, and therefore the $\|t_j-a\|\ge 2-\eps$ which means that the overlap between $t_j$ and $a$ is at most $\eps$.

    Consider an iteration $a>1$ of the algorithm, and denote by $\W_a,S_a,T_a$ the sets $\W,S,T$ at the beginning of iteration $a$.
    The first step is to compute an optimal-assignment path set $\Gamma_a$.
    Let $\Gamma_{a-1}$ be the optimal-assignment path set for $\W_{a-1},S_{a-1},T_{a-1}$.
    At the beginning of iteration $a$, we have $\W_a=\W_{a-1}\setminus\D_{1-\eps}(t_j)$ and $T_a=T_{a-1}\setminus\{t_j\}$, where $t_j$ is an $\eps$-interrupting target in $\Gamma_{a-1}$. Since $t_j$ does not block any path in $\Gamma_{a-1}$, the disk $\D_{1-\eps}(t_j)$ does not overlap with the trace of any path in $\Gamma_{a-1}\setminus\{\gamma_i\}$. Therefore, any path $\gamma\in\Gamma_a\setminus\{\gamma_i\}$ is in the free space of $\W_a$.
    If $\gamma_i$ was not blocked in iteration $a-1$, then $S_a=S_{a-1}\setminus\{s_i\}$, and thus $\Gamma_a\setminus\{\gamma_i\}$ corresponds to a matching of $S_a$ and $T_a$, which implies that $\Gamma_a$ exists. Otherwise, $S_a=S_{a-1}\setminus\{s_k\}$, where $s_k\in S_{a-1}$ is the last to $\eps$-block $\gamma_i$. Let $t_\ell$ be the target that was matched to it in $\Gamma_{a-1}$, and let $\gamma_k\in \Gamma_{a-1}$ be the path from $s_k$ to $t_\ell$. 
    Observe that $\Gamma_{a-1}\setminus \{\gamma_i,\gamma_k\}$ corresponds to a matching between $S_a\setminus\{s_i,s_k\}$ and $T_{a-1}\setminus\{t_k,t_\ell\}$, and thus to have a matching between $S_a$ and $T_a$ we need to find a path in the free space of $\W_a$ between $s_i$ and $t_\ell$. We show that the switch path $\gamma'_i$ from $s_i$ to $t_\ell$ as defined in \Cref{sec:switch_path}, has this property. Recall that $\gamma_i'$ is a concatenation of $\gamma[1,w]$, $\overline{\gamma_i(w)s_k}$, and $\gamma_k$. The paths $\gamma_i,\gamma_k$ are in $\Gamma_{a-1}$ and therefore lie in the free space of $\W_{a-1}$. By  \Cref{lem:switch_path_are_legal}(i), the entire path $\gamma_i'$ is in the free space of $\W_{a-1}$. We thus need to show that the disk $\D_{1-\eps}(t_j)$ does not overlap with the trace of $\gamma_i'$. Indeed, since $t_j$ is an $\eps$-interrupting target, it does not block $\gamma_k$, and therefore by \Cref{lem:switch_path_are_legal}(iv) it is also not blocking $\gamma_i'$.

    The next step is to find an $\eps$-interrupting target $t_j$. For this, notice that \Cref{theorm:almost-standalone} still holds, because it considers only start and target positions in $S_a$ and $T_a$.
    Let $\A$ be the set of robots that are already lying on target positions.
    Notice that we can also apply \Cref{lem:switch_path_are_legal} on $\W_a,S_a,T_a$ and $\Gamma_a$, and we just need to make sure that no robot in $\A$ can block the switch path $\gamma'_k$. As mentioned above, this is true in general, that is, for any path $\gamma\in \Gamma_a$ from $s\in S_a$ to $t\in T_a$, no robot in $\A$ can block $\gamma$, because the overlap between any such robot and any point on $\gamma$ is at most $\eps$. In other words, robots in $\A$ can only interrupt paths in $\Gamma_a$.

     The final step is using \Cref{thm:interrupted_path} to construct the clearance paths. Notice that this construction applies for any $p\in S\cup T$, and therefore it can be applied for the $\eps$-interrupted path, even if there were robots in all the start and target positions, except for the endpoints of the path.

     Lastly, notice that the concatenation of the paths that correspond to a robot $A$ from all the iterations of the algorithm results in a continuous path. This is because each clearance path starts and ends at the same start/target position.
\end{proof}

\begin{restatable}{theorem}{algoLength}\label{thm:algo_length_and_runtime}
    \Cref{alg:eps_overlap_algo} runs in $O(m^4+m^2n^2)$ time, and returns a solution $\Gamma^*$ such that $|\Gamma^*|=O(\OPT)$, where $\OPT$ is the size of an optimal solution.
\end{restatable}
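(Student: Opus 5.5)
The proof has two parts: a per-iteration accounting of the running time, and an amortized charging argument for the length that follows Solovey et al.~\cite{SYZH15}.

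\textbf{Running time.} The while-loop runs $m$ times. In each iteration the dominant step is computing the optimal-assignment path set $\Gamma$. The workspace in iteration $a$ is the original one minus at most $m$ disks of radius $1-\eps$. Its free space therefore has complexity $O(n+m)$: we add circular arcs, which are pairwise disjoint by \Cref{con:robot_separation_4-2eps}.
\begin{itemize}
\item \emph{Distances.} I compute a shortest-path map from each of the $O(m)$ start positions. Together with point-location queries for the targets, this gives all $m^2$ geodesic distances in $\widetilde{O}(m(n+m))$ time, or $O(m(n+m)^2)$ with a naive visibility-graph approach.
\item \emph{Matching.} A min-cost perfect matching (Hungarian algorithm) takes $O(m^3)$.
\item \emph{Blocking tests.} An $\eps$-interrupting target is found, and the last blocker of $\gamma_i$ determined, by testing every pair (position, path) for overlap greater than $\eps$. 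Each geodesic has $O(n+m)$ pieces, so this costs $O(m^2(n+m))$.
\item \emph{Motion.} Building the switch path and the clearance paths costs $O(m(n+m))$.
\end{itemize}
Summing over the $m$ iterations gives $O(m^4+m^2n^2)$. The terms $m^3 n$ are absorbed since $m^3 n\le m^4+m^2n^2$.

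\textbf{Length of the moving paths.} Let $\Gamma_a$ be the assignment computed in iteration $a$, and let $\gamma_A^{(a)}$ be the path of the robot moved in that iteration.
\begin{itemize}
\item \emph{No switch.} $\Gamma_a\setminus\{\gamma_i\}$ is a feasible assignment for iteration $a+1$, as shown in the correctness proof. By optimality, $|\gamma_A^{(a)}|+|\Gamma_{a+1}|\le|\Gamma_a|$.
\item \emph{Switch.} $(\Gamma_a\setminus\{\gamma_i,\gamma_k\})\cup\{\gamma'_i\}$ is feasible for iteration $a+1$, and $|\gamma'_i|+|\gamma'_k|<|\gamma_i|+|\gamma_k|+4-2\eps$. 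Hence $|\gamma_A^{(a)}|+|\Gamma_{a+1}|\le|\Gamma_a|+4$.
\end{itemize}
Telescoping over all iterations gives $\sum_a|\gamma_A^{(a)}|\le|\Gamma_1|+4m\le \OPT+4m$, because the optimal assignment ignores collisions and so $|\Gamma_1|\le\OPT$.

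Positions in $S\cup T$ are distinct and at distance at least $\rho\ge2$, so every robot must travel at least $2$ in any solution. Thus $\OPT\ge 2m$, and the bound above is $O(\OPT)$.

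\textbf{Length of the clearance paths (main obstacle).} This is the step I expect to need the most care. The plan is to charge each clearance motion in iteration $a$ to $|\gamma_A^{(a)}|$.
\begin{itemize}
\item By \Cref{clm:clearance_path_in_disk}(ii), robot $B$ at $b$ moves at most $|\gamma_A[w_1,w_2]|$ over each maximal subpath of $\gamma_A$ inside $D_2(b)$. Summing over its subpaths, $B$ moves at most the length of $\gamma_A\cap D_2(b)$.
\item It remains to bound how many positions $b$ can be within distance $2$ of a single point $x$. All positions in $S\cup T$, and the settled targets, are pairwise at least $2+\eps\ge2$ apart. A packing argument (disjoint unit disks inside $D_3(x)$) then gives at most a constant $c$ such positions.
\item Integrating over $\gamma_A$, the total clearance length in an iteration is at most $c|\gamma_A^{(a)}|$.
\end{itemize}
Altogether $|\Gamma^*|\le(1+c)(\OPT+4m)=O(\OPT)$.

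The subtle point is that settled robots may sit slightly displaced inside their revolving disks. This does not affect the count, since each is charged through its fixed center $b$ and the packing uses only the centers.
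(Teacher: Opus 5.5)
Your proposal is correct and follows essentially the same route as the paper. The direct paths are bounded by telescoping to $\OPT+(4-2\eps)m$, using feasibility of $\Gamma_a\setminus\{\gamma_i\}$ or $(\Gamma_a\setminus\{\gamma_i,\gamma_k\})\cup\{\gamma'_i\}$, and the clearance paths are charged to $|\gamma_A|$ via \Cref{clm:clearance_path_in_disk}(ii) and a constant packing bound $c$.

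You add two useful details the paper leaves implicit: the explicit per-step running-time breakdown (the paper defers to the implementation analysis of~\cite{SYZH15}), and the observation that $\OPT\ge 2m$, which is what justifies $(1+c)(\OPT+(4-2\eps)m)=O(\OPT)$. Your remark about settled robots being displaced is unnecessary, since every clearance path starts and ends at its position $b$.
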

\begin{proof}    
    To bound the sum of distances we separate it to two parts - (1) the distances traveled by robots while traveling from their starting position to a target position and (2) the distances of clearance paths.

    Starting with (1), let $\Gamma_a$ be the optimal assignment that the algorithm found in the $a$'th iteration and let $\Gamma^*_a$ be the paths assigned in the iterations before the $a$'th iteration (not include the clearance paths). We claim that $|\Gamma_{a+1}\cup\Gamma^*_{a+1}|\le|\Gamma_a\cup\Gamma^*_a|+4-2\eps$. To see that, let  $t_j$ be the $\eps$-interruption target found in the $a$'th iteration and let $s_i$ be the starting position such that the path $\gamma_i$ leads to $t_j$. If this path is $\eps$-interrupted then $\Gamma_{a+1}^*=\Gamma_a^*\cup\{\gamma_i\}$ and since $\Gamma_a\setminus\{\gamma_i\}$ is feasible assignment path set in the next iteration then $|\Gamma_{a+1}|\le|\Gamma_a\setminus\{\gamma_i\}|$ and indeed, $|\Gamma_{a+1}\cup\Gamma^*_{a+1}|\le|\Gamma_a\cup\Gamma^*_a|\le|\Gamma_a\cup\Gamma^*_a|+4-2\eps$.
    
    Otherwise $\gamma_i$ is $\eps$-blocked. Let $s_k$ be the last $\eps$-blocking target. Recall the switch paths defined  in \Cref{sec:switch_path}. In this case, $\Gamma_{a+1}^*=\Gamma_a^*\cup\{\gamma_k'\}$ and since $(\Gamma_a\setminus\{\gamma_i,\gamma_k\})\cup\{\gamma_i'\}$ is feasible assignment path set in the next iteration then $|\Gamma_{a+1}|\le|(\Gamma_a\setminus\{\gamma_i,\gamma_k\})\cup\{\gamma_i'\}|$ and we get $|\Gamma_{a+1}\cup\Gamma^*_{a+1}|\le|((\Gamma_a\setminus\{\gamma_i,\gamma_k\})\cup\{\gamma_i'\})\cup\Gamma_a^*\cup\{\gamma_k'\}|$. By the definition of the switch paths we get $|\gamma_i|+|\gamma_k|=|\gamma_i'|+2-\eps+|\gamma_k'|+2-\eps$ and we conclude that $|\Gamma_{a+1}\cup\Gamma^*_{a+1}|\le|\Gamma_a\cup\Gamma_a^*|+4-2\eps$.

    Applying the inequality $m$ times we get that $|\Gamma_m\cup\Gamma_m^*|\le|\Gamma_1\cup\Gamma_1^*|+(4-2\eps)m$ and since $\Gamma_1^*,\Gamma_m=\emptyset$ and since $|\Gamma_1|\le\OPT$ (as $\Gamma_1$ is the best assignment path set even when ignoring collision and $\OPT$ is a size of a assignment path set) we conclude that $|\Gamma_m^*|\le\OPT+(4-2\eps)m$.

    As for the extra distance traveled by the clearance paths, let $\gamma_A$ be the path that leads to the $\eps$-interrupting target at some iteration. Let $p$ be starting/target position occupied by a robot $B$ that interrupt $\gamma_A$. For a maximal subpath $\gamma_A[w_1,w_2]$ where $\|\gamma_A(w)-p\|\le2$ for every $w\in[w_1,w_2]$, the robot on $p$ traveled according to $\tilde\gamma_B[w_1,w_2]$. \Cref{clm:clearance_path_in_disk} shows that $\tilde\gamma_B[w_1,w_2]\le\gamma_A[w_1,w_2]$. We mark with $c$ the maximal number of robots that can interrupt $\gamma_A$ at a single location, that without any separation assumptions its $6$, and conclude that $\sum_{B\ne A}|\Tilde\gamma_B|\le c|\gamma_A|$. So by adding the extra distance traveled in the clearance paths in all the iterations we get that $|\Gamma^*|\le(1+c)\Gamma_m^*\le(1+c)(\OPT+(4-2\eps)m)=O(\OPT)$. Note that this analysis is not tight; see \Cref{rem:solution_size} for more details.
    
    Regarding the running time, note that except for computing the clearance paths in each iteration, the rest of the operations are changed version of operations of the algorithm in \cite{SYZH15} and the implementation analysis in \cite{SYZH15} provide valid analysis to our operations, yields $O\left(m^4+m^2n^2\right)$ for the changed operations. As for computing the clearance paths, assume in specific iteration that we have already the path $\gamma_A$ that lead a robot to an $\eps$-interruption target and we need to compute only the clearance paths. The path consist of segments and arcs and its complexity is bound by the complexity of $\F$ which is $O\left(m+n\right)$ (as in each iteration we add exactly one disk to $\O$). For each segment or arc, computing the clearance path corresponding to it for a single robot takes $O\left(1\right)$ time, and in total  $O\left(m+n\right)$ time. So for $m$ robots in $m$ iterations, the total running time for computing the clearance paths is $O\left(m^3+m^2n\right)$ which is absorbed in the $O\left(m^4+m^2n^2\right)$.
\end{proof}

We conclude with the following theorem.
\begin{theorem}\label{thm:opt_sep_bounds}
    Given a polygonal workspace $\W$ with $n$ edges, a set $S$ of $m$ starting positions and a set $T$ of $m$ target positions, there is an algorithm that computes in $\Tilde{O}(m^4+n^2m^2)$ time a collision-free motion plan of length $O(\OPT)$, where $\OPT$ is the length of an optimal solution, or reports that such a motion plan does not exist, assuming that $\rho=\max\{4-2\eps,2+\eps\}$ and $\omega=\max\{1+\eps,\sqrt{(3-\sqrt{3}-\eps)^2+1}\}$.
\end{theorem}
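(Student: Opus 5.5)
The theorem follows by combining the correctness theorem for \Cref{alg:eps_overlap_algo} with \Cref{thm:algo_length_and_runtime}. The only remaining work is to check that the stated $\rho$ and $\omega$ imply all five constraints, and to handle the case where no solution exists.

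\textbf{Step 1: the separation bounds give the five constraints.} Fix $0\le\eps\le1$. The bound $\omega=\max\{1+\eps,\sqrt{(3-\sqrt3-\eps)^2+1}\}$ gives \Cref{con:boundary_seperation_to_clear_R,con:boundary_seperation_to_clear_interruptions} directly. The bound $\rho=\max\{4-2\eps,2+\eps\}$ gives \Cref{con:robot_separation_4-2eps,con:robot_seperation_to_clear_interruptions} directly. For \Cref{con:robot_seperation_to_ensure_R}, I will show that $\max\{4-2\eps,2+\eps\}^2\ge \tfrac14+(3-\tfrac{\sqrt3}{2}-\eps)^2$ on $[0,1]$, as the paper hints.
\begin{itemize}
\item On $[0,2/3]$ the maximum is $4-2\eps$. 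The difference $(4-2\eps)^2-(3-\tfrac{\sqrt3}2-\eps)^2-\tfrac14$ is a quadratic in $\eps$ with leading coefficient $3$. At $\eps=0$ it equals $16-(2.134)^2-0.25>0$. Its derivative is $-16+8\eps+2(3-\tfrac{\sqrt3}{2}-\eps)$, which stays negative on the interval. So it suffices to check the endpoint $\eps=2/3$: there $(8/3)^2\approx7.11$, while the right side is $(1.467)^2+0.25\approx2.40$.
\item On $[2/3,1]$ the maximum is $2+\eps$. Here $2+\eps$ increases while $3-\tfrac{\sqrt3}2-\eps$ decreases. So the worst case is $\eps=2/3$, which is the same comparison as above.
\end{itemize}
These are routine numeric checks.

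\textbf{Step 2: feasibility of the input.} Compute the connected components of $\F$ in $O(n\log n)$ time. Locate every start and target position in its component. If some component contains unequal numbers of starts and targets, report that no motion plan exists. This is correct because robots never leave their component of $\F$. Otherwise the hypothesis of the correctness theorem holds, and \Cref{alg:eps_overlap_algo} returns a collision-free plan. The one subtle point is that the constraints must persist across iterations; that was already verified inside the correctness proof, using $\rho\ge4-2\eps$ to control the removed disks $D_{1-\eps}(t_j)$.

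\textbf{Step 3: length and running time.} \Cref{thm:algo_length_and_runtime} gives $|\Gamma^*|\le(1+c)(\OPT+(4-2\eps)m)$. I need this to be $O(\OPT)$, which requires $m=O(\OPT)$. This is the step I expect to be the main obstacle, because it can fail if many robots already sit on their targets, where $\OPT$ may be $0$. I would first try a preprocessing step: any robot whose start coincides with a target ($S\cap T$) is fixed and treated as already settled. Since $\rho>2$, every remaining start is at positive distance at least $\rho$ from every target, so each remaining robot travels at least $\rho\ge 8/3$ and hence $\OPT\ge \rho m'$, where $m'$ is the number of remaining robots. To make the fixed robots harmless, I would either remove $D_{1-\eps}(t)$ around each of them or note that \Cref{thm:interrupted_path} lets them act as interrupting robots, just like the settled robots in $\A$. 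The additive term then becomes $O(m')=O(\OPT)$. The running time $\Tilde O(m^4+n^2m^2)$ comes directly from \Cref{thm:algo_length_and_runtime}, plus the negligible preprocessing.
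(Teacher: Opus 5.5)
Your proposal is correct and follows the paper, which states this theorem as a direct consequence of the correctness theorem for \Cref{alg:eps_overlap_algo} together with \Cref{thm:algo_length_and_runtime}; the paper also notes that \Cref{con:robot_seperation_to_ensure_R} is dominated by the other two $\rho$-constraints when $\eps\le 1$. The only correction concerns Step 3: $\rho$ also bounds start--target distances, so $S\cap T=\emptyset$ and every robot travels at least $\rho\ge 4-2\eps$, which gives $(4-2\eps)m\le\OPT$ and hence $\OPT+(4-2\eps)m\le 2\OPT$ directly. Your preprocessing is therefore unnecessary, and it would not be justified anyway, because removing $D_{1-\eps}(t)$ around an arbitrary settled robot may block other paths (the paper licenses this removal only for $\eps$-interrupting targets).
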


\subsection{Optimizing the separation bounds}\label{sec:optimzing_bounds}
We now optimize the overlap parameter over \Cref{con:boundary_seperation_to_clear_R,con:robot_seperation_to_ensure_R,con:robot_separation_4-2eps,con:boundary_seperation_to_clear_interruptions,con:robot_seperation_to_clear_interruptions}, to achieve either the smallest obstacles-separation or the smallest robots-separation possible.
Recall that \Cref{con:boundary_seperation_to_clear_R,con:boundary_seperation_to_clear_interruptions} specify the required obstacles-separation bounds, while \Cref{con:robot_seperation_to_ensure_R,con:robot_separation_4-2eps,con:robot_seperation_to_clear_interruptions} specify the required robots-separation bounds (see \Cref{fig:constraints}).

We highlight three interesting values of $\eps$:
\begin{itemize}
    \item For $\eps=0$ the motion plan is monotone,  
    and the length of the solution is at most $\OPT+4m$.
    In this case, we obtain $\omega=\sqrt{13-6\sqrt{3}}\approx1.614$ and $\rho=4$.
    \item For $\eps>0$, the motion plan is weakly-monotone.
    \item For $\eps=\frac{15-6\sqrt{3}}{13}\approx0.354$, the obstacles-separation is minimized, and we obtain $\omega=1+\frac{15-6\sqrt{3}}{13}\approx1.354$ and $\rho=4-2\frac{15-6\sqrt{3}}{13}\approx3.291$.
    \item For $\eps=\frac{2}{3}$, the robots-separation is minimized, and we obtain $\omega=1\frac{2}{3}$ and $\rho=2\frac{2}{3}$.
\end{itemize}

\begin{figure}[h]
    \centering
    \includegraphics[scale=0.5]{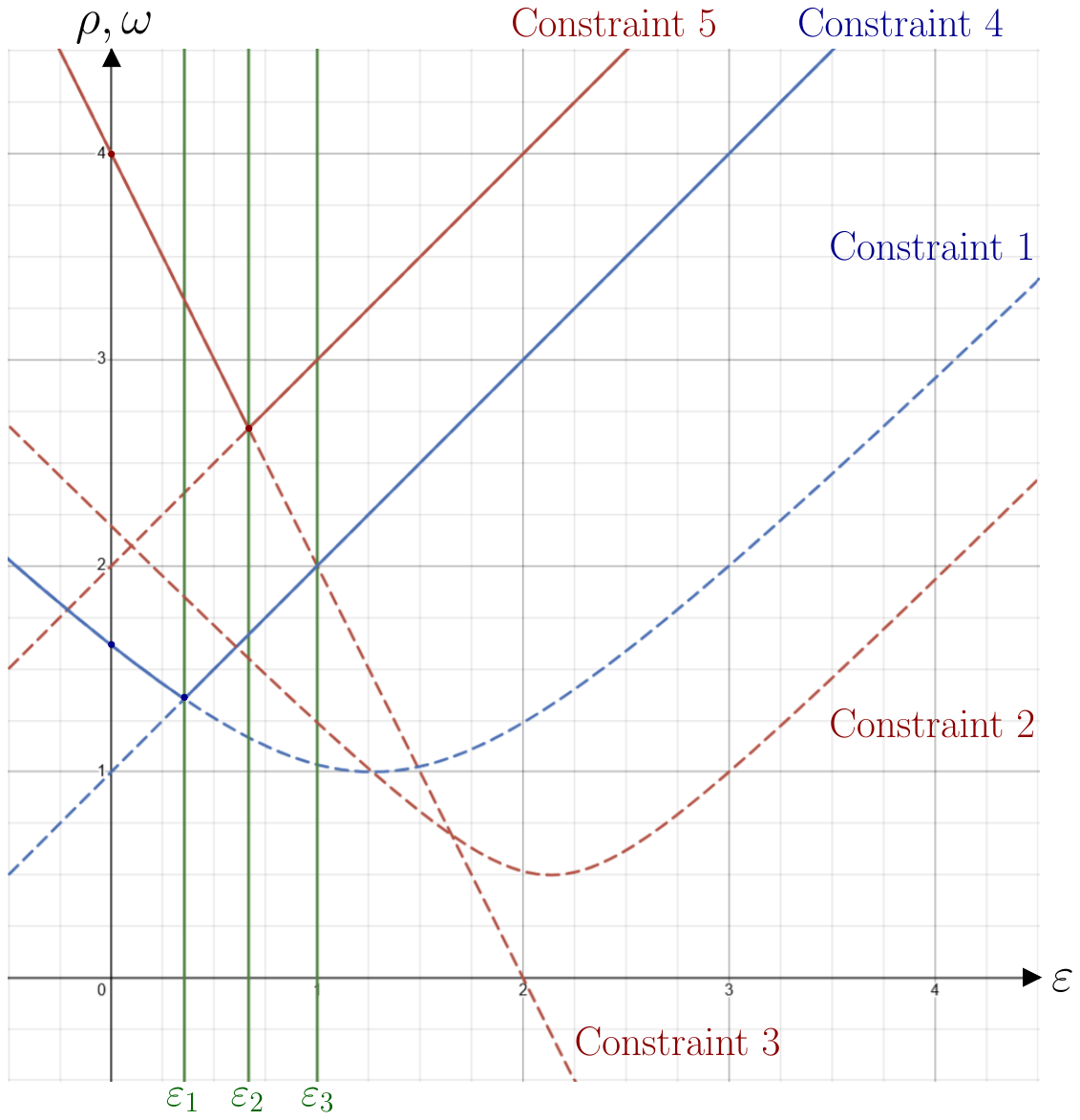}
    \caption{The constraints: in red are the constraint on $\rho$, and in blue the constraints on $\omega$. Three interesting values of are marked $\eps$: $\eps_1=\frac{15-6\sqrt{3}}{13}$,$\eps_2=\frac23$, and $\eps_3=1$ (see later in section \Cref{sec:improved}).}
    \label{fig:constraints}
\end{figure}

\subsection{A remark on the length of our solution}\label{rem:solution_size}
    Note that in fact, we can bound the size of the solution with the term $(1+c)(\OPT+(4-2\eps)m)$, where $c$ is the maximum number of start/target positions that may intersect a given unit disk under our robots-separation constraints. 
    The term $\OPT+(4-2\eps)m$ accounts for the motion of robots from their starting positions to their targets, and the $(1+c)$ factor comes from the clearance paths.

    For example, when the robots-separation is $\rho=2$ then at most $c=6$ start/target positions may intersect a given unit disk.
    When $\eps=0$ we have $\rho=4$, so no two start/target positions intersect a single unit disk, and therefore the size of the solution is $OPT+4m$ --- near optimal solution as in \cite{SYZH15}.
    
    We also note that this bound can be significantly reduced with a tighter analysis for two reasons: (1) even if a robot can collide with $c$ robots at some point, along most of its path it collides with fewer than $c$; and (2) our analysis of the additional distance traveled in the clearance paths assumes $\eps\le1$, but if $\eps\ll1$, the extra distance traveled during the clearance paths is much smaller.

\section{(Almost) Optimal Robot Separation}\label{sec:improved}
In this section we further develop the algorithm from the previous section by adding a new parameter, so that the robot separation can be reduced down to (almost\footnote{Recall that in our setting we have $\rho\ge 2$, however, by distinguishing bichromatic and monochromatic separation as in~\cite{BBBBFHKOS22}, one may be able to achieve better robot separation bounds.}) the minimal value, i.e., $\rho=2$. This would come at the cost of increasing the length of the solution and $\omega$.

Recall that \Cref{con:robot_seperation_to_ensure_R,con:robot_separation_4-2eps,con:robot_seperation_to_clear_interruptions} are the constraints related to $\rho$.
\Cref{con:robot_seperation_to_ensure_R} is needed in the proof of \Cref{lem:boundary_seperation_rho}, which is used in order to show that an $\eps$-blocking robot can move into the trace of the selected path and make a switch. Notice that this constraint is dominated by the maximum of \Cref{con:robot_separation_4-2eps,con:robot_seperation_to_clear_interruptions} for every $\eps\le 1$ (see \Cref{fig:constraints}).
\Cref{con:robot_separation_4-2eps} is needed for two reasons. The first is for the triangle inequality in the proof of \Cref{theorm:almost-standalone}, to show that an $\eps$-interrupting target always exists. The second is in \Cref{lem:switch_path_are_legal}, to show that the switch path is not $\eps$-blocked, and that the selected target is not $\eps$-blocking it.
\Cref{con:robot_seperation_to_clear_interruptions} is needed for the clearance paths to exist, i.e., in order for the other robots to move away from the selected $\eps$-interrupted path.

Notice that in \Cref{sec:clearance_path}, where we describe the clearance paths, only robots that $\eps$-interrupt the selected path are clearing the way, and the amount of ``separation'' that they need for this task is exactly what \Cref{con:robot_seperation_to_clear_interruptions} captures. However, we may allow other robots (that do not directly interrupt the path) to move away as well, therefore enable to clear the path with less separation required for each robot. In other words, the amount of separation is ``split'' between the robots that move in order to clear the path.

To this end, we introduce a new parameter $0\le \alpha\le 1$, which will be used in order to balance the amount of separation that we ``steal'' from the other robots in order to clear the selected path. We then replace  \Cref{con:robot_seperation_to_clear_interruptions} by
\begin{constraint}\label{con:robot_seperation_to_clear_interruptions_improved}
    $\rho\ge2+\alpha\eps$
\end{constraint}

For $\alpha=1$, \Cref{con:robot_seperation_to_clear_interruptions_improved} is equal to \Cref{con:robot_seperation_to_clear_interruptions}, and for $\alpha=0$, we get that \Cref{con:robot_seperation_to_clear_interruptions_improved} is $\rho\ge2$. Notice that \Cref{con:robot_seperation_to_ensure_R} is still dominated by the maximum of \Cref{con:robot_separation_4-2eps,con:robot_seperation_to_clear_interruptions_improved} for any value of $0\le\alpha\le1$ (and $\eps\le 1$).

The only change in our algorithm will be in the way we are dealing with $\eps$-interrupted paths, i.e., we replace the construction of clearance paths that was described in \Cref{sec:clearance_path} by the construction below.

\subsection{Clearing an interrupted path with less separation} \label{sec:clearance_path2}
Consider a robot $A$ traveling along an $\eps$-interrupted path $\gamma_A$ from start position $s\in S$ to some target position $t\in T$.
Assuming \Cref{con:boundary_seperation_to_clear_R,con:robot_seperation_to_ensure_R,con:robot_separation_4-2eps,con:boundary_seperation_to_clear_interruptions,con:robot_seperation_to_clear_interruptions_improved}, we describe simple collision-free motion paths for (some of) the robots to clear the way for the robot $A$, and we analyze the extra distance traveled by them.

Note that in this modified version, our motion plan may not be weakly-monotone: a robot that $\eps$-overlaps $\gamma_A$ has to move a distance of $\eps$ in order to clear the path, however, for $\alpha<1$, the disk of radius $2+\eps$ around it is not guaranteed to be empty of other robots.

Let $\gamma_A(w)$ be the location of $A$ on $\gamma_A$ at time $w$, then $\gamma_A(0)=s$ and $\gamma_A(1)=t$.
Consider the overlap at time $w$ between the current position of $A$ and the closest start/target position, 
denoted 
$$
\delta(w)=\max\left\{0,2-\min_{p\in S\cup T}\|\gamma_A(w)-p\|\right\}.
$$

Consider another robot $B$ positioned at some start/target position $b\in S\cup T$. We describe a motion plan for $B$ by constructing a \dfn{clearance path} $\Tilde\gamma_B:[0,1]
\to\reals^2$, along which the robot $B$ will travel, as follows.

We first define the function $f_b(w)$ to return the amount of movement needed for the robot $B$ to clear the way for $A$, depending on the distance from its original position $b$ to the position of $A$ at time $w$, and the maximal overlap at time $w$, as follows:
$$f_b(w)= \frac{\alpha\eps}{2+\alpha\eps}(2-\delta(w)-\|\gamma_A(w)-b\|)+\delta(w).$$
Notice that $\|\gamma_A(w)-b\|\ge 2-\delta(w)$, because $2-\delta(w)$ is the distance to the closest start/target position, and thus the first addend in $f_b(w)$ is non-positive. Therefore, if $\delta(w)=0$ (i.e., there is no start/target position overlapping with $A$ at time $w$), then $f_b(w)\le0$. 
If $\delta(w)>0$, then $f_b(w)>0$ whenever $\|\gamma_A(w)-b\|<2+\frac{2}{\alpha\eps}\delta(w)$.

We then define the \dfn{clearance path} $\Tilde\gamma_B:\to\reals^2$ for $B$ using a polar coordinate system, where $b$ is the pole. 
Denote by $r(w)$ the radius and by $\theta(w)$ the angle of the point $\gamma_A(w)$ w.r.t. the pole $b$. That is, $r(w)=\|\gamma_A(w)-b\|$, is the distance between the position of $A$ at time $w$, and $b$ (the start/target position on which $B$ was initially standing).
Let 
$$\Tilde\gamma_B(w):=\left(\max\{f_b
(w),0\},\theta(w)+\pi\right).$$

Let $\gamma_A(w_1,w_2)$ be a maximal subpath where  $f_b(w)>0$ for every $w\in[w_1,w_2]$, that is, when $A$ travels along $\gamma[w_1,w_2]$, the robot $B$ will potentially move. Note that there could be more than one such maximal subpath for $B$.
The next lemma shows that the clearance path of $B$ is feasible and relatively short.
\begin{restatable}{lemma}{clearancePath}\label{clm:clearance_path_in_disk_improved}
     $\Tilde\gamma_B[w_1,w_2]$ 
     is a continuous path that starts and ends at $b$, such that (i) $\Tilde\gamma_B[w_1,w_2]\subset D_{\eps}(b)\subset \F$, and (ii) $|\Tilde\gamma_B[w_1,w_2]|\le \sqrt{2}|\gamma_A[w_1,w_2]|$ for $0\le \eps\le1$.
\end{restatable}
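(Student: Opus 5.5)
The plan follows the proof of \Cref{clm:clearance_path_in_disk}, adapted to the new radius function $g(w)=\max\{f_b(w),0\}$.

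\emph{Continuity and endpoints.} First I would observe that $\delta(w)$ is continuous in $w$. It is the positive part of $2$ minus a minimum of finitely many continuous distance functions. Hence $f_b$ is continuous. On a maximal subpath $[w_1,w_2]$ where $f_b>0$, the radius $g=f_b$ is continuous and positive, so $b$ is never reached in the interior. Then $\theta(w)$ is well defined and continuous, because $r(w)\ge 2-\delta(w)\ge 1>0$: since $b$ is interrupting, $\delta\le\eps\le 1$. At the endpoints $w_1,w_2$, maximality gives $f_b=0$, so $\Tilde\gamma_B(w_1)=\Tilde\gamma_B(w_2)=b$. Thus the path is continuous and starts and ends at $b$.

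\emph{Containment (i).} The first addend of $f_b$ is nonpositive, so $f_b(w)\le\delta(w)$. Since $\gamma_A$ is $\eps$-interrupted by every start/target position (it is the selected path), $\delta(w)\le\eps$. I would treat the equality case $\delta(w)=\eps$ with a strict-inequality caveat, or interpret $D_\eps$ as closed, as in the earlier lemma. Hence the radius of $\Tilde\gamma_B(w)$ is at most $\eps$, so $\Tilde\gamma_B\subset D_\eps(b)$. By \Cref{con:boundary_seperation_to_clear_interruptions}, $D_{1+\eps}(b)\subset\W$, which gives $D_\eps(b)\subset\F$.

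\emph{Length (ii).} This is the main step. I would write the speed in polar coordinates as
\[
|\Tilde\gamma_B'|^2 = g'(w)^2 + g(w)^2\theta'(w)^2,
\]
and compare it with $|\gamma_A'|^2 = r'(w)^2 + r(w)^2\theta'(w)^2$.

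The angular term is easy: $g\le\eps\le1\le r$, so $g|\theta'|\le r|\theta'|\le|\gamma_A'|$.

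For the radial term, $g'=\frac{\alpha\eps}{2+\alpha\eps}(-\delta'-r')+\delta'$. Here $\delta(w)=2-\|\gamma_A(w)-p\|$ for the currently closest $p$, so $\delta$ is Lipschitz with $|\delta'|\le|\gamma_A'|$ almost everywhere (a minimum of 1-Lipschitz functions). Likewise $|r'|\le|\gamma_A'|$. Setting $c=\frac{\alpha\eps}{2+\alpha\eps}\le\frac13$, we get
\[
|g'| = |(1-c)\delta' - c\,r'| \le (1-c)|\gamma_A'| + c|\gamma_A'| = |\gamma_A'|.
\]
Together the two terms give $|\Tilde\gamma_B'|^2\le 2|\gamma_A'|^2$, and integrating over $[w_1,w_2]$ yields the factor $\sqrt2$.

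The obstacle I expect is rigor around the non-differentiability of $\delta$, which happens where the closest position switches or where $\delta$ hits $0$. I would handle this by using Lipschitz continuity and almost-everywhere differentiability (Rademacher), and computing length as $\int|\Tilde\gamma_B'|$.
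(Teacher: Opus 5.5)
Your proposal is correct and follows the paper's proof essentially step for step. The shared steps are continuity of $f_b$ and $\theta$, containment via $f_b\le\delta\le\eps$ together with the obstacles-separation constraint, and the polar comparison using $f_b\le r$ for the angular term and $|f_b'|\le(1-c)|\delta'|+c|r'|\le\|\gamma_A'\|$ for the radial term, which yields the $\sqrt{2}$ factor. Your Lipschitz and almost-everywhere-differentiability treatment replaces the paper's partition into smooth pieces, and your closed-disk caveat is a fair refinement.

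One small point needs fixing. Maximality gives $f_b(w_1)=f_b(w_2)=0$ only when the interval does not reach $w=0$ or $w=1$. The paper closes this by noting $\delta(0)=\delta(1)=0$, since no other start/target position overlaps $A$ at $s$ or $t$. Hence $f_b\le 0$ at both ends of $[0,1]$.
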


\begin{proof}
Observe that indeed $\Tilde\gamma_B[w_1,w_2]$ is a continuous path because the functions $f_b(w)$ and $\theta(w)$ are continuous. 
For $w\in\{0,1\}$ we have $\delta(w)=0$ (there is no overlap when $A$ is on $s$ or $t$), and therefore $f_b(w)\le 0$ and $\Tilde\gamma_B(0)=\Tilde\gamma_B(1)=b$.
Now, since $f_b(w)$ is continuous, 
and $[w_1,w_2]\subseteq[0,1]$ is a maximal range in which $f_b(w)\ge 0$, it must be that $f_b(w_1),f_b(w_2)=0$ and hence $\Tilde\gamma_B(w_1)=\Tilde\gamma_B(w_2)=b$, so $\Tilde{\gamma}_B(w)[w_1,w_2]$ starts and ends at $b$.

To prove (i), observe that since the first addend of $f_b(w)$ is non-positive, we have $f_b(w)\le\delta(w)$.
Because $\gamma_A$ is $\eps$-interrupted, the overlap $\delta(w)$ at a given moment is at most $\eps$. We conclude that $\max\{f_b(w),0\}\le \eps$ and therefore the robot $B$ remains in a disk of radius $\eps$ centered at $b$, i.e., $\Tilde\gamma_B[w_1,w_2]\subset D_{\eps}(b)$. By \Cref{con:boundary_seperation_to_clear_interruptions}, we have $D_{1+\eps}(b)\subset \W$ and hence $D_{\eps}(b)\subset \F$.

To prove (ii), first recall that $\gamma_A(w)$ is a concatenation of segments and arcs, so it is differentiable except for finitely many points. For simplicity, we assume that $\gamma_A(w)[w_1,w_2]$ is smooth, as otherwise we can partition $[w_1,w_2]$ to ranges for which $\gamma_A(w)$ is smooth and the analysis remains the same on each part. 

By definition, we have $\gamma_A(w)=(r(w),\theta(w))$, and the length of $\gamma_A[w_1,w_2]$ is 
$\int_{w_1}^{w_2}\|\gamma_A'(w)\|dw$ were $\|\gamma_A'(w)\|=\sqrt{r'(w)^2+r(w)^2\theta'(w)^2}$.
Similarly, since $f_b(w)\ge0$ for $w\in[w_1,w_2]$, we have $\Tilde\gamma_B(w)=(f_b(w),\theta(w)+\pi)$ for $w\in[w_1,w_2]$, and the length of $\Tilde\gamma_B[w_1,w_2]$ is
$\int_{w_1}^{w_2}\|\Tilde\gamma_B'(w)\|dw=\int_{w_1}^{w_2}\sqrt{f_b'(w)^2+f_b(w)^2\theta'(w)^2}dw$.

We first compare the angular components. Recall that 
$f_b(w)\le\delta(w)\le\eps$, and that $r(w)\ge 2-\eps\ge\eps$, and hence $f_b(w)\le r(w)$.
Therefore,
$f_b(w)^2\theta'(w)^2 \le r(w)^2\theta'(w)^2\le\|\gamma_A'(w)\|^2$.

Next, we compare the radial components. Denote $c=\frac{\alpha\eps}{2+\alpha\eps}$, then  
\[f_b(w)= c(2-\delta(w)-r(w))+\delta(w),\]
and therefore $|f_b'(w)|=|(1-c)\delta'(w)-c r'(w)|$.
Since $0<c<1$, we get 
\[|f_b'(w)| \le(1-c)|\delta'(w)|+c| r'(w)|.\] 
Now, since $\delta'(w)$ and $r'(w)$ each describe the change in the distance between some point and the robot $A$ at time $w$, we have that $|\delta'(w)|,|r'(w)|\le\|\gamma_A'(w)\|$, and thus $f_b'(w)^2\le \|\gamma_A'(w)\|^2$.

Combining the two inequalities, we get
$\|\Tilde\gamma_B'(w)\|^2
=
f_b'(w)^2+f_b(w)^2\theta'(w)^2
\leq
2\|\gamma_A'(w)\|^2$.
Thus, for every $w\in[w_1,w_2]$ we have
$\|\Tilde\gamma_B'(w)\| \le \sqrt{2}\|\gamma_A'(w)\|$.

Integrating over $[w_1,w_2]$ yields
\[
|\Tilde\gamma_B[w_1,w_2]|
=
\int_{w_1}^{w_2}\|\widetilde\gamma_B'(w)\|dw
\leq
\int_{w_1}^{w_2}\sqrt{2}\|\gamma_A'(w)\|dw
=
\sqrt{2}|\gamma_A[w_1,w_2]|.
\]
\end{proof}

We construct such a clearance path for any other robot that interrupts $\gamma_A$. 
\Cref{thm:interrupted_path_improved} shows that clearance paths are collision-free, assuming \Cref{con:boundary_seperation_to_clear_R,con:robot_seperation_to_ensure_R,con:robot_separation_4-2eps,con:boundary_seperation_to_clear_interruptions} and \Cref{con:robot_seperation_to_clear_interruptions_improved}. We begin with the following auxiliary lemma.

\begin{restatable}{theorem}{clearancePathNoCollision}\label{thm:interrupted_path_improved}
    Let $A$ be a robot traveling along a path $\gamma_A$ from start position $s\in S$ to some target position $t\in T$, and let $\{\Tilde\gamma_B\mid B\ne A\}$ be the set of clearance paths defined for the other robots. If \Cref{con:boundary_seperation_to_clear_R,con:robot_seperation_to_ensure_R,con:robot_separation_4-2eps,con:boundary_seperation_to_clear_interruptions,con:robot_seperation_to_clear_interruptions_improved} hold for $0\le \eps\le 1$, then the motion plan that corresponds to the paths $\gamma_A\cup \{\Tilde\gamma_B\mid B\ne A\}$ is collision-free.
\end{restatable}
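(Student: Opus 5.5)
The plan is to verify the two kinds of pairwise non-collision separately. First, $A$ against any other robot $B$. Second, two moving robots $B,C\neq A$ against each other. Throughout, fix a time $w$, write $x=\gamma_A(w)$, $\delta=\delta(w)$ and $c=\frac{\alpha\eps}{2+\alpha\eps}$.

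The structural fact I will use is that $\Tilde\gamma_B(w)$ lies on the ray from $x$ through $b$, at distance $R_b:=r_b+\max\{f_b(w),0\}$ from $x$, where $r_b=\|x-b\|$. When $f_b(w)>0$ we have
\[
R_b=g(r_b),\qquad g(r):=(1-c)\,r+K,\qquad K:=c(2-\delta)+\delta .
\]
Note that $1-c=\frac{2}{2+\alpha\eps}$.

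\textbf{$A$ versus $B$.} The distance between $A$ and $B$ at time $w$ is exactly $R_b$.
\begin{itemize}
\item If $f_b(w)>0$, use $r_b\ge 2-\delta$ (by definition of $\delta$). This gives $R_b\ge (1-c)(2-\delta)+c(2-\delta)+\delta=2$.
\item If $f_b(w)\le 0$, then $(1-c)r_b\ge (1-c)(2-\delta)+\delta$, i.e. $r_b\ge 2-\delta+\delta/(1-c)\ge 2$, and $B$ sits at $b$.
\end{itemize}
So $A$ never collides with any other robot. Membership of all paths in $\F$ is already given by \Cref{clm:clearance_path_in_disk_improved}(i).

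\textbf{$B$ versus $C$, both moving.} Both positions are images under the radial map $\Phi(y)=x+g(\|y-x\|)\frac{y-x}{\|y-x\|}$. I would decompose $\Phi$ into two steps: a homothety about $x$ with ratio $1-c$, followed by pushing every point outward along its ray from $x$ by the same amount $K\ge 0$.
\begin{enumerate}
\item The homothety maps $b,c$ to points at distance $(1-c)\|b-c\|\ge \frac{2}{2+\alpha\eps}(2+\alpha\eps)=2$, using \Cref{con:robot_seperation_to_clear_interruptions_improved}.
\item Pushing two points outward by a common amount $K$ does not decrease their distance. With radii $a,a'$ and angle $\theta$ between the rays,
\[
d^2=(a-a')^2+2(a+K)(a'+K)(1-\cos\theta),
\]
which is nondecreasing in $K$.
\end{enumerate}
Hence $\|\Tilde\gamma_B(w)-\Tilde\gamma_C(w)\|\ge 2$.

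\textbf{Mixed case (main obstacle).} Here $f_b(w)>0$ but $f_c(w)\le 0$, so $C$ stays at $c$ with $R_c=r_c\ge g(r_c)$. The difficulty is that enlarging one radius alone is not monotone: $\partial_{R_c}d^2=2(R_c-R_b\cos\theta)$ can be negative. I would handle this in order of preference.
\begin{enumerate}
\item First try showing that $R_c\ge R_b\cos\theta$ always holds in this case. The clipping forces $r_c\ge 2-\delta+\delta/(1-c)$, whereas $R_b\le r_b+\delta$. If true, $d$ is increasing on $[g(r_c),r_c]$ and the previous bound transfers.
\item If that fails, bound directly: $B$ moves at most $f_b(w)\le\delta$ from $b$, radially away from $x$. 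Compute $\|b'-c\|^2$ via the law of cosines in the triangle $x,b,c$. Then combine $\|b-c\|\ge 2+\alpha\eps$ with $r_c$ being large compared to $r_b$, as implied by $f_c\le 0<f_b$ and $f$ decreasing in $r$.
\end{enumerate}
In either route, the pairwise analysis at every $w$, together with continuity of all paths, gives that the whole motion is collision-free.
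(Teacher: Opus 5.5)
Your plan follows the paper's proof. You use the same three-way case split and the same computation for $A$ versus $B$. Your homothety-plus-common-push decomposition is exactly \Cref{lem:cosin} with $k=-c$. Your route~1 for the mixed case (slide $C$'s radius from $g(r_c)$ out to $r_c$ and use monotonicity of the distance) is \Cref{fact:r1r2r3}.

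\textbf{The gap.} The mixed case, which you yourself flag as the main obstacle, is never closed.
\begin{itemize}
\item For the distance to be increasing on $[g(r_c),r_c]$, you need $\lambda\ge R_b\cos\theta$ at the \emph{lower} endpoint $\lambda=g(r_c)$, not at $r_c$.
\item The two facts you cite, $r_c\ge 2-\delta+\delta/(1-c)$ and $R_b\le r_b+\delta$, never compare $r_b$ with $r_c$, so they cannot give this. Even adding $r_b<r_c$ yields only $R_b<r_c+\delta$.
\item Route~2 does not work as stated. Pushing $B$ outward changes its squared distance to $c$ by $f_b(2r_b+f_b-2r_c\cos\theta)$. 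This is negative when $c$ lies further out on a nearby ray.
\item The crude bound $\|b-c\|-f_b\ge 2+\alpha\eps-\delta$ drops below $2$ once $\delta>\alpha\eps$. That is exactly the regime $\alpha<1$ this section is about.
\end{itemize}

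\textbf{The missing step.} First, fix a slip. Here $c=\frac{\alpha\eps}{2+\alpha\eps}$ is the constant. The condition $f_b\le 0$ is equivalent to $c\,r_b\ge c(2-\delta)+\delta$, not to $(1-c)r_b\ge(1-c)(2-\delta)+\delta$. So the threshold is $r^*=2-\delta+\delta/c=2+\frac{2\delta}{\alpha\eps}$. The slip is harmless for $A$ versus $B$, since you still get $r_b\ge 2$.

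Now observe that $r^*$ is the fixed point of the increasing map $g$. The condition $f_b>0\ge f_c$ means $r_b<r^*\le r_c$. Hence
\[
R_b=g(r_b)<r^*\le g(r_c)\le r_c .
\]
So every $\lambda\in[g(r_c),r_c]$ satisfies $\lambda>R_b\ge R_b\cos\theta$, and your derivative $2(\lambda-R_b\cos\theta)$ is positive there. Therefore $\|\Tilde\gamma_B(w)-c\|\ge\|\Phi(b)-\Phi(c)\|\ge 2$ by your both-moving bound.

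This is precisely the chain
\[
r(b)+f(r(b))\le 2+\tfrac{2\delta(w)}{\alpha\eps}<r(c)+f(r(c))<r(c)
\]
in the paper's third case. You should also list the trivial case $f_b,f_c\le 0$, where both robots stay put and $\|b-c\|\ge 2+\alpha\eps\ge 2$.
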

\begin{proof}
    Let $B\ne A$ be a robot placed on $b\in S\cup T$. First, we show that the motion plan that corresponds to the paths $\Tilde\gamma_B$ and $\gamma_A$ is collision-free. 
    Observe that by the construction of $\tilde\gamma_B$, for every $w\in[0,1]$ the point $b$ lies on the line segment connecting $\tilde\gamma_B(w)$ and $\gamma_A(w)$. Therefore, for every $w\in[0,1]$ it holds that 
    \begin{align*}
        \|\gamma_A(w)-\tilde\gamma_B(w)\|&=\|\gamma_A(w)-b\|+\|\tilde\gamma_B(w)-b\|\\
        &=r(w)+f_b(w)=r(w)+\frac{\alpha\eps}{2+\alpha\eps}(2-\delta(w)-r(w))+\delta(w)\\
        &=2-(1-\frac{\alpha\eps}{2+\alpha\eps})(2-\delta(w)-r(w)).
    \end{align*}
    Recall that $r(w)\ge 2-\delta(w)$, and because $\eps,\alpha\in[0,1]$, we have $\frac{\alpha\eps}{2+\alpha\eps}\in[0,1]$. Therefore $(1-\frac{\alpha\eps}{2+\alpha\eps})(2-\delta(w)-r(w))\le 0$, and we have $\|\tilde\gamma_B(w)-\gamma_A(w)\|\ge2$. 

     Next, we show that a robot $C\notin\{A,B\}$ traveling along its clearance paths $\Tilde\gamma_C$, does not collide with $B$. In other words, we show that for every $w\in[0,1]$, we have $\|\Tilde\gamma_B(w)-\Tilde\gamma_C(w)\|\ge 2$.
     Denote by $c\in S\cup T$ the initial location of the robot $C$, and recall that $\Tilde\gamma_B$ and $\Tilde\gamma_C$ were defined using a polar coordinate systems centered at $b$ and $c$, respectively. For the next part of the proof, we move to a global coordinate system, as follows. 
     Fix some $w\in[0,1]$, and consider the polar coordinate system where $\gamma_A(w)$ is the pole. Then, for a point $p$ in the Cartesian coordinate system, $r(p)$ and $\theta(p)$ are the radial and angular components relative to $\gamma_A(w)$. Let 
        $$f(x)= \frac{\alpha\eps}{2+\alpha\eps}(2-\delta(w)-x)+\delta(w),$$
     then in this coordinate system, we have 
     $\Tilde\gamma_B(w)=(r(b)+\max\{0,f(r(b))\},\theta(b))$ and 
     $\Tilde\gamma_C(w)=(r(c)+\max\{0,f(r(c))\},\theta(c))$.          
     
     Assume w.l.o.g. that $f(r(b))\ge f(r(c))$.
     There are three cases:
     \begin{enumerate}
         \item 
            If $f(r(b))<0$ and $f(r(c))<0$, then $\Tilde\gamma_B(w)=(r(b),\theta(b))$ and $\Tilde\gamma_C(w)=(r(c),\theta(c))$, and we are done because $\|b-c\|\ge2$.
         \item 
             Else, if $f(r(b))\ge 0$ and $f(r(c))\ge0$. Recall that by \Cref{con:robot_seperation_to_clear_interruptions_improved} we have $\|b-c\|\ge 2+\alpha\eps$.
             Consider the transformation $T((x,\theta))=(x+f(x),\theta)$, then we need to show that $\|T((r(b),\theta(b)))-T((r(c),\theta(c)))\|\ge 2$.
             Denote $k=-\frac{\alpha\eps}{2+\alpha\eps}$, then $f(x)= kx-k(2-\delta(w))+\delta(w)$.
             Notice that $k\ge -1$, and $K=-k(2-\delta(w))+\delta(w)\ge0$.
             Therefore, by \Cref{lem:cosin} below we have $$\|T((r(b),\theta(b)))-T(r(c),\theta(c)))\|\ge (1-\dfrac{\alpha\eps}{2+\alpha\eps})\|b-c\|\ge\dfrac{2}{2+\alpha\eps}(2+\alpha\eps)=2$$ 
        \item
            Else, we have $f(r(b))\ge 0$ and $f(r(c))<0$. Note that in that case $$r(b)+f(r(b))\le2+\dfrac{2\delta(w)}{\alpha\eps}<r(c)+f(r(c))<r(c)$$
            Hence, by \Cref{fact:r1r2r3} below we get
            \begin{align*}
                \|\Tilde\gamma_B(w)-\Tilde\gamma_C(w)\|&=\|(r(b)+f(r(b)),\theta(b))-(r(c),\theta(c))\|\\
                &\ge\|(r(b)+f(r(b)),\theta(b))-(r(c)+f(r(c)),\theta(c))\|
            \end{align*}
            and identically to the previous case, we apply \Cref{lem:cosin} below and get $$\|(r(b)+f(b),\theta(b))-(r(c)+f(c),\theta(c))\|=\|T((r(b),\theta(b)))-T(r(c),\theta(c)))\|\ge2$$

     \end{enumerate}
\end{proof}

\begin{claim}\label{fact:r1r2r3}
    Let $0<s\le r'\le r$ and $\theta_s,\theta_r\in[0,2\pi]$ then $\|(s,\theta_s)-(r',\theta_r)\|\le\|(s,\theta_s)-(r,\theta_r)\|$
\end{claim}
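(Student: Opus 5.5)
The plan is to prove the claim directly with the law of cosines. Write $\phi=\theta_s-\theta_r$. For any radius $x\ge 0$,
\[
g(x):=\|(s,\theta_s)-(x,\theta_r)\|^2=s^2+x^2-2sx\cos\phi .
\]
The claim says $g(r')\le g(r)$ whenever $s\le r'\le r$. It therefore suffices to show that $g$ is non-decreasing on $[s,\infty)$.

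\textbf{Step 1 (monotonicity via the derivative).} We have $g'(x)=2x-2s\cos\phi=2(x-s\cos\phi)$. Since $\cos\phi\le 1$ and $s>0$, for every $x\ge s$ we get
\[
x-s\cos\phi\ \ge\ x-s\ \ge\ 0 .
\]
Hence $g'(x)\ge 0$ on $[s,\infty)$, and $g$ is non-decreasing there.

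\textbf{Step 2 (conclude).} Since $r'\ge s$, monotonicity gives $g(r')\le g(r)$. Taking square roots, which are monotone on nonnegative reals, yields the stated inequality.

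\textbf{Alternative without calculus.} One can compute the difference directly:
\[
g(r)-g(r')=(r-r')(r+r'-2s\cos\phi).
\]
The first factor is $\ge 0$ because $r\ge r'$. The second factor is $\ge 0$ because $r+r'\ge 2s\ge 2s\cos\phi$.

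\textbf{Main obstacle.} There is essentially no difficulty here. The only thing to check is that the hypothesis $s\le r'$ (rather than merely $r'\le r$) is what puts $r'$ beyond the vertex $x=s\cos\phi$ of the parabola $g$, for every angle $\phi$. Without it the inequality can fail, for example when $\phi=0$ and $r'<s$.
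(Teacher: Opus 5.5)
Your proof is correct, and your ``alternative without calculus'' is exactly the paper's argument. The paper expands both squared distances by the law of cosines and factors the difference as $(r-r')(r+r'-2s\cos\phi)\ge(r-r')(r+r'-2s)\ge 0$; your derivative version shows the same monotonicity of the quadratic in $x$ in a different form. You correctly end with $\ge 0$, whereas the paper's final strict inequality $>0$ fails when $r=r'$ (only the non-strict version is needed).
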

\begin{proof}
    Let $\theta=\theta_s-\theta_r$.
    \begin{align*}
        &\|(s,\theta_s)-(r,\theta_r)\|^2-\|(s,\theta_s)-(r',\theta_r)\|^2\\
        &=s^2+r^2-2sr\cos(\theta)-s^2-r'^2+2sr'\cos(\theta)\\
        &=(r-r')(r+r')-2s(r-r')\cos(\theta)=(r-r')(r+r'-2s\cos(\theta))\\
        &\ge(r-r')(r+r'-2s)>0
    \end{align*}
\end{proof}

\begin{claim}\label{lem:cosin}
    Let $f(x)=kx+K$, $k\ge-1$, and $K\ge0$. 
    Consider a transformation $T:\reals^2\setminus\{0\}\to\reals^2$ such that in the polar coordinate system we have $T((x,\theta))=(x+f(x),\theta)$. 
    Then for any two points $p_1,p_2\in\reals^2\setminus\{0\}$, it holds that $\|T(p_1)-T(p_2)\|\ge(1+k)\|p_1-p_2\|$.
\end{claim}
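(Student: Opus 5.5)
The plan is to compute both distances directly with the law of cosines in polar coordinates centred at the origin. Write $p_1=(x_1,\theta_1)$ and $p_2=(x_2,\theta_2)$ with $x_1,x_2>0$, and set $\varphi=\theta_1-\theta_2$. By definition $T(p_i)=(a_i,\theta_i)$, where
\[
a_i=x_i+f(x_i)=(1+k)x_i+K .
\]
Since $k\ge -1$ and $K\ge 0$, we have $a_i\ge 0$. So $T(p_i)$ is a genuine point at radius $a_i$ in direction $\theta_i$, and the law of cosines applies to it as well.

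The key step is to use the identity $u^2+v^2-2uv\cos\varphi=(u-v)^2+2uv(1-\cos\varphi)$ for both pairs. This gives
\[
\|p_1-p_2\|^2=(x_1-x_2)^2+2x_1x_2(1-\cos\varphi),
\qquad
\|T(p_1)-T(p_2)\|^2=(a_1-a_2)^2+2a_1a_2(1-\cos\varphi).
\]
The two terms are compared separately.

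For the radial term, the constant $K$ cancels, so $a_1-a_2=(1+k)(x_1-x_2)$ and $(a_1-a_2)^2=(1+k)^2(x_1-x_2)^2$ exactly.

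For the angular term, use $1+k\ge 0$, $K\ge 0$ and $x_i>0$ to expand
\[
a_1a_2=(1+k)^2x_1x_2+(1+k)K(x_1+x_2)+K^2\ge(1+k)^2x_1x_2 .
\]
Since $1-\cos\varphi\ge 0$, it follows that $2a_1a_2(1-\cos\varphi)\ge(1+k)^2\cdot 2x_1x_2(1-\cos\varphi)$.

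Adding the two comparisons gives $\|T(p_1)-T(p_2)\|^2\ge(1+k)^2\|p_1-p_2\|^2$. Taking square roots, which is valid because $1+k\ge 0$, yields the claim.

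I do not expect a real obstacle here. The only points needing care are that the transformed radii $a_i$ are nonnegative, so the polar description of $T(p_i)$ is legitimate, and that the decomposition into a radial term and a nonnegative angular term lets the hypotheses $k\ge -1$ and $K\ge 0$ be used monotonically. Splitting the squared distance this way, rather than working with $\cos\varphi$ directly, is what avoids any case analysis on the sign of $\cos\varphi$.
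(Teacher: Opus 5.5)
Your proposal is correct and follows essentially the same route as the paper: the law of cosines written as a radial term $(a_1-a_2)^2=(1+k)^2(x_1-x_2)^2$ plus an angular term $2a_1a_2(1-\cos\varphi)$, followed by the bound $a_1a_2\ge(1+k)^2x_1x_2$, which uses $K\ge 0$ and $1+k\ge 0$. Your explicit remarks that $a_i\ge 0$ and that taking square roots needs $1+k\ge 0$ fill in details the paper leaves implicit.
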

\begin{proof}
        Denote $p_1=(r_1,\theta_1)$ and $p_2=(r_2,\theta_2)$. By the law of cosines,
        \begin{align*}
            \|T(p_1)&-T(p_2)\|^2=\\        
            &=(r_1+f(r_1)-r_2-f(r_2))^2+2(r_1+f(r_1))(r_2+f(r_2))(1-\cos(\theta_1-\theta_2))\\
            &=(1+k)^2(r_1-r_2)^2+2((1+k)r_1+K)((1+k)r_2+K)(1-\cos(\theta_1-\theta_2))\\
            &\ge(1+k)^2(r_1-r_2)^2+2((1+k)r_1)((1+k)r_2)(1-\cos(\theta_1-\theta_2))\\
            &=(1+k)^2((r_1-r_2)^2+2(r_1r_2)(1-\cos(\theta_1-\theta_2)))=(1+k)^2\|p_1-p_2\|^2
        \end{align*}
\end{proof}

\subsection{The algorithm}
As mentioned at the beginning of this section, we run a version of \Cref{thm:algo_length_and_runtime} where instead of using the clearance paths defined in \Cref{sec:clearance_path} in step 5 of the algorithm, we use the clearance paths defined in this section.
The input of the algorithm is then $\W$, $S$, $T$, $0< \eps\le 1$ and $0<\alpha\le 1$ such that \Cref{con:boundary_seperation_to_clear_R,con:robot_seperation_to_ensure_R,con:robot_separation_4-2eps,con:boundary_seperation_to_clear_interruptions,con:robot_seperation_to_clear_interruptions_improved} hold.

\begin{remark}\label{rem:zeros}
    We require both $\alpha$ and $\eps$ to be positive so that we do not need to handle division by zero separately in our claims.
    However, it is in fact easy to see that when $\eps=0$ or $\alpha=0$, we simply get $f_b(w)=\delta(w)$, and then all the robots are moving whenever there is an overlap. This yields an approximation factor of $O(m)$, as stated in \Cref{cor:opt_sep_bounds_eps1}.
\end{remark}

To bound the running time of this algorithm, we first need the following lemma.

\begin{lemma}\label{lem:computing_delta}
    A partition of $[0,1]$ into a set $R$ of $O(m+n)$ ranges can be computed in $O(m^{1.5}+m^{0.5}n)$ time, such that for each range $[w_1,w_2]\in R$ it holds that $\gamma_A[w_1,w_2]$ is either a line segment or an arc, and the closest point to $A$ in time $[w_1,w_2]$ does not change.
\end{lemma}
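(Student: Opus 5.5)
The plan is to combine two partitions of $[0,1]$ and take their common refinement. The first comes from the description of $\gamma_A$ itself. As noted in the proof of \Cref{thm:algo_length_and_runtime}, $\gamma_A$ is a geodesic in $\F$, or a switch path made of a segment followed by a geodesic suffix. It is therefore a concatenation of $O(m+n)$ line segments and circular arcs: the arcs lie on the unit-offset circles around reflex vertices of $\W$ and on the circles of radius $2-\eps$ around previously settled targets. These breakpoints give the first partition $R_1$, of size $O(m+n)$, and reading it off the path costs $O(m+n)$ time.

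The second partition handles the ``closest start/target position'' condition, which only matters when $\delta(w)>0$, i.e., when some $p\in S\cup T$ lies within distance $2$ of $\gamma_A(w)$. Under our separation constraints ($\rho\ge 2+\alpha\eps\ge 2$), every disk $D_2(x)$ meets only $O(1)$ of the disks $D_2(p)$, $p\in S\cup T$. So along each segment or arc of $R_1$, the relevant candidates are only those $p$ with $D_2(p)$ crossing that piece. Inside a single segment/arc piece, the closest point among these candidates changes only where $\gamma_A$ crosses a bisector of two candidates, or enters or leaves some $D_2(p)$. A line or circle crosses each such bisector or circle $O(1)$ times. Summing over all pieces gives $O(m+n)$ extra breakpoints in total. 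The key accounting fact is that each $p$ contributes $O(1)$ crossings per piece it meets. By a packing argument, each $p$ meets the path on a bounded number of pieces, except where $\gamma_A$ wraps around the disk of $p$; those wrapping pieces are arcs that are themselves charged to $p$.

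For the running time, the candidates for each piece are found with a range-searching structure on $S\cup T$: disk-intersection queries with a segment or arc neighborhood of width $2$. Over the $O(m+n)$ pieces, the stated $O(m^{1.5}+m^{0.5}n)$ bound looks like $O(m+n)$ queries each costing $O(\sqrt m)$. This suggests a simplex/partition-tree-type structure on $m$ points (after a linearization lifting the disk condition to halfspaces), built in $O(m\log m)$ time with $O(\sqrt m+k)$ query time, with total output $k=O(m+n)$. Within each piece, the $O(1)$ reported candidates are processed in $O(1)$ time to compute the breakpoints. Merging the two partitions yields the claimed $R$.

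The main obstacle is the output-size bound: showing that summing the reported candidates over all pieces gives $O(m+n)$ and not $O(m(m+n))$. I would try to prove it by charging each (piece, candidate) incidence either to the piece, via the $O(1)$ packing bound per unit length of short pieces, or to the candidate, via long pieces passing a given $p$ at most $O(1)$ times because $\gamma_A$ is a shortest path. If that fails, a fallback is to prove only the weaker time bound, which is still absorbed in $O(m^4+m^2n^2)$.
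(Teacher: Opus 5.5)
Your plan is essentially the paper's proof: split $\gamma_A$ into $O(m+n)$ segments and arcs, report for each piece the points of $S\cup T$ within distance $2$ using a simplex range-searching structure with $O(\sqrt{m}+K_i)$ query time, and bound $K=\sum_i K_i=O(m+n)$. The bound on $K$ charges incidences on arcs and near segment endpoints to the piece (by packing), and incidences in the interior of long segments to the point (a shortest path passes near a given $p$ in the interior of at most two segments); each piece is then refined by closest point, which the paper does with a Voronoi diagram at cost $O(K_i\log K_i)$ per piece.

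One thing to fix: your middle paragraph claims that each piece has $O(1)$ candidates, processed in $O(1)$ time, and that each $p$ meets $O(1)$ pieces. Both fail in general. A long segment can pass many points, and a path hugging a many-vertex obstacle can have many short pieces near a single $p$. Rely only on the two-way charging in your last paragraph, which is exactly the paper's argument.
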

\begin{proof}
    Because in each iteration of \Cref{alg:eps_overlap_algo} we add exactly one disk to $\O$, the complexity of $\F$ is $O(m+n)$ and its boundary consist of straight line segments and arcs (of radius $1$ or $2-\eps$). Therefore, the geodesic path $\gamma_A$ also consist of straight line segments and arcs and its complexity is $O(m+n)$.
    Let $\hat{R}$ be a set of $O(m+n)$ ranges in $[0,1]$ such that for each range $[w_1,w_2]\in R$ it holds that $\gamma_A[w_1,w_2]$ is either a line segment or an arc.
    
    To compute $\delta(w)=\max\{0,2-\min_{p\in S\cup T}\{\|\gamma_A(w)-p\|\}$, we need to find the points of $S\cup T$ that interrupt each part of the path. This can be done using classic simplex range searching data structures over $S\cup T$ (ignoring $\O$), as in~\cite{Matousek1993RangeSearching}. The preprocessing time is $(n^{1+c})$ for an arbitrarily small fixed constant $c$, and for a range (line segment or arc) $R_i$ that has $K_i$ points from $S\cup T$ within distance 2 from it the query time is $O(\sqrt{m}+K_i)$.
    Denote $K=\sum_{i=1,\dots,|R|}K_i$, so the total running time is $O(m^{1.5}+m^{0.5}n+K)$.
    After finding the interrupting points, we can partition each line segment or arc of $\gamma_A$ into smaller parts according to the closest point, e.g. by using a Voronoi diagram. This would take $O(K_i\log K_i)$ for the range $R_i$ and $O(K\log K)$ in total.

    Therefore, in order to bound the running time we need to bound $K$, the total number of interrupting points.
    Consider an arc piece of $\gamma_A$. Observe that it has radius $1$ if the robot turns around a reflex vertex of the workspace, and radius $2-\eps$ if the robot turns around a disk obstacle that the algorithm removed from $\W$. 
    Because the points of $S\cup T$ have distance $2+\alpha\eps$ form each other, there can be only a constant number of points interrupting each such arc piece. 
    To bound the number of points interrupting line segment pieces, we count separately the number of interruptions ``near'' the endpoints of the segments and in the ``interior'' of the segments. 
    Consider a segment piece $\gamma_A[w_1,w_2]$.
    Because the distance between every two points in $S\cup T$ is at least 2, there is only a constant number of points from $S\cup T$ that interrupt $\gamma_A[w_1,w_2]$ and lie within distance $10$ from one of its endpoints. 
    Thus, in total, the number of points that from $S\cup T$ that interrupt any segment piece is $O(m+n)$.
    Now consider a point $p\in S\cup T$ that interrupts $\gamma_A[w_1,w_2]$ and lies at distance at least $10$ from one of its endpoints. Since $\gamma_A$ is non crossing (otherwise it could be shorten), there are at most $2$ segments of $\gamma_A$ for which $p$ can interrupt at their ``interior'' (i.e., lie at distance at least $10$ from one of the endpoints). So in total, the number of points that interrupt a segment piece in its interior is $O(m)$.
    We conclude that $K=O(m+n)$, and the lemma follows.
\end{proof}

The following theorem is an adaptation of \Cref{thm:algo_length_and_runtime}.
Notice that the only change is the size of the solution, which now depends on $\alpha$.

\begin{restatable}{theorem}{improvedAlgoLength}\label{thm:improved_algo_length_and_runtime}
    Let $\eps,\alpha\in(0,1]$. When using the clearance paths from \Cref{sec:clearance_path2} in step 5 of \Cref{alg:eps_overlap_algo}, this algorithm runs in $O(m^4+m^2n^2)$ time, and returns a solution $\Gamma^*$ such that $|\Gamma^*|=O(\alpha^{-2}\OPT)$, where $\OPT$ is the size of an optimal solution.
\end{restatable}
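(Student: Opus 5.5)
Following the proof of \Cref{thm:algo_length_and_runtime}, we split the total length into (1) the lengths of the paths along which the moving robot $A$ travels from a start to a target, and (2) the lengths of the clearance paths. Part (1) is unchanged by the new clearance construction: the selection of the $\eps$-interrupting target, the switch paths, and the update of $\W$ are identical. Hence the telescoping argument gives $|\Gamma^*_m|\le\OPT+(4-2\eps)m$ verbatim. Since $\OPT\ge 0$ might be small compared to $m$, we also note, as in the original proof, that $(4-2\eps)m=O(\OPT)$ whenever $S\cap T$ positions are distinct (each robot must move at least its bichromatic gap). More precisely, we absorb this term exactly as \Cref{thm:algo_length_and_runtime} did.

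For part (2) we fix an iteration with moving robot $A$ and path $\gamma_A$, and bound $\sum_{B\ne A}|\Tilde\gamma_B|$ by $O(\alpha^{-2})|\gamma_A|$. By \Cref{clm:clearance_path_in_disk_improved}, on each maximal interval where $f_b>0$, we have $|\Tilde\gamma_B[w_1,w_2]|\le\sqrt2|\gamma_A[w_1,w_2]|$. Thus it suffices to bound, for each time $w$, the number of positions $b\in S\cup T$ with $f_b(w)>0$. From the remark after the definition of $f_b$, $f_b(w)>0$ forces $\delta(w)>0$ and
\[
\|\gamma_A(w)-b\|<2+\tfrac{2}{\alpha\eps}\delta(w)\le 2+\tfrac{2}{\alpha},
\]
using $\delta(w)\le\eps$ because $\gamma_A$ is $\eps$-interrupted. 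Since all points of $S\cup T$ are pairwise at distance at least $2$ (indeed $\rho\ge 2+\alpha\eps$), the disks of radius $1$ around them are disjoint. A packing argument therefore bounds the number of such $b$ by the area ratio $\pi(3+2/\alpha)^2/\pi=O(\alpha^{-2})$, using $\alpha\le 1$. Integrating the pointwise derivative bound from the proof of \Cref{clm:clearance_path_in_disk_improved} over $w$, summed over these at most $c_\alpha=O(\alpha^{-2})$ robots, yields $\sum_{B\ne A}|\Tilde\gamma_B|\le\sqrt2\,c_\alpha|\gamma_A|$. Summing over iterations gives
\[
|\Gamma^*|\le(1+\sqrt2 c_\alpha)\bigl(\OPT+(4-2\eps)m\bigr)=O(\alpha^{-2}\OPT).
\]

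For the running time, everything except step 5 is as in \Cref{thm:algo_length_and_runtime}, namely $O(m^4+m^2n^2)$. For step 5 in each iteration, we invoke \Cref{lem:computing_delta} to partition $[0,1]$ into $O(m+n)$ ranges on which $\gamma_A$ is a segment or arc and the closest position, hence $\delta$, has a closed form. This costs $O(m^{1.5}+m^{0.5}n)$ plus near-linear overhead. On each range, each robot's $\Tilde\gamma_B$ is computed in $O(1)$ time, giving $O(m(m+n))$ per iteration, and $O(m^3+m^2n)$ overall, which is absorbed.

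The main obstacle is the pointwise counting step. We must make sure that only robots within radius $O(1/\alpha)$ of $A$ move, which is what the bound $\delta\le\eps$ cancelling the $1/\eps$ gives. We must also handle the nonsmooth points of $\gamma_A$ and of $\delta$ (where the closest position switches) when integrating. We will use the partition from \Cref{lem:computing_delta} so that $\delta$ is smooth on each piece and $|\delta'|\le\|\gamma_A'\|$ holds there.
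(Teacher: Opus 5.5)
Your proof is correct and follows the paper's argument essentially step for step. The shared steps are: the unchanged bound $\OPT+(4-2\eps)m$ on the direct paths; the observation that at time $w$ only positions within distance $2+\frac{2\delta(w)}{\alpha\eps}\le 2+\frac{2}{\alpha}$ of $\gamma_A(w)$ move; a packing bound of $O(\alpha^{-2})$ on those robots, combined with the $\sqrt{2}$ factor from the clearance-path lemma; and the same per-iteration $O(m+n)$ computation using the range partition.

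One small remark: the earlier proof does not actually justify absorbing $(4-2\eps)m$ into $O(\OPT)$, so you should state your justification directly rather than attribute it. The robots-separation $\rho\ge 2$ also applies to start--target pairs, so every robot travels at least $2$; hence $\OPT\ge 2m$ and $(4-2\eps)m\le 2\,\OPT$.
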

\begin{proof}
   First, notice that since we only modified the definition of the clearance paths, the only modifications that we need to do in the proof of \Cref{thm:algo_length_and_runtime} are in the analysis of the clearance paths, i.e. their length and computation time.
   
   The total sum of distances traveled by robots when moving directly from their starting positions to their target positions remain unchanged, namely, $|OPT|+(4-2\eps)m$.
   To bound the extra distance traveled by the clearance paths, 
   let $b$ be the starting/target position occupied by some robot $B$. Consider a maximal subpath $\gamma_A(w_1,w_2)$ where $f_b(w)>0$ for every $w\in[w_1,w_2]$. In this time range, the robot $B$ traveled according to $\tilde\gamma_B[w_1,w_2]$, and by \Cref{clm:clearance_path_in_disk_improved} we have $\tilde\gamma_B[w_1,w_2]\le\sqrt{2}\gamma_A[w_1,w_2]$.
   Observe that at any given time $w$, the set of points $p$ in the plane for which $f_p(w)>0$ is a disk of radius $2+\frac{2\delta(w)}{\alpha\eps} \le2+\frac{2}{\alpha}$.
   Therefore, the number of robots that can be moved by $\gamma_A$ at any given time is bounded by the number of pairwise disjoint unit disks that can fit within such a disk, which is $O(\alpha^{-2})$.
   Proceeding similarly to the proof of \Cref{thm:algo_length_and_runtime}, by adding the extra distance traveled in the clearance paths in all the iterations we get that the size of the output solution is $|\Gamma^*|=O(\alpha^{-2}\OPT)$.

   As for the running time, again, we only need to analyze how the new clearance paths computation effects it. 
   To compute the clearance paths that correspond to the path $\gamma_A$ in a given iteration of the algorithm, we first apply \Cref{lem:computing_delta} to obtain a partition of $[0,1]$ into a set $R$ of ranges such that for each range $[w_1,w_2]\in R$ it holds that $\gamma_A[w_1,w_2]$ is either a line segment or an arc, and the closest point to $A$ in time $[w_1,w_2]$ does not change.
   Then, for each range $[w_1,w_2]\in R$ and for each point $p\in S\cup T$, the complexity of the function $f_p(w)$ is constant, and thus it can be computed in constant time.
   Since the number of ranges is $O(m+n)$, this takes $O(m+n)$ for each clearance path.
   By \Cref{lem:computing_delta}, the set $R$ can be computed in $O(m^{1.5}+m^{0.5}n)$time. So for $m$ robots in $m$ iterations, the total running time for computing the clearance paths is $O\left(m^3+m^2n\right)$ which is absorbed in the $O\left(m^4+m^2n^2\right)$ from the proof of \Cref{thm:algo_length_and_runtime}.
\end{proof}

We conclude:
\begin{theorem}\label{thm:opt_sep_bounds_improved}
    Let $\eps,\alpha\in(0,1]$. Given a polygonal workspace $\W$ with $n$ edges, a set $S$ of $m$ starting positions and a set $T$ of $m$ target positions, there is an algorithm that computes in $\Tilde{O}(m^4+n^2m^2)$ time a collision-free motion plan of length $O(\alpha^{-2}\OPT)$, where $\OPT$ is the length of an optimal solution, or reports that such a motion plan does not exist, assuming that $\rho=\max\{4-2\eps,2+\alpha\eps\}$ and $\omega=\max\{1+\eps,\sqrt{(3-\sqrt{3}-\eps)^2+1}\}$.
\end{theorem}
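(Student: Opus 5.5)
The plan is to assemble \Cref{thm:opt_sep_bounds_improved} from the pieces already established, in the same way \Cref{thm:opt_sep_bounds} was obtained from \Cref{alg:eps_overlap_algo}. First, I would check that the stated separation values imply every constraint the modified algorithm needs. Setting $\omega=\max\{1+\eps,\sqrt{(3-\sqrt{3}-\eps)^2+1}\}$ gives \Cref{con:boundary_seperation_to_clear_R,con:boundary_seperation_to_clear_interruptions} directly. Setting $\rho=\max\{4-2\eps,2+\alpha\eps\}$ gives \Cref{con:robot_separation_4-2eps,con:robot_seperation_to_clear_interruptions_improved} directly. For \Cref{con:robot_seperation_to_ensure_R}, I would use the observation made earlier that it is dominated by the maximum of \Cref{con:robot_separation_4-2eps,con:robot_seperation_to_clear_interruptions_improved} for all $\alpha,\eps\in[0,1]$. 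This domination is the one spot I would recheck by elementary calculus. At $\eps=0$, $4\ge\sqrt{1/4+(3-\sqrt3/2)^2}\approx 2.19$. Since $4-2\eps$ and the square-root expression are both decreasing in $\eps$, and $2+\alpha\eps\ge 2$, it suffices to compare $\sqrt{1/4+(3-\sqrt3/2-\eps)^2}$ with $\max\{4-2\eps,2\}$ on $[0,1]$. At $\eps=1$ the square root is about $1.2<2$, and where $4-2\eps$ is the binding bound the inequality follows by squaring.

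Second, I would verify correctness. The correctness proof of \Cref{alg:eps_overlap_algo} uses \Cref{con:robot_seperation_to_clear_interruptions} only inside \Cref{thm:interrupted_path}. Everything else relies only on \Cref{con:boundary_seperation_to_clear_R,con:robot_seperation_to_ensure_R,con:robot_separation_4-2eps,con:boundary_seperation_to_clear_interruptions}, and those constraints are preserved when $D_{1-\eps}(t_j)$ is removed. The proof of this uses only $\rho\ge4-2\eps$ and $1+\eps\le 3-\eps$. That part therefore goes through verbatim, with \Cref{thm:interrupted_path_improved} substituted for \Cref{thm:interrupted_path}.

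One point needs care, since the improved plan is no longer weakly monotone: the concatenated paths must still be continuous. For this I would use \Cref{clm:clearance_path_in_disk_improved}, which says every clearance path starts and ends at $b$ and stays in $D_\eps(b)\subset\F$.

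Third, I would handle the remaining claims. The length bound $O(\alpha^{-2}\OPT)$ and the running time $O(m^4+m^2n^2)$ are exactly \Cref{thm:improved_algo_length_and_runtime}. Converting to $\Tilde O$ absorbs the range-searching preprocessing from \Cref{lem:computing_delta}. The ``reports that no motion plan exists'' clause comes from checking whether every connected component of $\F$ contains equally many start and target positions. This check can be done within the same time bound. If it fails, no plan exists; if it holds, the algorithm succeeds by the argument above.

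The main obstacle I expect is making sure nothing hidden in the earlier correctness proof relied on the stronger \Cref{con:robot_seperation_to_clear_interruptions}. In particular, with $\rho$ possibly close to $2$, the robots already parked on targets (the set $\A$) still only interrupt and never block paths. I would confirm this from the argument via $\W\setminus D_{1-\eps}(t_j)$, which gives an overlap of at most $\eps$ and does not depend on $\rho$.
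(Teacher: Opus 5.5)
Your proposal is correct and follows the paper's route. The paper obtains this theorem by rerunning the correctness argument for \Cref{alg:eps_overlap_algo} with \Cref{thm:interrupted_path_improved} and \Cref{clm:clearance_path_in_disk_improved} in place of \Cref{thm:interrupted_path}, noting that \Cref{con:robot_seperation_to_ensure_R} is dominated by \Cref{con:robot_separation_4-2eps}, and taking the length and running-time bounds from \Cref{thm:improved_algo_length_and_runtime}; your extra checks (parked robots only $\eps$-interrupt because of the removed $D_{1-\eps}$ disks, independently of $\rho$, and the concatenated paths stay continuous) are points the paper uses implicitly.
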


\subsection{Optimizing the separation bounds}\label{sec:optimzing_bounds2}
We now have two parameters, $\alpha$ and $\eps$, to optimize over \Cref{con:boundary_seperation_to_clear_R,con:robot_seperation_to_ensure_R,con:robot_separation_4-2eps,con:boundary_seperation_to_clear_interruptions} and \Cref{con:robot_seperation_to_clear_interruptions_improved}.

For $\frac{2}{3}\le\eps\le 1$, by choosing $\alpha=\frac{2}{\eps}-2$ we get that \Cref{con:robot_separation_4-2eps} and \Cref{con:robot_seperation_to_clear_interruptions_improved} are equal, and thus $\rho=4-2\eps$ and $\omega=1+\eps$ are sufficient for \Cref{alg:eps_overlap_algo} to work. In fact, we can get $\rho=4-2\eps$ and $\omega=1+\eps$ for any $\frac{15-6\sqrt{3}}{13}\le\eps\le 1$, because in the range $\frac{15-6\sqrt{3}}{13}\le\eps< \frac23$ we have $4-2\eps>2+\eps$.

In the case of $\eps=1$, we choose $\alpha=0$ (see \Cref{rem:zeros}), and thus in step 5 of \Cref{alg:eps_overlap_algo} all the robots may move in order to clear the way. Therefore, by \Cref{clm:clearance_path_in_disk_improved}, in this case the length of the solution becomes $O(m\cdot \OPT)$. We conclude:

\begin{corollary}\label{cor:opt_sep_bounds_eps1}
    Given a polygonal workspace $\W$ with $n$ edges, a set $S$ of $m$ starting positions and a set $T$ of $m$ target positions, there is an algorithm that computes in $\Tilde{O}(m^4+n^2m^2)$ time a collision-free motion plan of length $O(m\cdot\OPT)$, where $\OPT$ is the length of an optimal solution, or reports that such a motion plan does not exist, assuming that $\rho=2$ and $\omega=2$.
\end{corollary}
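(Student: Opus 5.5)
The plan is to instantiate the modified algorithm (\Cref{alg:eps_overlap_algo} with the clearance paths of \Cref{sec:clearance_path2}) with $\eps=1$ and $\alpha=0$, and to check each constraint and each ingredient of the correctness proof at these values.

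\textbf{Constraints.} At $\eps=1$ the constraints read as follows.
\begin{itemize}
\item \Cref{con:boundary_seperation_to_clear_R} requires $\omega\ge\sqrt{(2-\sqrt3)^2+1}\approx1.035$.
\item \Cref{con:boundary_seperation_to_clear_interruptions} requires $\omega\ge2$.
\item \Cref{con:robot_separation_4-2eps} requires $\rho\ge2$.
\item \Cref{con:robot_seperation_to_clear_interruptions_improved} with $\alpha=0$ requires $\rho\ge2$.
\item \Cref{con:robot_seperation_to_ensure_R} requires $\rho\ge\sqrt{1/4+(2-\sqrt3/2)^2}\approx1.24$.
\end{itemize}
So $\rho=\omega=2$ suffices.

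\textbf{Correctness.} I would reuse the correctness theorem of \Cref{sec:algorithm} verbatim. Its dependence on the clearance construction is only through the collision-freeness statement, which here is \Cref{thm:interrupted_path_improved}. Two further facts are needed.
\begin{itemize}
\item After removing $D_{1-\eps}(t_j)$ the constraints still hold. With $\eps=1$ this disk is empty, so nothing is removed from $\W$.
\item A settled robot on $t_j$ only $1$-interrupts later paths. This is automatic: the overlap of two points of $\F$ is always at most $2$, and at most $1$ is what we need. I would check that ``blocking'' means overlap strictly greater than $1$, i.e.\ distance less than $1$.
\end{itemize}

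\textbf{The $\alpha=0$ case.} The main technical point is that \Cref{thm:interrupted_path_improved} and \Cref{clm:clearance_path_in_disk_improved} were stated for $\alpha>0$. I would follow \Cref{rem:zeros}: with $c=\frac{\alpha\eps}{2+\alpha\eps}=0$ we get $f_b(w)=\delta(w)$. Every robot then translates radially away from $A$ by exactly $\delta(w)\le1$, so it stays in $D_1(b)\subset\F$ because $\omega=2$.

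I would then verify collision-freeness directly.
\begin{itemize}
\item Against $A$: the distance from $A$ to any $B$ is $r(w)+\delta(w)\ge2$, since $r(w)\ge2-\delta(w)$.
\item Between $B$ and $C$: the map $T(x,\theta)=(x+\delta,\theta)$ with $k=0$ and $K=\delta\ge0$ is non-contracting by \Cref{lem:cosin}, so $\|T(b)-T(c)\|\ge\|b-c\|\ge2$.
\end{itemize}
Case 3 of that proof disappears, since $f>0$ everywhere once $\delta>0$. The length bound of \Cref{clm:clearance_path_in_disk_improved}(ii) still gives $|\Tilde\gamma_B|\le\sqrt2\,|\gamma_A|$, because $|f_b'|=|\delta'|\le\|\gamma_A'\|$.

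\textbf{Length and running time.} The direct motions cost at most $\OPT+(4-2\eps)m=\OPT+2m$. In each iteration every one of the at most $m-1$ other robots may move, each at cost at most $\sqrt2\,|\gamma_A|$. Summing over iterations gives $|\Gamma^*|\le(1+\sqrt2 m)(\OPT+2m)$. Since each robot must travel at least something unless $S=T$, I would argue $m=O(\OPT)$ only loosely. More carefully, I would bound the additive $2m$ term as in \Cref{thm:algo_length_and_runtime}, treating it as absorbed there, and conclude $O(m\cdot\OPT)$.

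The running time is exactly that of \Cref{thm:improved_algo_length_and_runtime}, whose analysis does not use $\alpha>0$. The main obstacle I expect is the degenerate $\alpha=0$ substitution, specifically making sure no step silently divided by $\alpha\eps$ (e.g.\ the disk radius $2+\frac{2\delta}{\alpha\eps}$), which is why I would redo those two claims directly rather than take a limit.
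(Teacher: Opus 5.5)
Your route is the paper's own: run the algorithm with $\eps=1$, $\alpha=0$ as in \Cref{rem:zeros}. Then $f_b(w)=\delta(w)$, every robot may move, and \Cref{clm:clearance_path_in_disk_improved} charges at most $\sqrt2\,|\gamma_A|$ per robot per iteration. Your constraint check is correct. So is your direct check of collision-freeness for $\alpha=0$: the distance to $A$ is $r(w)+\delta(w)\ge 2$, and \Cref{lem:cosin} with $k=0$, $K=\delta(w)$ gives non-contraction.

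\textbf{The gap is in how you treat settled robots.} You claim that at $\eps=1$ nothing is removed from $\W$. You then say a robot settled on $t_j$ ``automatically'' only $1$-interrupts later paths, because overlaps never exceed $2$. But $1$-interrupting means every later path stays at distance at least $1$ from $t_j$, and an overlap bound of $2$ says nothing about that. If $\W$ is left unchanged, later optimal-assignment paths are computed for the current $S,T$, which no longer contain $t_j$. Such paths can pass arbitrarily close to, or straight through, $t_j$, e.g.\ when $t_j$ sits in a corridor that a later geodesic must cross. Then three things break:
\begin{itemize}
\item $\delta(w)>1$ becomes possible, so the settled robot's clearance path leaves the closed unit disk around $t_j$, which is all that $\omega=2$ guarantees to be free.
\item When $\gamma_A$ passes through $t_j$ the radial direction is undefined, so the clearance path is not even continuous.
\item Your $\alpha=0$ argument uses $\delta(w)\le 1$ both for staying in $\F$ and for $f_b\le r$ in the length bound, and the switching step only inspects blocking positions in $S$; both fail here.
\end{itemize}

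\textbf{The missing idea.} At $\eps=1$, the removal $\W\setminus D_{1-\eps}(t_j)$ must be read as deleting the degenerate disk $\{t_j\}$, i.e.\ inserting $t_j$ as a zero-extent point obstacle. This is the same device used for all of $S\cup T$ in \Cref{sec:labeled}. The paper's invariant, that every later path is at distance at least $1$ from every point of $D_{1-\eps}(t_j)$, needs exactly this. The free space then excludes the open unit disk around $t_j$, so later paths have overlap at most $1$ with it. This costs nothing:
\begin{itemize}
\item Every remaining $p\in S\cup T$ satisfies $\|p-t_j\|\ge\rho=2=\omega$, so $D_2(p)$ still avoids the new obstacle.
\item The previous assignment paths stay feasible because $t_j$ was a $1$-interrupting target. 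This includes the switch path $\gamma'_i$, by \Cref{lem:switch_path_are_legal}(iv).
\end{itemize}

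Separately, you can make the absorption of the additive $2m$ term rigorous rather than ``loose.'' Every start is at distance at least $\rho=2$ from every target, so each robot travels at least $2$ and $\OPT\ge 2m$. This gives $(1+\sqrt2\,m)(\OPT+2m)\le 2(1+\sqrt2\,m)\OPT=O(m\cdot\OPT)$.
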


In \Cref{tab:separation2}, we describe the results that we achieve for different values of $\eps$.

\begin{table}[h]
\centering
\begin{tabular}{|l|l|l|l|l|}
\hline
$\eps$                                & $\rho$         & $\omega$                          & Solution length               & Comments                                                                             \\ \hline\hline
$0$                                   & $4$            & $\sqrt{13-6\sqrt{3}}\approx1.614$ & $\OPT+4m$                     & monotone                                                                             \\ \hline
$(0,\frac{15-6\sqrt{3}}{13})$         & $4-2\eps$      & $\sqrt{(3-\sqrt{3}-\eps)^2+1}$                          & $O(\OPT)$                     & weakly-monotone                                                                      \\ \hline
$\frac{15-6\sqrt{3}}{13}\approx0.354$ & $\approx3.291$ & $\approx1.354$                    & $O(\OPT)$                     & \begin{tabular}[c]{@{}l@{}}min obstacles-separation \\ weakly-monotone\end{tabular} \\ \hline
$(\frac{15-6\sqrt{3}}{13},\frac23)$   & $4-2\eps$      & $1+\eps$                          & $O(\OPT)$                     & weakly-monotone                                                                      \\ \hline
$\frac23$                             & $2\frac{2}{3}$ & $1\frac{2}{3}$                    & $O(\OPT)$                     & weakly-monotone     \\ \hline
$(\frac{2}{3},1)$                     & $4-2\eps$      & $1+\eps$                          & $O((\frac{1}{1-\eps})^2\cdot\OPT)$ &                                                                                      \\ \hline
$1$                                   & $2$            & $2$                               & $O(m\cdot\OPT)$                 & min robots-separation                 \\ \hline
\end{tabular}
\caption{The resulting algorithm for the different values of $\eps$.}
\label{tab:separation2}
\end{table}

Note that the separation bound of $\rho=2$ in \Cref{cor:opt_sep_bounds_eps1} above is tight for monochromatic robots-separation. However, the bichromatic robots-separation can potentially be $0$. On the other hand, below we provide lower bounds showing that obstacles-separation of at least $1.5$ is sometimes needed for unlabeled MRMP (see \Cref{sec:lower_bounds}).
This raises the question of what is the minimum obstacles-separation for which we can remove the robots-separation assumption completely.

\section{Near Tight Separation Bounds for Labeled MRMP}\label{sec:labeled}
For $\rho=2$ and $\omega=2$, we can use the ideas from \Cref{sec:improved} in order to give an approximate solution for Labeled MRMP (and in fact, show that a solution for Labeled MRMP always exists when $\omega=2$ and the free space is connected).
Given a set $S=\{s_1,\dots,s_m\}$ of starting points and a set $T=\{t_1,\dots,t_m\}$ of target points in a workspace $\W$, the goal is to compute a motion plan for the robots that results in $s_i$ placed on $t_i$ for every $i\in[m]$.

The idea is to treat all the points in $S\cup T$ as obstacles (of 0 extent), and then compute for each $i$ the geodesic path $\hat{\gamma}_i$ from $s_i$ to $t_i$ in this modified workspace $\widehat{W}$. 
Next, we argue that a path $\hat{\gamma}_i$ exists in $\widehat{\W}$ for every $i\in[m]$, and that it is not much longer then the geodesic path $\gamma_i$ between $s_i$ and $t_i$ in the original workspace $\W$. 
Consider a unit disk $D$ around a point from $S\cup T$ that intersect the path $\gamma_i$, and denote by $a(D)$ the smaller arc created by this intersection. We modify $\gamma_i$ by replacing each subpath of $\gamma_i$ that intersect a unit disk $D$ around a point from $S\cup T$ by the arc $a(D)$. This modification results in a path $\hat{\hat{\gamma}}_i$ which has length at most $\frac{2\pi r}{2r}=\pi$ times the length of $\gamma_i$. Moreover, $\hat{\hat{\gamma}}_i$ is in the free space of $\widehat{\W}$: every point on the boundary of $D$ is at distance at least 1 from the boundary of $\W$ (because $\omega=2$), and at distance at least 1 from each point in $S\cup T$ (because $\rho=2$). Therefore a path between $s_i$ and $t_i$ in $\widehat{\W}$ exists, and by the minimality of $\hat{\gamma}_i$, we have $|\hat{\gamma}_i|\le |\hat{\hat{\gamma}}_i|\le \pi\cdot|\gamma_i|$.

Notice that each of the paths $\hat{\gamma}_i$ is a $1$-interrupted path in the original workspace $\W$ (it is not $1$-blocked by any other start or target position). We can therefore iteratively run steps 5 and 6 from \Cref{alg:eps_overlap_algo} for each path $\hat{\gamma}_i$, i.e., compute the clearance paths as described in \Cref{sec:improved}. The resulting set of paths is collision free and by the argument in \Cref{sec:improved} have a total length of $O(m\cdot \OPT)$ where $\OPT$ is the length of an optimal solution.

As we argue in the proof of \Cref{thm:improved_algo_length_and_runtime}, the total running time for computing all clearance paths is $\Tilde{O}(m^3+m^2n)$.
We therefore conclude with the following theorem.
\begin{theorem}\label{thm:labeled_tight}
    Given a polygonal workspace $\W$ with $n$ edges, a set $S$ of $m$ \textbf{labeled} starting positions and a set $T$ of $m$ \textbf{labeled} target positions, there is an algorithm that computes in $\Tilde{O}(m^3+m^2n)$ time a collision-free motion plan of length $O(m\OPT)$, where $\OPT$ is the length of an optimal solution (for Labeled MRMP), or reports that such a motion plan does not exist, assuming that $\rho=2$ and $\omega=2$.
\end{theorem}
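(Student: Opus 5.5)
The proof follows the sketch in the paragraphs above the statement; I would organize it in four steps.

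\textbf{Step 1: feasibility check and unblocked detour paths.} First, check each pair $s_i,t_i$ and report that no plan exists if they lie in different connected components of $\F$. Otherwise, build $\widehat{\W}$ by treating every point of $S\cup T$ as a point obstacle, so that its free space is $\F$ minus the open unit disks around the other positions (other than $s_i,t_i$ themselves). In it, compute the geodesic $\hat\gamma_i$ from $s_i$ to $t_i$. Existence and the length bound $|\hat\gamma_i|\le \pi|\gamma_i|$ come from the detour argument: take the geodesic $\gamma_i$ in $\F$ and replace each crossing of a unit disk $D$ centered at $p\in S\cup T$ by the shorter arc $a(D)$. Points on $\partial D$ are at distance $1$ from $p$, hence at distance at least $1$ from the obstacles since $\omega=2$, and at distance at least $1$ from every other position since $\rho=2$; so the detour stays in the free space of $\widehat{\W}$. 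Overlapping disks can be handled by taking the boundary of their union, which is still a curve of bounded total length.

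\textbf{Step 2: collision-free moves.} Every point of $\hat\gamma_i$ is at distance at least $1$ from every other position, so $\hat\gamma_i$ is $1$-interrupted (not $1$-blocked) in $\W$. Process the robots in arbitrary order $i=1,\dots,m$. Before robot $i$ moves, every other robot sits exactly on a point of $S\cup T$, either its start or its already-reached target. Apply the clearance paths of \Cref{sec:clearance_path2} with $\eps=1$ and $\alpha=0$ (\Cref{rem:zeros}), so $f_b=\delta$. \Cref{thm:interrupted_path_improved} then gives a collision-free motion. Its constraints hold here: \Cref{con:boundary_seperation_to_clear_interruptions} with $\eps=1$ is exactly $\omega=2$, \Cref{con:robot_seperation_to_clear_interruptions_improved} with $\alpha=0$ is $\rho\ge2$, and $\rho\ge 4-2\eps=2$. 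Only the interruption-clearing part of that theorem is used, not the switching machinery, so the remaining constraints are not needed. Each clearance path starts and ends at the robot's own position, so after robot $i$ finishes, every robot is again on its start or target. This invariant makes the iteration valid and yields a labeled plan.

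\textbf{Step 3: length bound.} By \Cref{clm:clearance_path_in_disk_improved}, each of the at most $m-1$ other robots travels at most $\sqrt2|\hat\gamma_i|$ during iteration $i$. The total length is therefore at most $(1+\sqrt2 m)\sum_i|\hat\gamma_i|\le (1+\sqrt2 m)\pi\sum_i|\gamma_i|$. Since $\sum_i|\gamma_i|$ lower-bounds the labeled optimum, this gives $O(m\OPT)$.

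\textbf{Step 4: running time.} Computing the $m$ geodesics in a domain of complexity $O(m+n)$ takes $\tilde O(m(m+n))$ time each. The clearance paths take $O(m^{1.5}+m^{0.5}n)$ time per iteration by \Cref{lem:computing_delta}, plus $O(m+n)$ per robot. This totals $\tilde O(m^3+m^2n)$.

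The main obstacle is the rigor of Step 1 when several unit disks overlap near $\gamma_i$ or near obstacles. One must show that the union-boundary detour still stays in the free space and keeps length $O(|\gamma_i|)$. This is also where $\omega=2$ is genuinely needed.
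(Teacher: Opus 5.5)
Your proposal is correct and follows the paper's proof essentially step for step. You compute geodesics $\hat\gamma_i$ in $\widehat{\W}$, justify them by the arc-detour argument using $\omega=2$ and $\rho=2$, and then apply the $\eps=1$, $\alpha=0$ clearance paths of \Cref{sec:clearance_path2} robot by robot, which yields $O(m\cdot\OPT)$.

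The ``main obstacle'' you flag is vacuous. Since $\rho=2$, the open unit disks around $S\cup T$ are pairwise disjoint (at most tangent), so the simple per-disk arc replacement already suffices and no union-boundary argument is needed. Separately, your per-geodesic bound of $\tilde O(m(m+n))$ tacitly requires a near-linear shortest-path algorithm among curved obstacles rather than a visibility-graph approach; the paper also leaves this detail implicit.
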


In fact, \Cref{thm:labeled_tight} above shows that if $\rho=2$, $\omega=2$ and for every $i$ it holds that $s_i$ and $t_i$ are in the same connected component of the free space, a solution to Labeled MRMP always exists.
In section \Cref{sec:lower_bounds} below, we show that when $\omega<2$, a solution does not always exist in the labeled variant, even if
the free space consists of a single connected component. 
Therefore the separation assumption on $\omega$ that we achieve in this section is tight for the labeled variant. Note that it may still be possible to improve the robots-separation by distinguishing chromatic and bichromatic separation as in~\cite{BBBBFHKOS22}.

\section{Limitations of Monotone and Weakly-Monotone Motion Plans}\label{sec:weakly_monotone_limitations}
In this section we present two examples that demonstrate the limitations of (weakly) monotone motion plans.
In the first example the obstacles-separation is $\approx1.614$, and we show that no monotone motion plan exists. In the second example the obstacles-separation is $\approx1.354$ and we show that no weakly-monotone motion plan exists.
These examples show that in some situations, monotone or even weakly monotone techniques cannot yield a solution when the obstacles-separation is smaller than those presented in this paper, and therefore a completely different approach is needed for algorithms with lower obstacles-separation assumptions.

\subsection{Monotone motion plans}
Recall that in a monotone motion plan with $\omega=\sqrt{13-6\sqrt{3}}\approx1.614$, each start/target position must be at least $\omega$ away from every obstacle, and the robots move one by one from their starting positions to their target positions, each fully completing its path and never moving again. 

Consider \Cref{fig:monotone_low_bound}. 
The point $o$ is the origin, and the target $t_1$ is located at $(0,-2+\delta_1)$ for some arbitrarily small $\delta_1>0$.
The orange rectangle is $\R(o,t_1)$.
The vertex $v_1$ is located at $(0,1)$, and the dashed gray disk centered at $v_1$ has radius $2$. 
The points $v_2,v_3$ are located at distance exactly $2$ from $v_1$, slightly below the intersection points of $D_2(v_2)$ and $\R(o,t_1)$ (the orange points), whose coordinates are $(\pm1,1-\sqrt3)$. 
Note that the distance between these orange points is exactly $2$, and that the distance between $v_2$ and $v_3$ is strictly smaller than $2$.
The distance between $t_1$ and each of the orange points is exactly $\sqrt{1-(1-\sqrt3+2-\delta_1)^2}=\rho_{\delta_1}$, which approaches $\rho_0=\sqrt{13-6\sqrt{3}}$ when $\delta_1$ approaches $0$.
We can thus pick $\delta_1>0$ and the points $v_2,v_3$ such that the distance between $t_1$ and each of $v_2,v_3$ is $\rho_0-\delta_2$ for any arbitrarily small $\delta_2>0$.
The setting for $t_2$ is symmetric.

Now, observe that $s_1,t_1$ are in one connected component of $\F$ while $s_2,t_2$ are in another one (because the distance between $v_2$ and $v_3$ is strictly smaller than $2$). 
In a monotone motion plan, one of the robots will move first to a target, and then it will stay there. Assume w.l.o.g. that the robot on $s_1$ goes first and reaches $t_1$. To reach $t_2$, the robot positioned on $s_2$ must pass through $\overline{v_1v_2}$ and $\overline{v_1v_3}$, and therefore the trace of its path must cover the arc of $D_2(v_1)$ between $v_2$ and $v_3$. Since $\|t_1-v_1\|<2$, the robot positioned on $t_1$ would always intersect the trace of such path, so $s_2$ cannot reach $t_2$.

We conclude that any algorithm that returns a monotone motion plan will not be able to produce an answer to this instance, even though it is easy to see that a feasible solution exists. We note that \cite{ABHS15} and \cite{BBBBFHKOS22} present monotone algorithms with $\omega=1$, but they work only in simple polygons.

\begin{figure}[h!]
    \centering
        \includegraphics[page=2, scale=0.7]{figures/monotone_boundary_separation.pdf}
    \caption{A instance of MRMP with $\omega\approx1.614$, in which a monotone solution does not exist.}
    \label{fig:monotone_low_bound}
\end{figure}

\subsection{Weakly-monotone motion plans} 
Recall that in a weakly-monotone motion plan with respect to radius $r$, each start/target position must be contained in disk of radius $r$, empty of obstacle and other robots. The robots move one by one from their starting positions to their target positions, while other robots must remain inside a disk of radius $r$ centered either at another start or target position.

Let $\eps=\frac{15-6\sqrt{3}}{13}\approx0.354$ and consider \Cref{fig:monotone_low_bound}.
The point $o_1$ is the origin, and $o_2=(0,-2)$. The orange rectangle is $\R(o_1,o_2)$.
The vertex $v_1$ is located at $(0,1)$, and the dashed gray disk centered at $v_1$ has radius $2$. 
The points $v_2,v_3$ are located at distance exactly $2$ from $v_1$, slightly below the intersection points of $D_2(v_2)$ and $\R(o_1,o_2)$ (the orange points), whose coordinates are $(\pm1,1-\sqrt3)$. Note that the distance between these orange points is exactly $2$, and that the distance between $v_2$ and $v_3$ is strictly smaller than $2$.

We place the target $t_1$ at $(0,-2+\eps)$, so the distance between $t_1$ and each of the orange points is exactly $\sqrt{1-(1-\sqrt3+2-\eps)^2}=\rho_\eps$. One can verify that for $\eps=\frac{15-6\sqrt{3}}{13}$ we have $\rho_\eps=1+\eps$ (this is how we chose $\eps$ in case (ii) of \Cref{thm:opt_sep_bounds}). This means that $\D_{1+\eps}(t_1)$ (marked in dashed gray) contains $v_2,v_3$, and therefore the largest disk centered at $t_1$ that does not intersect the boundary of the polygon has radius $1+\eps-\delta$, for some arbitrarily small $\delta>0$. Since $\|t_1-v_1\|=3-\eps$, we get that $\D_{1+\eps-\delta}(t_1)$ is contained in $\D_{4-\delta}(v_1)$.
The setting for $t_2$ is symmetric.

Now, as in the previous example, $s_1,t_1$ are in one connected component of $\F$ while $s_2,t_2$ are in another one (because the distance between $v_2$ and $v_3$ is strictly smaller than $2$). 
In a weakly monotone motion plan, one of the robots will move first to a target, and then it will stay in the disk of radius $1+\eps-\delta$ around it (this is the largest disk centered at the target and does not intersect the boundary of the polygon). Assume w.l.o.g. that the robot on $s_1$ goes first. When it reaches $t_1$, it can still move inside $\D_{1+\eps-\delta}(t_1)$. However, to reach $t_2$, the robot positioned on $s_2$ must pass through $\overline{v_1v_2}$ and $\overline{v_1v_3}$, and therefore the trace of its path must cover the arc of $D_2(v_1)$ between $v_2$ and $v_3$. Since $\D_{1+\eps-\delta}(t_1)$ is contained in $\D_{4-\delta}(v_1)$, the robot positioned on $t_1$ would always intersect the trace of such path, so $s_2$ cannot reach $t_2$.

We conclude that any algorithm that plans monotone motion plan will not be able to produce an answer to this instance although it is easy to see that a feasible solution exists. 
Recall that \cite{AGHT23} present a weakly-monotone algorithm in which $r=2$, and the start/target positions do not have to lie at the center of the revolving area. However, we note that it is possible to modify the example in \Cref{fig:monotone_low_bound} to force the disk $\D_{1+\eps-\delta}(t_1)$ to be the revolving area, even when the target position is not $t_1$, by blocking it with the boundary of the polygon.

\begin{figure}[h!]
    \centering
        \includegraphics[page=3, scale=0.7]{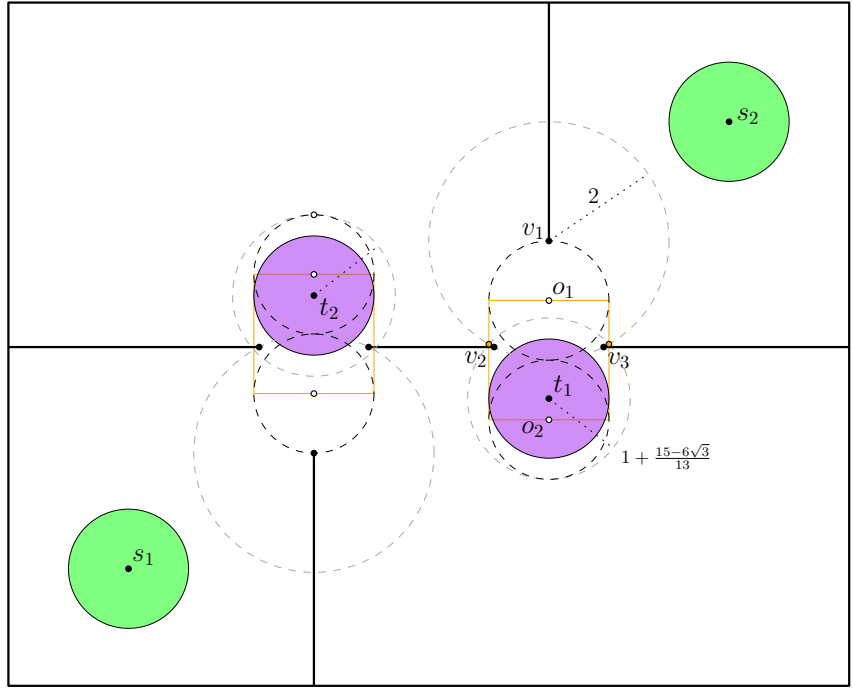}
    \caption{A instance of MRMP with $\omega\approx1.354$, in which a weakly-monotone solution does not exist.}
    \label{fig:weakly_monotone_low_bound}
\end{figure}

\section{Lower Bounds}\label{sec:lower_bounds}
In this section we present lower bounds for the amount of obstacles-separation needed so that a solution to the MRMP problem always exists.

First, in the unlabeled case, we show that obstacles-separation of at least $1.5$ is sometimes needed.

\begin{figure}[h!]
    \centering
    \includegraphics[page=1,scale=0.8]{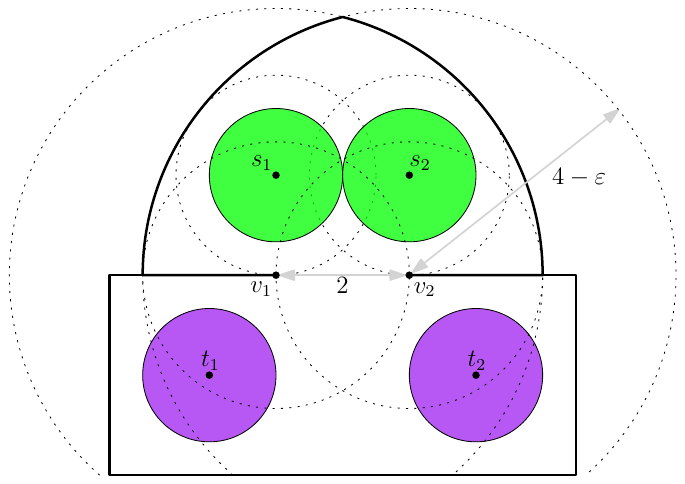}
    \caption{An instance of unlabeled MRMP with obstacles-separation of $1.5-\eps$ for which no solution exists.}
    \label{fig:hourglass1_5}
\end{figure}
\begin{lemma}
    For $\omega<1.5$ a solution does not always exist in the unlabeled variant, even if the free space consists of a single connected component containing two start and two target positions, and the bichromatic robots-separation is arbitrarily large.
\end{lemma}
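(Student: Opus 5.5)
The plan is to give an explicit ``hourglass'' instance, as in \Cref{fig:hourglass1_5}. The workspace has an upper chamber containing $s_1,s_2$ and a lower chamber containing $t_1,t_2$, joined by a single neck. The two targets are placed far down in the lower chamber with generous clearance, so the bichromatic separation can be made arbitrarily large and the targets play no role. All the difficulty is concentrated at the starts. I would place them at $s_1=(-1,0)$ and $s_2=(1,0)$, so $\|s_1-s_2\|=2$ (monochromatic separation $2$ is allowed). The upper chamber is shaped so that around each $s_i$ the obstacles lie at distance exactly $\omega=1.5-\delta$ from $s_i$, except in a narrow funnel. Thus the free space near $s_i$ is essentially $D_{0.5-\delta}(s_i)$ plus a thin channel leaving it toward the neck. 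The neck opening is near $(0,-h)$, for a small $h>0$.

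\textbf{Step 1: feasibility of the instance.} First I would check that the instance is legitimate. This means verifying that $D_\omega(p)\subset\W$ for all $p\in S\cup T$, and that $\F$ is a single connected component. Connectivity holds because both funnels merge at the neck, which is wider than $2$ so that one unit disk fits through, and the neck leads to the lower chamber. This is routine once the polygon is drawn: the funnels are cut so they stay outside $D_{1.5-\delta}(s_i)$ except at their mouths.

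\textbf{Step 2: the blocking argument.} Consider any motion, and let $\tau$ be the first time at which some robot, say $A$ starting at $s_1$, leaves the union of the two ``pockets'' $P_i=\F\cap \overline{D_{0.5}(s_i)}$ and their funnel mouths. Leaving forces $A$ through the funnel. By construction, every point of that funnel is within distance roughly $\sqrt{1+h^2}$ of $s_2$; for instance $A$ must pass near $(0,-h)$. Meanwhile, up to time $\tau$, robot $B$ starting at $s_2$ is confined to $P_2$ together with its own funnel mouth, which lies on the side facing $A$. Hence $B$ is within distance $0.5-\delta$ of $s_2$ at all times up to $\tau$. So when $A$ is at the critical point, $\|A-B\|\le \sqrt{1+h^2}+0.5-\delta$, and this is $<2$ whenever $h$ is small relative to $\delta$. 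The key quantitative fact is the inequality $1+(1.5-1)<2$ when the radius $0.5$ is reduced to $0.5-\delta$, i.e.\ the ``$1.5$'' threshold. This is a collision, so no robot can ever exit first, and no solution exists.

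\textbf{Main obstacle.} The delicate part is making the confinement rigorous. I need to show that $B$ genuinely cannot escape sideways or backward to make room, i.e.\ that the only exit from $P_2$ is its own funnel, which bends toward $A$ and therefore only decreases the distance. I would handle this by designing each funnel as a thin channel along the segment from $s_i$ toward $(0,-h)$, with walls hugging it. With this design, every point of $\F$ reachable by $B$ before $\tau$ lies in $\overline{D_{0.5}(s_2)}\cup\{\text{channel}\}$, and every channel point is even closer to $A$'s unavoidable passage point. The argument is then symmetric in $A$ and $B$, which yields the contradiction whichever robot moves first.
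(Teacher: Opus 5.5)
Your construction violates the obstacle-separation hypothesis exactly where you need walls. Every point within distance $\omega=1.5-\delta$ of $s_1$ or $s_2$ must lie in $\W$. The segment from $s_i$ to $(0,-h)$ has length $\sqrt{1+h^2}<\omega$. In fact every point of $D_1((0,-h))$ lies within distance $\sqrt{1+(1+h)^2}<\omega$ of $s_1$ or $s_2$ when $h,\delta$ are small. So $(0,-h)$ is an interior point of $\F$ surrounded by mandatory workspace. You cannot put a neck there, and you cannot have channel walls ``hugging'' the segments from $s_i$ to $(0,-h)$, because those segments lie inside $D_\omega(s_i)$.

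Likewise, the free space near the starts is not ``$D_{0.5-\delta}(s_i)$ plus a thin channel''. Since $\sqrt2<\omega$, every unit disk centred on $\overline{s_1s_2}$ lies in $D_\omega(s_1)\cup D_\omega(s_2)$, so the two pockets are joined through the middle. Hence neither ``$A$ must pass near $(0,-h)$'' nor ``$B$ stays within $0.5-\delta$ of $s_2$ until time $\tau$'' is justified. A telling sign: $A$ at $(0,-h)$ would collide even with $B$ standing still at $s_2$, since they are about $1$ apart. That is only possible because the bottleneck was placed inside the mandatory free disks.

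A second symptom is that your final inequality $\sqrt{1+h^2}+(\omega-1)<2$ holds for every $\omega<2$ once $h$ is small, so the value $1.5$ plays no role. If the argument were sound, it would prove non-existence for all $\omega<2$, which the paper leaves open. Once the bottleneck is placed legally, a single-snapshot argument of your type fails near $\omega=1.5$. For example, put the walls $v_1,v_2$ of a width-$2$ passage directly below $s_1,s_2$ at distance $\omega$, as in the paper. Then $A$ in the passage is at distance $\sqrt{1+\omega^2}\approx1.80$ from $s_2$. Meanwhile $B$, waiting at the far side of its guaranteed free pocket $D_{\omega-1}(s_2)$, is about $2.30>2$ away, so there is no collision at that instant.

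What is missing is an argument that the robots can never reach such a configuration. The paper supplies it by confining the upper room to $D_{4-\varepsilon}(v_1)$, and symmetrically $D_{4-\varepsilon}(v_2)$. Then, when $A$ first touches $v_1$, the robot $B$ still meets $D_2(v_1)$, and since the robots cannot swap sides, $B$ blocks $A$. A room containing the mandatory disk $D_\omega(s_2)$ fits inside $D_{4-\varepsilon}(v_1)$ only if $\sqrt{4+\omega^2}+\omega<4$, i.e.\ $\omega<1.5$. That is where the threshold actually comes from.
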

\begin{proof}
    Consider \Cref{fig:hourglass1_5}. The positions $s_1,s_2$ are at distance exactly $2$ from each other. The dotted disks centered at $s_1$ and $s_2$ are of radius $1.5-\eps$, for some small $\eps>0$. The point $v_1$ is directly below $s_1$ and $v_2$ is directly below $s_2$, so the distance between $v_1$ and $v_2$ is also exactly $2$. The upper ``room'' is constructed using two arcs, the one on the right belongs to the disk of radius $4-\eps$ centered at $v_2$, and the one on the left belongs to a disk of radius $4-\eps$ centered at $v_1$. The positions $t_1,t_2$ lie in the bottom ``room'', below the opening between $v_1$ and $v_2$.
    Observe that the free space is a single connected region, as the narrow passage between $v_1$ and $v_2$ is of width $2$. This implies that there exists a straight-line path within the free space connecting the two rooms of the polygon.
    
    Now, in order for a robot $A$ positioned on $s_1$ to reach the bottom room, the robot $B$ lying on $s_2$ must step aside. Since the distance between $v_1$ and $v_2$ is exactly $2$, when $A$ passes through the narrow passage between $v_1$ and $v_2$, its boundary must touch both $v_1$ and $v_2$ at the same time. Consider the first time when the the boundary of $A$ touches $v_1$. At that time, $A$ must be completely contained in $D_2(v_1)$. However, notice that the upper room is contained in a disk of radius $4-\eps$ around $v_1$, and therefore the robot $B$ must intersect $D_2(v_1)$. 
    If $B$ is located to the right of $A$, it prevents $A$ from reaching the passage. On the other hand, $A$ and be cannot switch sides, i.e., $B$ will always be to the right of $A$, because by symmetry $A$ prevents $B$ from passing above it.
    By symmetry, $B$ cannot reach the passage as well, and therefore there is no solution to this instance for $\eps>0$.

    Note that for $\eps=0$, the robots can rotate in a disk of radius $2$ inside the upper room, so that when $B$ reaches the top point in the disk, $A$ can enter the narrow passage between $v_1$ and $v_2$.
\end{proof}

In the labeled case, we observe that obstacles-separation of at least $2$ is sometimes necessary (for any value of $\rho$).
\begin{figure}[h!]
    \centering
    \includegraphics[page=2,scale=0.7]{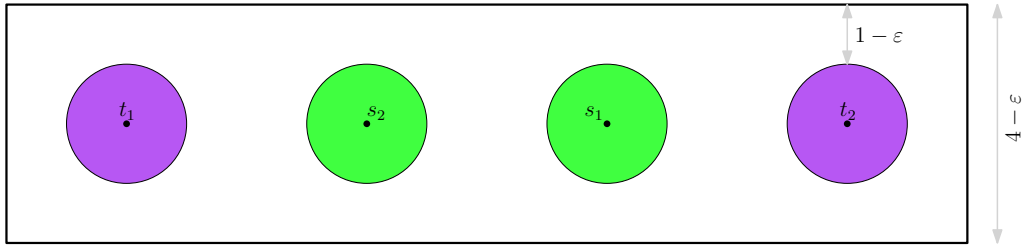}
    \caption{An instance of labeled MRMP with obstacles-separation of $2-\eps$ for which no solution exists.}
    \label{fig:strip_lower_bound}
\end{figure}
\begin{lemma}
    For $\omega<2$ a solution does not always exist in the labeled variant, even if the free space consists of a single connected component containing two start and two target positions, and $\rho$ is arbitrarily large.
\end{lemma}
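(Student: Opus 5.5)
We will prove the lemma by an explicit construction, in the spirit of the unlabeled lower bound above: a strip whose width forces the two robots to stay in a fixed left-to-right order. Let $0<\eps<1$ be small and let the workspace be a long horizontal corridor of width $4-2\eps$, namely $\W=[-L,L]\times[-(2-\eps),\,2-\eps]$ for a large $L$. The free space $\F$ is then the closed rectangle $[-L+1,L-1]\times[-(1-\eps),\,1-\eps]$, which is a single connected component; its height is $2-2\eps<2$. Place $s_1=(-d,0)$ and $s_2=(d,0)$, and swap the labels at the targets: $t_1=(d',0)$ and $t_2=(-d',0)$. The constants $d,d'$ can be taken arbitrarily large but bounded by $L-(2-\eps)$, so that every start and target is at distance at least $2-\eps$ from the boundary. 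This gives $\omega=2-\eps$, and $\rho$ is as large as we like.

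The key step is a geometric invariant. Any two points $p,q\in\F$ with $\|p-q\|\ge 2$ satisfy $|p_x-q_x|\ge\sqrt{4-(2-2\eps)^2}>0$, because their vertical offset is at most $2-2\eps<2$. Hence along any collision-free motion the continuous function $x_1(\tau)-x_2(\tau)$, the difference of the robots' $x$-coordinates, never vanishes. By the intermediate value theorem its sign is therefore constant. Initially it is negative (robot 1 is to the left), while the target configuration requires it to be positive (robot 1 at $t_1$, to the right of robot 2 at $t_2$). This is a contradiction, so no labeled solution exists. The argument is the same as in the previous lemma, and cleaner: the robots simply cannot pass each other in a strip narrower than two diameters.

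There are only two points to check, and neither is a real obstacle. The first is that the obstacle separation is indeed $2-\eps$: the vertical distance from each position to the walls is exactly $2-\eps$, and the horizontal distance is larger. The second is that nothing else is needed, since the argument never depends on $\rho$ and the instance has exactly two starts and two targets in a connected $\F$. As $\eps>0$ is arbitrary, this covers every $\omega<2$, which shows that the bound $\omega=2$ of \Cref{thm:labeled_tight} is tight.
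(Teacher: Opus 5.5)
Your proof is correct and is essentially the paper's argument: a strip narrow enough that two robots can never share an $x$-coordinate, with labels that force them to swap their left-to-right order, so the intermediate value theorem applied to $x_1-x_2$ gives a contradiction. One small detail should be made explicit: a large $\rho$ also requires $|d-d'|$ to be large (e.g.\ $d'=2d$), since $\|s_1-t_2\|=\|s_2-t_1\|=|d-d'|$; also, your width $4-2\eps$ is actually the consistent choice for $\omega=2-\eps$.
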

\begin{proof}
Consider \Cref{fig:strip_lower_bound}. The workspace is a long strip-shaped polygon of width $4-\eps$ for some arbitrarily small $\eps>0$. The start and target points are aligned in the following order along the strip: $t_1,s_2,s_1,t_2$. The distance between any start/target position to the boundary is $2-\eps$.
In order for a robot $A$ positioned on $s_1$ to move to $t_1$, it needs to switch sides with a robot $B$ positioned on $s_2$. However, this is not possible: assume by contradiction that such a switch is possible, then at some point the $x$-coordinates of $A$ and $B$ should be equal, but this is not possible because the width of the strip is $4-\eps$.
\end{proof}

We conclude that the best possible obstacles-separation (without further assumptions) that we can hope for is $1.5$ in the unlabeled case, and $2$ in the labeled case.

\bibliography{refs}

\appendix

\section{Geodesics and blocking positions in the free space}\label{apx:geodesics}
In this section we provide some very useful geometric properties of geodesic paths in the free space, which allow us to set the constraints under which a blocking robot can enter a geodesic path via a straight line segment connecting it to the path.

Notice that when $\W$ is a polygonal environment, the boundary of $\F$ consists of straight line segments and arcs of unit circles (see, e.g., \cite{Wein07}). Moreover, if $\F$ contains a point $x$ such that $D_1(x)$ touches a reflex vertex $v$ of $\W$, then the boundary of $\F$ contains an arc of $D_1(v)$.

A geodesic path $\gamma$ in $\F$ also consists of straight line segments and arcs of unit circles, where the straight line segments are tangent to circular arcs on the boundary of $\F$ that correspond to reflex vertices of $\W$. Thus for any point in the interior of a geodesic path there is a single tangent line. 
Observe that since $\gamma$ is a shortest path, if a point $\gamma(t)$ for some $0\le t\le 1$ lies in the interior of $\F$, then it is locally a straight line segment. In other words, we have the following observation:
\begin{observation}\label{obs:geodesic_paths}
    Let $\gamma: [0, 1]\rightarrow \F$ be a geodesic path, and $0\le t\le 1$. 
    If $\gamma[t-\delta,t+\delta]$ is not a straight line segment for any $\delta>0$, then $\gamma(t)$ is on the boundary of $\F$, and therefor $\D_1(\gamma(t))$ touches the boundary of $\W$ at a point $x$. Moreover, $\overline{\gamma(t)x}$ is orthogonal to the tangent of $\gamma$ at the point $\gamma(t)$.
\end{observation}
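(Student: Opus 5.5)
The plan is to prove the first claim by contradiction using the local shortening property of geodesics, then to get the touching point from the definition of $\F$ as a closed set, and finally to get orthogonality from a tangency argument. Throughout I treat $0<t<1$. At an endpoint the same arguments apply one-sidedly, with $\gamma[t-\delta,t+\delta]$ read as the one-sided subpath.

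\textbf{Step 1: $\gamma(t)\in\partial\F$.} Suppose instead that $\gamma(t)$ lies in the interior of $\F$. Then there is $r>0$ with $D_r(\gamma(t))\subset\F$. By continuity of $\gamma$ there is $\delta>0$ such that $\gamma[t-\delta,t+\delta]\subset D_r(\gamma(t))$. Since the disk is convex, the segment $\overline{\gamma(t-\delta)\gamma(t+\delta)}$ lies in $D_r(\gamma(t))\subset\F$. Replacing the subpath $\gamma[t-\delta,t+\delta]$ by this segment gives a path in $\F$ with the same endpoints. Its length is strictly smaller unless the subpath already is that segment, and a strictly shorter path contradicts $\gamma$ being a shortest path. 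Hence $\gamma[t-\delta,t+\delta]$ is a straight segment, contradicting the hypothesis, so $\gamma(t)\in\partial\F$.

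\textbf{Step 2: the touching point $x$.} By definition $\F=\{y : D_1(y)\cap\O=\emptyset\}$, which is the same as $\{y : \mathrm{dist}(y,\O)\ge 1\}$. The function $y\mapsto\mathrm{dist}(y,\O)$ is continuous and $\O$ is closed. So every boundary point $y$ of $\F$ satisfies $\mathrm{dist}(y,\O)=1$: if the distance were larger than $1$, a neighbourhood of $y$ would lie in $\F$. Apply this to $y=\gamma(t)$. Since $\O$ is closed, the distance is attained at some $x\in\O$ with $\|x-\gamma(t)\|=1$. Every point of the open segment $\overline{\gamma(t)x}$ is within distance less than $1$ of $\gamma(t)$ and so lies in $\W$; hence $x\in\partial\O=\partial\W$. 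Thus $D_1(\gamma(t))$ touches $\partial\W$ at $x$.

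\textbf{Step 3: orthogonality.} Every point of $\gamma\subset\F$ has distance at least $1$ from $x$. So near $t$ the path lies in the closed complement of the open disk $D_1(x)$, and it touches the circle $\partial D_1(x)$ at $\gamma(t)$. Suppose the (unique) tangent line $\ell$ of $\gamma$ at $\gamma(t)$ were not orthogonal to the radius $\overline{x\gamma(t)}$. Then on one side of $t$ the path would move in a direction with a positive component towards $x$. To first order it would then enter $D_1(x)$, contradicting $\gamma\subset\F$. Hence $\ell$ is tangent to the circle $\partial D_1(x)$ at $\gamma(t)$, and $\overline{\gamma(t)x}\perp\ell$.

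\textbf{Main obstacle.} Step 3 relies on the existence of a single tangent direction at $\gamma(t)$, which comes from the structural fact that $\gamma$ consists of segments and unit arcs joined tangentially. Without that fact, a corner of $\gamma$ at $\gamma(t)$ would leave only a cone of directions, and the argument would break. I would therefore state that structural fact explicitly (as in the paragraph preceding the observation) and reduce the tangency to a first-order expansion of $\|\gamma(s)-x\|^2\ge 1$ around $s=t$. That expansion forces the derivative $2\langle\gamma'(t),\gamma(t)-x\rangle$ to vanish, because $t$ is a local minimum of this differentiable function.
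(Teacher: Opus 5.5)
Your proof is correct and follows the same reasoning the paper gives for this observation. Shortcutting by a segment inside a free disk shows that non-straight points lie on $\partial\F$, the definition of $\F$ gives the touching point, and the segment/unit-arc structure of geodesics supplies the unique tangent; your derivative of $\|\gamma(s)-x\|^2$ only makes explicit the orthogonality that the paper reads off from the unit arcs around reflex vertices. One small caveat: at $t\in\{0,1\}$ a one-sided minimum only yields $\langle\gamma'(t),\gamma(t)-x\rangle\ge 0$ (resp.\ $\le 0$), so Step~3 fails there for an arbitrary touching point, and you must take $x$ to be the reflex vertex around which the initial (resp.\ final) arc of $\gamma$ turns; the paper only applies the observation at interior points anyway.
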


\begin{figure}[h!]
    \centering
    \includegraphics[page=2,scale=1]{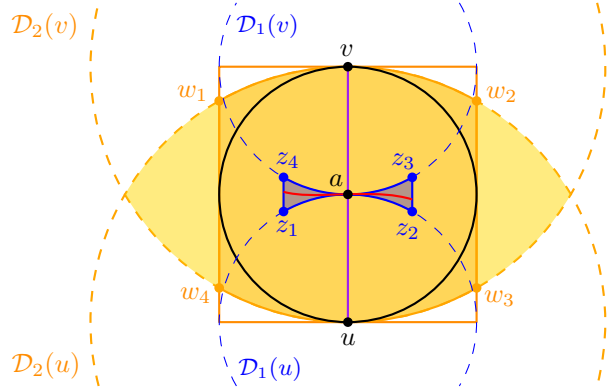}
    \caption{The region $L(a,\overline{uv})$ is in dark yellow. The funnels $F_1(a,\overline{uv}), F_2(a,\overline{uv})$ are in blue.}
    \label{fig:lens}
\end{figure}

\oldparagraph{The cut-out lens and the funnel ports.}
Consider a point $a$, and let $\overline{vu}$ be a diametric chord of $\D_1(a)$ (see \Cref{fig:lens}). Let $\S$ be the smallest square enclosing $\D_1(a)$ with two edges parallel to $\overline{vu}$ (its edges are of length $2$). Let $L(a,\overline{uv}):=\D_2(v)\cap\D_2(u)\cap \S$.
Denote by $w_1,w_2,w_3,w_4$ the vertices of $L(a,\overline{uv})$ such that $w_1,w_2$ are on the boundary of $D_2(u)$ and $w_3,w_4$ are on the boundary of $D_2(v)$. Let $z_1$ (resp. $z_2$) be the intersection point of $D_1(u)$ with the segment $\overline{w_1u}$ (resp. $\overline{w_2u}$). Similarly, let $z_3$ (resp. $z_4$) be the intersection point of $D_1(v)$ with the segment $\overline{w_3v}$ (resp. $\overline{w_4v}$).
Denote by $F_1(a,\overline{uv})$ (resp. $F_2(a,\overline{uv})$) the funnel defined by the arc $\overset{\frown}{az_1}$ of $D_1(u)$ (resp. the arc $\overset{\frown}{az_2}$ of $D_1(u)$), the arc $\overset{\frown}{az_4}$ of $D_1(v)$ (resp. the arc $\overset{\frown}{az_3}$ of $D_1(v)$), and the segment $\overline{z_1z_4}$ (resp. $\overline{z_2z_3}$).

Intuitively, the claim below states that the trace of any path that is contained in the hourglass $F_1(a,\overline{uv})\cup F_2(a,\overline{uv})$ and crosses $\overline{z_1z_4}$ and $\overline{z_3z_2}$ would ``cover'' the entire region $L(a,\overline{uv})$.
It follows from the fact that when rolling a unit disk on the boundary of $L(a,\overline{uv})$, we always cover a segment that connects the upper and lower arcs of the hourglass $F_1(a,\overline{uv})\cup F_2(a,\overline{uv})$.
\begin{restatable}{claim}{pathInFunnels}\label{clm:path_in_funnels}
    Let $\pi$ be a continuous path in $F_1(a,\overline{uv})\cup F_2(a,\overline{uv})$ that starts at a point on $\overline{z_1z_4}$ and ends at a point on $\overline{z_3z_2}$. Then $L(a,\overline{uv})\subseteq D_1(\pi)$.
\end{restatable}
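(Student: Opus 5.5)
The plan is a separation argument. Fix an arbitrary point $q\in L(a,\overline{uv})$. I will show that the open disk $\D_1(q)$ contains a cross-cut of the hourglass $H:=F_1(a,\overline{uv})\cup F_2(a,\overline{uv})$, namely a curve joining its left boundary to its right boundary. Any path $\pi$ in $H$ from $\overline{z_1z_4}$ to $\overline{z_3z_2}$ then has to meet this curve, so some point of $\pi$ is at distance $<1$ from $q$, and $q\in\D_1(\pi)$.

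To make this concrete I would put $a$ at the origin, $u=(-1,0)$, $v=(1,0)$, so that $\S=[-1,1]^2$. Then $L$ is the part of the square lying in the lens $\D_2(u)\cap\D_2(v)$. The left side of $H$ is $\overset{\frown}{az_1}\cup\overset{\frown}{az_2}\subset\partial\D_1(u)$ and the right side is $\overset{\frown}{az_4}\cup\overset{\frown}{az_3}\subset\partial\D_1(v)$. The top of $H$ is $\overline{z_1z_4}$ and the bottom is $\overline{z_2z_3}$.

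\textbf{Step 1: the left and right sides.} Let $p_u:=u+(q-u)/\|q-u\|$ be the point of $\partial\D_1(u)$ nearest to $q$; this is defined because $q\neq u$. Since $\|q-u\|<2$, we get $\|q-p_u\|=|\,\|q-u\|-1\,|<1$, so $p_u\in\D_1(q)$. This holds both when $\|q-u\|\ge 1$ and when $q\in\D_1(u)$. The point to check is that $p_u$ lies on the arc $\overset{\frown}{az_1}\cup\overset{\frown}{az_2}$ and not on the far side of the circle. Seen from $u$, the direction of $q$ lies in the angular cone spanned by $\overline{uw_1}$ and $\overline{uw_2}$, because $L$ is contained in the part of $\S$ cut out by $\D_2(u)$ and $w_1,w_2$ are its extreme points on $\partial\D_2(u)$. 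By the definition of $z_1,z_2$ as the points where $\overline{uw_1},\overline{uw_2}$ cross $\partial\D_1(u)$, the radial projection $p_u$ therefore lands on the required arc. The same argument gives a point $p_v\in\D_1(q)$ on the right side of $H$.

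\textbf{Step 2: the cross-cut.} The disk $\D_1(q)$ is convex, so $\sigma:=\overline{p_up_v}\subset\D_1(q)$. The segment $\sigma$ may pass outside $H$, since it can cross the waist through $\D_1(u)$ or $\D_1(v)$. To handle this I would consider $H\setminus\D_1(q)$ directly. I would argue that it has no component meeting both $\overline{z_1z_4}$ and $\overline{z_2z_3}$: such a component would contain a curve from the top to the bottom of $H$, and closing that curve up through the exterior of $H$ would give a closed curve separating $p_u$ from $p_v$. That would contradict $p_u,p_v$ being joined by $\sigma$ inside $\D_1(q)$, which this curve avoids. I also need $\D_1(q)$ not to meet the top or bottom segments themselves in a way that lets $\pi$ start already inside it. That case is harmless, because it only makes the conclusion easier.

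\textbf{Main obstacle.} I expect the difficulty to be the topology in Step 2: turning ``$\D_1(q)$ touches both sides of $H$'' into ``every top-to-bottom path in $H$ meets $\D_1(q)$'' without a messy Jordan-curve argument. The waist of $H$ is a single point $a$, so $H$ is not a nice disk. My first attempt would be to split into the two funnels. Since $\pi$ must pass through $a$, it suffices to show that $\D_1(q)$ contains a curve separating $a$ from $\overline{z_1z_4}$ in $F_1$, or one separating $a$ from $\overline{z_2z_3}$ in $F_2$, depending on whether $q$ lies above or below the line $uv$. Each funnel is a topological disk with four boundary pieces, so there the separation is a standard fact about cross-cuts of a disk. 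In that case the points $p_u,p_v$ should be chosen on the arcs of that funnel, which is what the angular check in Step 1 provides.
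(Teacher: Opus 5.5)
\textbf{There is a genuine gap in Step 1.} You claim that, seen from $u$, every $q\in L(a,\overline{uv})$ lies in the cone spanned by $\overline{uw_1}$ and $\overline{uw_2}$. That is false. The cone is only the $\pm30^\circ$ cone around $\overline{uv}$. You are conflating the arc of $\partial D_2(u)$ on $\partial L$, whose endpoints are $w_1,w_2$, with $L$ itself. But $L$ also extends along the top and bottom edges of $\S$ and back toward $u$ along $\partial D_2(v)$.

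Work in your coordinates, with $\overline{z_1z_4}$ the upper segment, so $z_1=(\sqrt3/2-1,\tfrac12)$ and $z_4=(1-\sqrt3/2,\tfrac12)$. Take $q=(-0.9,0.3)$. It lies in $\S$, and $\|q-u\|^2=0.1$, $\|q-v\|^2=3.7$, so $q\in L$. However, $q-u=(0.1,0.3)$ makes an angle of about $72^\circ$ with $\overline{uv}$. So $p_u\approx(-0.68,0.95)$ lies on $\partial D_1(u)$ far beyond $z_1$, not on the side of $H$. Symmetrically, $p_v$ fails for e.g. $(0.9,0.3)$. Both projections land on the funnel arcs only for $q$ in the rhombus with vertices $u$, $v$, $(0,\pm1/\sqrt3)$.

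Nor is this your ``harmless'' case. For $q=(-0.9,0.3)$ we have $\|q-z_4\|\approx1.05>1$, so $D_1(q)$ does not contain the top segment.

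\textbf{What is missing is a case analysis that uses the corners $z_i$ of the funnels when a radial projection fails.} For $q=(-0.9,0.3)$ the correct pair is $z_1\in D_1(q)$ (distance $\approx0.79$) together with the radial projection from $v$. That projection does land on $\overset{\frown}{az_4}$, at angle about $171^\circ$ as seen from $v$.

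This is how the paper proceeds. Points of $D_1(a)$ are handled since $a\in\pi$. It covers the rest of $L$ by two quadrilaterals and four curved triangles:
\begin{itemize}
\item $Q_{1,4}=w_1z_4z_1w_4\subseteq D_1(z_1)\cap D_1(z_4)$, and symmetrically $Q_{2,3}$. Here $D_1(q)$ contains a whole end segment, so an endpoint of $\pi$ is already within distance $1$.
\item Four curved triangles $T_i$, each lying in the $30^\circ$ cone of one center. There the radial projection $p'$ onto that circle does land on the funnel arc, while the opposite corner (e.g. $z_4$ for $T_1$) is within distance $1$. This gives the cross-cut $\overline{z_4p'}\subseteq D_1(q)$.
\end{itemize}

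\textbf{On Step 2.} Your funnel-splitting idea from the last paragraph is the right way to make the topology rigorous, and it is sounder than your first version. There, the closing arc through the exterior of $H$ may meet $\sigma$, so no contradiction follows. In the funnel version, $\pi$ must pass through $a$. The part of $\overline{p_up_v}$ lying outside both open disks is then a genuine cross-cut of that funnel, from one arc to the other. The degenerate case is when the two chords it cuts from the disks meet, which can only happen at $a$, and then $a\in D_1(q)$ already.

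However, this argument needs, for every $q$, two points of $D_1(q)$ on the two arcs of the same funnel. Step 1 as written does not supply them.
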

\begin{proof} 
    We show that for any point in $L(a,\overline{uv})$ there exists a point on $\pi$ within distance $1$ from it.
    First, notice that since $F_1(a,\overline{uv})\cap F_2(a,\overline{uv})=\{a\}$, the path $\pi$ must include $a$, and thus the claim it true for any point $p\in D_1(a)$. We therefore need to prove the claim for points in $L(a,\overline{uv})\setminus D_1(a)$.

    Assume w.l.o.g. that $a=(0,0)$, $v=(1,0)$, $u=(-1,0)$ (see \Cref{fig:coordinates}). Then the vertices of $L=L(a,\overline{uv})$ are then $w_1=, w_2=(1,-1+\sqrt3), w_3=(1,-1-\sqrt3), w_4=(-1,-1-\sqrt3)$.
    One can also verify that $z_1=(-\frac12,-1+\frac{\sqrt{3}}{2}), z_2=(\frac12,-1+\frac{\sqrt{3}}{2}), z_3=(\frac12,1-\frac{\sqrt{3}}{2}), z_4=(-\frac12,1-\frac{\sqrt{3}}{2})$.

    \begin{figure}
        \centering
        \includegraphics[scale=1.5,page=7]{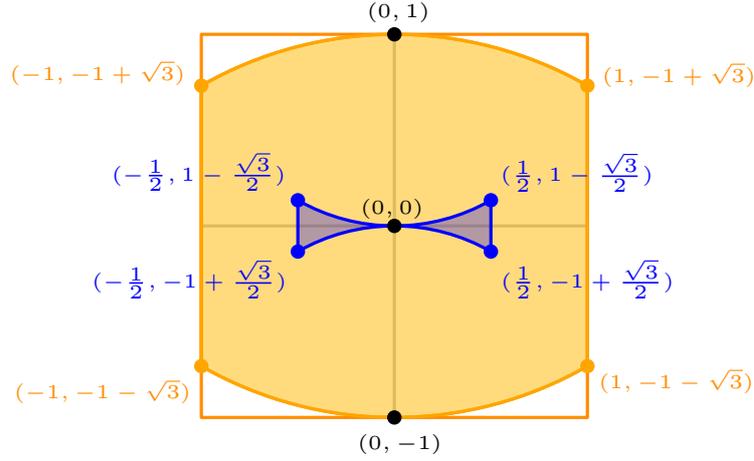}
        \caption{The coordinates assuming that $a=(0,0)$.}
        \label{fig:coordinates}
    \end{figure}

    Because $w_1$ is on the boundary of $D_2(u)$ we have $\|w_1-u\|=2$, and because $z_1$ is on the boundary of $D_1(u)$ we have $\|z_1-u\|=1$. Therefore $\|w_1-z_1\|=1$, so $w_1\in D_1(z_1)$. 
    By symmetric arguments, we have $w_i\in D_1(z_i)$ for every $1\le i\le 4$. Moreover, one can verify that $w_1\in D_1(z_4)$, $w_2\in D_1(z_3)$, $w_3\in D_1(z_2)$ and $w_4\in D_1(z_1)$.
    
    Consider the quadrilateral $Q_{1,4}=w_1z_4z_1w_4$, and notice that $Q_{1,4}\subseteq D_1(z_1)\cap D_1(z_4)$. Thus for any point $p\in Q_{1,4}$ we have $\overline{z_1z_4}\in D_1(p)$. Similarly, every point $p$ in the quadrilateral $Q_{2,3}=w_2z_3z_2w_3$ we have $\overline{z_2z_3}\in D_1(p)$. We conclude that every point in $Q_{1,4}\cup Q_{2,3}$ contains a point in $\pi$ (either the starting or the ending point).

    Finally, consider the region $T_1$ enclosed by the segments $\overline{w_1z_1},\overline{vz_1}$ and the arc $\overset{\frown}{w_1v}$. Notice that $D_1(z_4)$ contains $T_1$. 
    Now, because $T_1\subseteq D_2(u)$, for every point $p\in T_1$ we have $D_1(p)\cap D_1(u)\neq\emptyset$. Moreover, since $T_1$ is between the radii $\overline{w_1u}$ and $\overline{vu}$ of $D_2(u)$, the intersection $D_1(p)\cap D_1(u)$ must contain a point on the arc $\overset{\frown}{z_1a}$ of $D_1(u)$. We conclude that for every point in $T_1$ there is a segment $\overline{z_4p'}$ such that $p'$ is on $\overset{\frown}{z_1a}$, and $\overline{z_4p'}\subseteq D_1(p)$. Since $\pi$ is a continuous path in $F_1(a)\cup F_2(a)$ from a point on $\overline{z_1z_4}$ to a point on $\overline{z_3z_2}$, it must cross the segment $\overline{z_4p'}$, and therefore there is a point on $\pi$ within distance $1$ from $p$. By symmetric arguments, the claim is also true for the regions $T_2,T_3,T_4$ that correspond to $w_2,w_3,w_4$, respectively.

    The claim now follows because $L(a,\overline{uv})=D_1(a)\cup Q_{1,4}\cup Q_{2,3}\cup \bigcup_{i=1}^4 T_i$.
\end{proof}

Next, we use \Cref{clm:path_in_funnels} to show that given a point $a$ on a geodesic path $\gamma$, which is far enough from being an endpoint of $\gamma$, there is always a region $L(a,\overline{uv})$ that is contained in the trace of $\gamma$.

\begin{restatable}{claim}{theEyeIsInTheTrace}\label{clm:the_eye_is_in_the_trace}
    Let $\gamma: [0, 1]\rightarrow \F$ be a geodesic path, and let $a=\gamma(t)$ for some $0\le t\le 1$.
    Denote by $\overline{vu}$ the diametric chord of $\D_1(a)$ perpendicular to the tangent $\ell$ of $\gamma$ through $a$. If $\gamma(0),\gamma(1)\notin F_1(a,\overline{uv})\cup F_2(a,\overline{uv})$, then $L(a,\overline{uv})\subseteq\D_1(\gamma)$.
\end{restatable}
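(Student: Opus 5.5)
The plan is to reduce the claim to \Cref{clm:path_in_funnels}. Concretely, I will show that the portion of $\gamma$ near $a$ is a continuous path inside the hourglass $F_1(a,\overline{uv})\cup F_2(a,\overline{uv})$ that runs from $\overline{z_1z_4}$ to $\overline{z_3z_2}$. Since $\D_1$ of a subpath is contained in $\D_1(\gamma)$, the claim then follows directly.

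Fix coordinates as in \Cref{fig:coordinates}: $a=(0,0)$, $u=(-1,0)$, $v=(1,0)$, with the tangent $\ell$ vertical. The first step is local. Either $\gamma$ is a straight segment near $a$, or, by \Cref{obs:geodesic_paths}, $\D_1(a)$ touches $\partial\W$ at a point $x$ with $\overline{ax}\perp\ell$, i.e.\ $x\in\{u,v\}$. In the latter case $\gamma$ is locally an arc of a unit circle around a reflex vertex, which lies in $\D_1(u)$ or $\D_1(v)$'s complement appropriately. In both cases $\gamma$ leaves $a$ in both directions into the interiors of $F_1$ and $F_2$, respectively, tangent to $\ell$.

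The second step is global. I will show that the subpath $\gamma^-$ from $a$ backward to $\gamma(0)$ cannot leave $F_1\cup F_2$ through the curved sides, i.e.\ through the arcs of $\partial\D_1(u)$ and $\partial\D_1(v)$. Since $\gamma(0)\notin F_1\cup F_2$, it must therefore exit through $\overline{z_1z_4}$ or $\overline{z_2z_3}$. The same argument applies to $\gamma^+$ toward $\gamma(1)$. The subpath of $\gamma$ between the last crossing of one port and the first crossing of the other is then the desired $\pi$.

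To rule out exits through the arcs, I use the shortest-path property. Suppose $\gamma^+$ first leaves the funnel at a point $q$ on the arc $\overset{\frown}{az_1}$ of $\partial\D_1(u)$, say. Then $\gamma[a,q]$ stays in the funnel, is a geodesic of $\F$, and starts tangent to $\ell$ at $a$. The geodesic has curvature at most $1$ (it consists of segments and unit arcs that bend around obstacles), so it turns at most like a unit circle. A path starting at $a$ in direction $\ell$ that turns with curvature at most $1$ stays outside the open unit disks tangent to $\ell$ at $a$, which are exactly $\D_1(u)$ and $\D_1(v)$. It therefore cannot reach $\partial\D_1(u)$ before reaching the port segment: it could at best meet the arc tangentially, which would force it to coincide with the arc all the way to $z_1$, an endpoint of the port.

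There is a subtlety in the direction of turning. Arcs of $\gamma$ bend toward the obstacle, so one direction of turning is prevented by the convexity argument above. The other direction requires care, since $\gamma$ might turn the other way around a reflex vertex. Even then its curvature is at most $1$, so the same comparison with unit circles still applies. I will formalize this with the standard lemma that a curve of curvature at most $1$, tangent to $\ell$ at $a$, lies outside both osculating unit disks within arc length $\pi$. The port segments are reached within that length because the funnel is contained in the square $\S$.

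I expect this comparison, and in particular checking that it covers the whole funnel up to the ports $\overline{z_1z_4}$ and $\overline{z_2z_3}$, to be the main obstacle.
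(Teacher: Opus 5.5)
Your proposal is correct, but it handles the key step differently from the paper. Both arguments reduce to \Cref{clm:path_in_funnels} by exhibiting a subpath of $\gamma$ through $a$ that stays in $F_1\cup F_2$ and joins the two ports. Where they differ is in showing that $\gamma$ cannot escape a funnel through its curved sides.

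The paper argues synthetically from the combinatorial structure of the geodesic. It follows $\gamma$ from $a$ along $\ell$ to the first point $a_1$ where $\gamma$ starts wrapping around an obstacle point $b_1$. It then follows $\gamma$ to a possible inflection point $a_2$, where $\gamma$ wraps around an obstacle point $b_2$ on the other side. Because $\gamma$ never enters $D_1(b_1)$ or $D_1(b_2)$, this traps $\gamma$ in a region inside the wider funnel $F_2'$, which is cut off by the line at distance $1$ along $\ell$.

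You instead use only that $\gamma$ is a $C^1$ curve of curvature at most $1$, and compare it with the two unit disks $D_1(u)$, $D_1(v)$ tangent to $\ell$ at $a$. This gives a uniform argument: you need not track how the geodesic bends or which obstacle points it touches. It also applies to any domain whose geodesics have curvature at most $1$, for example the workspaces with removed disks of radius $1-\eps$. The paper's route stays purely synthetic, and it yields the slightly stronger confinement inside $F_2'$.

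The one step to tighten is why the port is actually reached. Containment of the funnel in $\S$ does not by itself stop the curve from lingering inside it. You also do not need the comparison lemma up to length $\pi$; a short initial piece suffices, as follows.

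\emph{Setup.} Parametrize $\gamma$ by arc length $s$ starting at $a$. Let $\theta(s)$ be the angle of the tangent measured from the direction of $\ell$, positive towards $v$. The curvature bound gives $|\theta(s)|\le s$.

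\emph{Staying outside the disks.} For $s\le\pi/2$ you get $x(s)\le 1-\cos s$ and $y(s)\ge \sin s$, with $y$ strictly increasing. Hence $\|\gamma(s)-v\|^2-1=x^2-2x+y^2\ge -\sin^2 s+\sin^2 s=0$. The bound $x^2-2x\ge-\sin^2 s$ is trivial for $x\le 0$, and for $0<x\le 1-\cos s$ it follows because $x^2-2x$ is decreasing on $[0,1]$. The same argument applies symmetrically to $u$.

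\emph{Reaching the port.} Within arc length $\pi/6$ the curve reaches $y=1/2$, the line of the port $\overline{z_1z_4}$. Until then it lies in the slab $0\le y\le 1/2$ and outside both open disks, so it lies in the connected component of that set containing $a$. That component is exactly the closed funnel.

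Touching the arcs is harmless, since the funnels are closed sets. So your remark that a tangential touch would force $\gamma$ to coincide with the arc is not needed.
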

\begin{proof}
    We show that if $\gamma(0),\gamma(1)\notin F_1(a,\overline{uv})\cup F_2(a,\overline{uv})$, then $\gamma$ has a continuous subpath $\pi$ that starts at a point on $\overline{z_1z_4}$ and ends at a point on $\overline{z_2z_3}$, and $\pi$ is completely contained in $F_1(a,\overline{uv})\cup F_2(a,\overline{uv})$. Then by applying \Cref{clm:path_in_funnels} we get that $L(a,\overline{uv})\subseteq D_1(\pi)\subseteq D_1(\gamma)$.
    
    Assume w.l.o.g., as in the proof of \Cref{clm:path_in_funnels}, that $a=(0,0)$, $v=(0,1)$, and $u=(0,-1)$. Also assume w.l.o.g. that $\gamma$ crosses $\overline{uv}$ from left to right.

    \begin{figure}[h!]
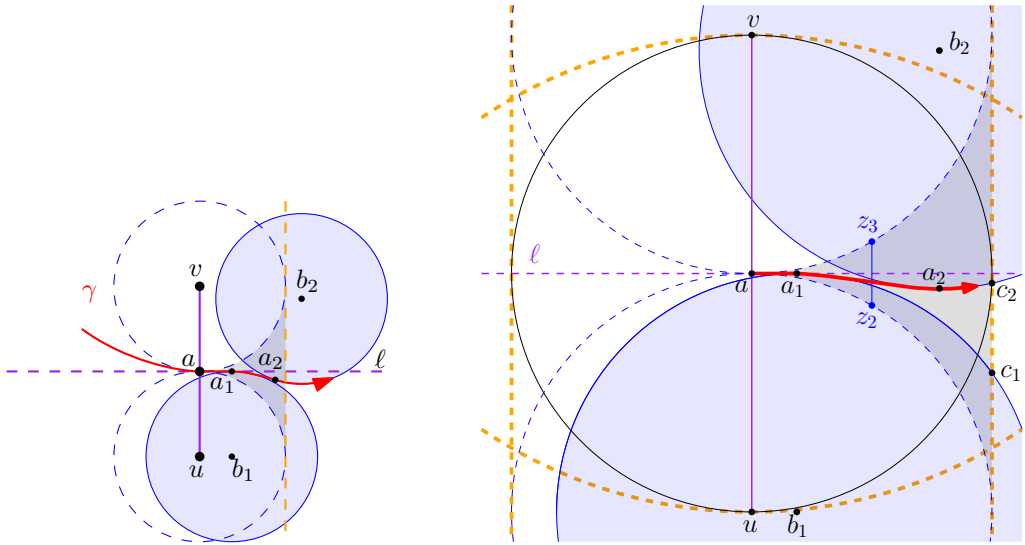

        \centering
        \includegraphics[page=5,scale=1]{figures/geometric_lemmas.pdf}\hspace{1cm}
        \includegraphics[page=6,scale=0.7]{figures/geometric_lemmas.pdf}
        \caption{The funnel $F'_2$ is shaded in gray.}
        \label{fig:path_in_lens}
    \end{figure}

    Consider the funnel $F'_2$ enclosed by the line $x=1$ and the disks $D_1(u)$ and $D_1(v)$ (see \Cref{fig:path_in_lens}).
    We will show that $\gamma[t,1]$ does not cross the arcs of $F'_2$, and because $\gamma(1)\notin F_2(a,\overline{uv})$ this would imply that there exists a value $t\le t'\le 1$ such that $\gamma[t,t']$ is contained in $F_2(a,\overline{uv})$ and $\gamma(t')\in \overline{z_2z_3}$. The other part of the subpath of $\gamma$ in $F_1(a,\overline{uv})$ then follow by symmetric arguments. 

    Let $t_1\ge t$ be the largest value for which $\gamma[t,t_1]$ is a straight line segment. Because $\overline{uv}$ is perpendicular to $\ell$, and $\ell$ is also tangent to both $D_1(u)$ and $D_1(v)$, we get that $\gamma[t,t_1]\subseteq\ell$. If $t_1=1$, then clearly $\gamma[t,1]$ crosses $\overline{z_2z_3}$ and we are done. Otherwise, by \Cref{obs:geodesic_paths} the point $a_1=\gamma(t_1)$ must lie on the boundary of $\F$, and $\D_1(a_1)$ touches the boundary of $\W$ at a point $b_1$ which lies either exactly below or exactly above $a_1$.
    Assume w.l.o.g. that $b_1$ lies exactly below $\ell$. Because $b_1$ is on the boundary of $\W$, the path $\gamma$ does not intersect $\D_1(b_1)$, and the subpath of $\gamma$ after $a_1$ has to lie above the arc $\overset{\frown}{a_1c_1}$, where $c_1$ is the intersection point of the line $x=1$ and the boundary of $D_1(b_1)$.
    
    Since $\gamma$ is a geodesic path, the point $a_1$ has to be a tangency point of the line through $\overline{aa_1}$ (i.e., $\ell$) with the boundary of $\F$. Therefore, there exists some $\eps>0$ for which $\gamma[t_1,t_1+\eps]$ is monotonically decreasing. Now let $t_2\ge t_1$ be the largest value for which $\gamma[t_1,t_2]$ is monotonically decreasing. If $t_2=1$, then again we get that $\gamma[t,1]$ crosses $\overline{z_2z_3}$ and we are done. Otherwise, by \Cref{obs:geodesic_paths} the point $a_2=\gamma(t_2)$ must lie on the boundary of $\F$, and $\D_1(a_2)$ touches the boundary of $\W$ at a point $b_2$. Moreover, $b_2$ has to lie above $\ell$, and because $b_2$ is on the boundary of $\W$, the path $\gamma$ does not intersect $\D_1(b_2)$. We get that the subpath of $\gamma$ after $a_2$ has to lie below the arc $\overset{\frown}{a_2c_2}$, where $c_2$ is the intersection point of the line $x=1$ and the boundary of $D_1(b_2)$.

    Notice that $D_1(b_1)$ and $D_1(b_2)$ do not intersect, and therefore $c_2$ must lie above $c_1$. Consider the area enclosed by $\overset{\frown}{a_1c_1}$, $\overline{c_1c_2}$, $\overset{\frown}{a_2c_2}$, and $\gamma[t_1,t_2]$. This region contains the subpath of $\gamma$ from $a_1$ and until it crosses $\overline{c_1c_2}$. Since this region is also contained in the larger funnel $F'_2$, we get that $\gamma$ can only cross $F_2(a,\overline{uv})$ at some point on $\overline{z_2z_3}$.
\end{proof}

\oldparagraph{The magic radii.} Denote $\sigma_\eps=\sqrt{(3-\sqrt{3}-\eps)^2+1}$. This formula describes the obstacles-separation that is sufficient in order for a blocking robot to enter the path that it is blocking.

\begin{figure}[h!]
    \centering
    \includegraphics[page=1,scale=1]{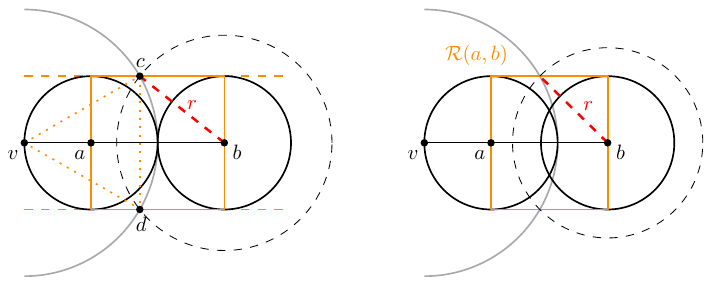}
    \caption{The rectangle $\R(a,b)$ is in orange.}
    \label{fig:free_rectangle}
\end{figure}

For two points $a,b$ such that $\D_1(a)\cap\D_1(b)\neq\emptyset$, denote by $\R(a,b)$ the rectangle whose edges are the two diametrical chords of $\D_1(a)$ and $\D_1(b)$ perpendicular to $\overline{ab}$, and the two line segments connecting them on the joint tangents to $\D_1(a)$ and $\D_1(b)$ (See \Cref{fig:free_rectangle}).

\begin{restatable}{claim}{freeRectangle}\label{clm:free_rectangle}
      Let $a,b$ be $\eps$-overlapping, and let $v$ be the intersection point of the line through $a$ and $b$ with the boundary of $\D_1(a)$, which is farther from $b$. Then $\R(a,b)\subset \D_2(v)\cup\D_{\sigma_\eps}(b)$.
\end{restatable}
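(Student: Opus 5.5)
Place coordinates so that $a=(0,0)$ and $b=(0,-(2-\eps))$, so $\|a-b\|=2-\eps$. Then $v=(0,1)$ is the point of $\partial\D_1(a)$ on the line $ab$ farther from $b$. The rectangle $\R(a,b)$ is $[-1,1]\times[-(2-\eps),0]$: its horizontal sides are the diametric chords of $\D_1(a)$ and $\D_1(b)$ perpendicular to $\overline{ab}$, and its vertical sides lie on the common tangents $x=\pm1$. The plan is to split $\R(a,b)$ by the horizontal line through the corner points $(\pm1,1-\sqrt3)$. These are the points where $\partial\D_2(v)$ meets the lines $x=\pm1$, because $\|(\pm1,1-\sqrt3)-(0,1)\|^2=1+3=4$. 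The upper part should lie in $\D_2(v)$ and the lower part in $\D_{\sigma_\eps}(b)$.

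\textbf{Upper part.} Consider $\{(x,y)\in\R(a,b): y\ge 1-\sqrt3\}$, assuming $1-\sqrt3\ge -(2-\eps)$; otherwise the whole rectangle is handled by this step. For a point in this part we have $|x|\le1$ and $1-y\le\sqrt3$, so its squared distance to $v$ is $x^2+(1-y)^2\le 1+3=4$. Equality holds only at the two corner points $(\pm1,1-\sqrt3)$. Since the disks are open, these corners need care; they are treated in the next step.

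\textbf{Lower part.} Consider $\{(x,y)\in\R(a,b): -(2-\eps)\le y\le 1-\sqrt3\}$. Its squared distance to $b$ is $x^2+(y+2-\eps)^2\le 1+(1-\sqrt3+2-\eps)^2=1+(3-\sqrt3-\eps)^2=\sigma_\eps^2$. The maximum is attained exactly at the two corners $(\pm1,1-\sqrt3)$. Every other point of this part is at distance strictly less than $\sigma_\eps$ from $b$, or strictly less than $2$ from $v$.

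\textbf{Main obstacle: the boundary corners.} The union of the two open disks misses the corners $(\pm1,1-\sqrt3)$, since they sit exactly on both boundary circles. I would handle this in one of two ways. The first is to read the containment up to the boundary, i.e.\ in the closures $\overline{\D_2(v)}\cup\overline{\D_{\sigma_\eps}(b)}$; this is what the application needs, because obstacle distance $\ge\omega\ge\sigma_\eps$ is non-strict. The second is to note that the claim is used only to conclude that no obstacle point lies in $\R(a,b)$, which the closed version already yields.

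I would also check the degenerate range where $1-\sqrt3<-(2-\eps)$, i.e.\ $\eps>3-\sqrt3$. Then $\eps>1$, which lies outside the paper's parameter range, so the upper-part argument alone covers the whole rectangle in any case. A figure like \Cref{fig:free_rectangle} would show that the split line is exactly the chord through the orange points, matching the weakly-monotone lower-bound construction in \Cref{sec:weakly_monotone_limitations}.
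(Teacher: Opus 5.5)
Your proof is correct and follows the paper's argument, only in rotated coordinates. The paper likewise cuts $\R(a,b)$ along the chord through the two points where $\partial\D_2(v)$ meets the common tangents, covers the $a$-side by $\D_2(v)$ and the $b$-side by $\D_{\sigma_\eps}(b)$, and checks that those two points are at distance exactly $\sigma_\eps$ from $b$.

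You are right that these two points lie on both open circles, which the paper glosses over. However, the reason this is harmless is not the non-strict obstacle bound, since that only makes the open disk obstacle-free. It is harmless because the points lie on the sides of $\R(a,b)$ that sit on the common tangents, hence outside the open stadium $\bigcup_{x\in\overline{ab}}\D_1(x)$, which is all that matters for the segment conclusion in \Cref{lem:boundary_seperation_rho}.
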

\begin{proof}
    Assume w.l.o.g. that $v,a,b$ are on the $x$-axis, and that $v=(0,0)$, see \Cref{fig:free_rectangle}. Then $a=(1,0)$, and $b=(1+\|b-a\|,0)$.
    
    Consider the two tangents to $\D_1(a)$ that are parallel to $\overline{ab}$, one above and one below $\overline{ab}$, and denote by $c,d$ the top and bottom intersection points of these tangents with $\D_2(v)$, respectively. Clearly, $D_2(v)$ covers the part of $R(a,b)$ to the left of $\overline{cd}$, and we now show that $\D_r(b)$ covers the part of $R(a,b)$ to the right of $\overline{cd}$.
    
    The distance between the tangents is $2$, and therefore $\|c-d\|=2$. Since $\|v-c\|=\|v-d\|=2$, the triangle $\triangle vcd$ is an equilateral triangle, and therefore $c=(\sqrt{3},1)$ and $d=(\sqrt{3},-1)$. We have 
    \begin{align*}
        \|b-c\|=\|b-d\|&=\sqrt{(1+\|b-a\|-\sqrt{3})^2+(0-1)^2}\\
        &=\sqrt{(1-\sqrt{3}+\|b-a\|)^2+1}=\sqrt{(3-\sqrt{3}-(2-\|b-a\|))^2+1}. 
    \end{align*}
    The claim follows as $\eps=\max\{0,2-\|b-a\|\}$.
\end{proof}

Note that if for two points $a,b$ we have $\D_1(a)\cap\D_1(b)\neq\emptyset$, then $\|b-a\|<2$, and since $\sigma_\eps$ is monotonically decreasing when $\eps$ is increasing, by \Cref{clm:free_rectangle} we get that $R(a,b)$ is contained in $\D_2(v)\cap\D_{\sigma_0}(b)$ (where $\sigma_0=\sqrt{13-6\sqrt{3}}$). This is where the bound conjectured by Solovey et al.~\cite{SYZH15} is coming from, however, note that it requires the blocking robot to enter the path through a locally closest point (and not as defined in~\cite{SYZH15}), as we will clarify later.

We are now ready to set the constraints which allow a blocking robot to enter the path that it is blocking. We have one constraint on the obstacles-separation, and another on the robots-separation.
\constraintBoundary*
\constraintRobots*

\begin{figure}[h!]
    \centering
    \includegraphics[page=4,scale=1]{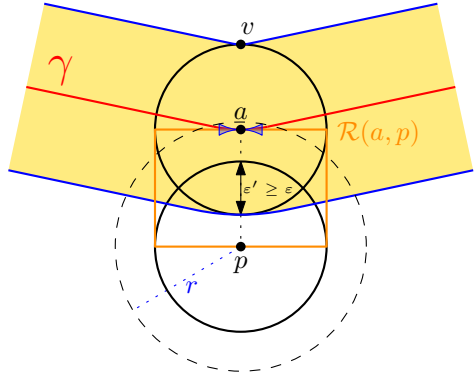}
    \caption{$p$ is $\eps$-blocking $\gamma$, $a$ is locally closest to $p$.}
    \label{fig:the_free_rectangle_is_free}
\end{figure}

Let $\gamma: [0, 1]\rightarrow \F$ be a geodesic path. We say that a point $a$ on $\gamma$ is \dfn{locally closest} to $p$ if there exists $\delta>0$ such that $D_{\|p-a\|}(p)\cap \gamma[a-\delta,a+\delta]=a$. 

\boundarySeparationRho*
\begin{proof} 
    First, because $\eps< \eps'=2-\|p-a\|$, we have $\sigma_\eps\ge \sigma_{\eps'}$, and therefore $\D_{\sigma_{\eps'}}(p)\subseteq \D_{\sigma_\eps}(p)$. 
    Let $v,u$ be the intersection points of the line through $p$ and $a$ with the boundary of $\D_1(a)$, and assume that $v$ is the point which is farther from $p$, as in \Cref{clm:free_rectangle} (see \Cref{fig:the_free_rectangle_is_free}). Then by \Cref{clm:free_rectangle}, $\R(a,p)$ is contained in $\D_2(v)\cup \D_{\sigma_\eps}(p)$. Because $\D_{\sigma_\eps}(p)\subset\W$ (\Cref{con:boundary_seperation_to_clear_R}), we have $\R(a,p)\cap \D_{\sigma_\eps}(p)\subset \W$.

    Now, since $a$ is a point on $\gamma$ which is locally closest to $p$, 
    the segment $\overline{uv}$ is the diametric chord of $\D_1(a)$ perpendicular to the tangent $\ell$ of $\gamma$ through $a$. Notice that by definition, we have $\R(a,p)\cap \D_2(v)\subset L(a,\overline{uv})$.
    To apply \Cref{clm:the_eye_is_in_the_trace} and get that $L(a,\overline{uv})\subset D_1(\gamma)\subset \W$, we first need to show that $\gamma(0),\gamma(1)\notin F_1(a,\overline{uv})\cup F_2(a,\overline{uv})$. This follows from \Cref{con:robot_seperation_to_ensure_R}: denote $r= \sqrt{\frac{1}{2}^{2} + \left(3-\frac{\sqrt{3}}{2}-\eps\right)^2}$. Because $\gamma(0),\gamma(1)\in S\cup T$, we get that $\|p-\gamma(1)\|\ge r$ and $\|p-\gamma(0)\|\ge r$, and thus $\gamma(0),\gamma(1)\notin S\cup D_r(p)$. One can easily verify that $D_r(p)$ contains the funnels $F_1(a,\overline{uv})\cup F_2(a,\overline{uv})$ (in fact, this is how we chose $r$, as the distance between $p$ and e.g. $z_3$, see \Cref{fig:coordinates}), and therefore $\gamma(0),\gamma(1)\notin F_1(a,\overline{uv})\cup F_2(a,\overline{uv})$.
    
    We conclude that $\R(a,p)\cup D_1(a)\cup D_1(p)\subset \W$, and therefore $\overline{pa}\subset \F$.
\end{proof}

\end{document}